\documentclass{article}

\usepackage{microtype}
\usepackage{graphicx}
\usepackage{subcaption}
\usepackage{booktabs} % for professional tables

\usepackage{hyperref}

\usepackage[accepted]{icml2026}

\usepackage{amsmath}
\usepackage{amssymb}
\usepackage{mathtools}
\usepackage{amsthm}

\usepackage[capitalize,noabbrev]{cleveref}

\usepackage{amsmath,amsfonts}

\def\eqref#1{equation~\ref{#1}}
\def\1{\bm{1}}

\DeclareMathAlphabet{\mathsfit}{\encodingdefault}{\sfdefault}{m}{sl}
\SetMathAlphabet{\mathsfit}{bold}{\encodingdefault}{\sfdefault}{bx}{n}

\def\gS{{\mathcal{S}}}

\def\gX{{\mathcal{X}}}

\newcommand{\R}{\mathbb{R}}

\DeclareMathOperator*{\argmax}{arg\,max}

\newcommand{\aset}[1]{\mathcal{#1}}
\newcommand{\game}{\mathscr{G}}    % for a game
\newcommand{\pures}{\aset{A}}      % for the players' pure action sets
\newcommand{\N}{\mathbb{N}}
\newcommand{\Prob}{\mathbb{P}}

\newcommand{\MB}{\textnormal{MB}}
\DeclareMathOperator{\supp}{supp}

\newcommand{\J}{\mathbf{J}}        % for the Jacobian matrix
\newcommand{\SJ}{\mathbf{SJ}}      % for the symmetrized Jacobian matrix

\newcommand{\llb}{\llbracket}
\newcommand{\rrb}{\rrbracket}

\DeclareMathOperator{\rank}{rank}
\usepackage{multirow}
\usepackage{array}
\usepackage{comment}
\usepackage{enumitem}
\RequirePackage{thm-restate}
\declaretheorem[name=Theorem, numberwithin=section]{theorem}
\declaretheorem[name=Proposition, numberlike=theorem]{proposition}
\declaretheorem[name=Lemma, numberlike=theorem]{lemma}
\declaretheorem[name=Definition, numberlike=theorem]{definition}
\declaretheorem[name=Corollary, numberlike=theorem]{corollary}
\declaretheorem[name=Remark, numberlike=theorem]{remark}

\declaretheorem[name=Example, numberlike=theorem]{example}

\usepackage[textsize=tiny]{todonotes}

\icmltitlerunning{Solving Imperfect-Recall Games via Sum-of-Squares Optimization}

\usepackage{adjustbox}
\usepackage{subcaption}
\usepackage{mathrsfs}
\usepackage{stmaryrd}
\usepackage{bm}

\begin{document}

\twocolumn[
  \icmltitle{Solving Imperfect-Recall Games via Sum-of-Squares Optimization}

  % It is OKAY to include author information, even for blind submissions: the
  % style file will automatically remove it for you unless you've provided
  % the [accepted] option to the icml2026 package.

  % List of affiliations: The first argument should be a (short) identifier you
  % will use later to specify author affiliations Academic affiliations
  % should list Department, University, City, Region, Country Industry
  % affiliations should list Company, City, Region, Country

  % You can specify symbols, otherwise they are numbered in order. Ideally, you
  % should not use this facility. Affiliations will be numbered in order of
  % appearance and this is the preferred way.
  \icmlsetsymbol{equal}{*}

  \begin{icmlauthorlist}
    \icmlauthor{Rui Zheng}{sutd}
    \icmlauthor{Ryann Sim}{nus}
    \icmlauthor{Antonios Varvitsiotis}{sutd,cqt,arc}
    % \icmlauthor{Firstname4 Lastname4}{sch}
    % \icmlauthor{Firstname5 Lastname5}{yyy}
    % \icmlauthor{Firstname6 Lastname6}{sch,yyy,comp}
    % \icmlauthor{Firstname7 Lastname7}{comp}
    % %\icmlauthor{}{sch}
    % \icmlauthor{Firstname8 Lastname8}{sch}
    % \icmlauthor{Firstname8 Lastname8}{yyy,comp}
    %\icmlauthor{}{sch}
    %\icmlauthor{}{sch}
  \end{icmlauthorlist}

  \icmlaffiliation{sutd}{Engineering Systems and Design, Singapore University of Technology and Design, Singapore}
  \icmlaffiliation{nus}{School of Computing, National University of Singapore, Singapore}
  \icmlaffiliation{cqt}{Centre for Quantum Technologies, National University of Singapore, Singapore}
  \icmlaffiliation{arc}{Archimedes Research Unit, Greece}

  \icmlcorrespondingauthor{Rui Zheng}{rui\_zheng@mymail.sutd.edu.sg}
  % \icmlcorrespondingauthor{Firstname2 Lastname2}{first2.last2@www.uk}

  % You may provide any keywords that you find helpful for describing your
  % paper; these are used to populate the "keywords" metadata in the PDF but
  % will not be shown in the document
  \icmlkeywords{Imperfect recall games, Sum of squares, Extensive-form games, Nash equilibrium}

  \vskip 0.3in
]

% this must go after the closing bracket ] following \twocolumn[ ...

% This command actually creates the footnote in the first column listing the
% affiliations and the copyright notice. The command takes one argument, which
% is text to display at the start of the footnote. The \icmlEqualContribution
% command is standard text for equal contribution. Remove it (just {}) if you
% do not need this facility.

% Use ONE of the following lines. DO NOT remove the command.
% If you have no special notice, KEEP empty braces:
\printAffiliationsAndNotice{}  % no special notice (required even if empty)
% Or, if applicable, use the standard equal contribution text:
% \printAffiliationsAndNotice{\icmlEqualContribution}

\begin{abstract}
  Extensive-form games (EFGs) provide a powerful framework for modeling sequential decision making, capturing strategic interaction under imperfect information, chance events, and temporal structure. Most positive algorithmic and theoretical results for EFGs assume perfect recall, where players remember all past information and actions. We study the increasingly relevant setting of imperfect-recall EFGs (IREFGs), where players may forget parts of their history or previously acquired information, and where equilibrium computation is provably hard. We propose sum-of-squares (SOS) hierarchies for computing ex-ante optimal strategies in single-player IREFGs and Nash equilibria in multi-player IREFGs, working over behavioral strategies. Our theoretical results show that (i) these hierarchies converge asymptotically, (ii) under genericity assumptions, the convergence is finite, and (iii) in single-player non-absentminded IREFGs, convergence occurs at a finite level determined by the number of information sets. Finally, we introduce the new classes of (SOS)-concave and (SOS)-monotone IREFGs, and show that in the single-player setting the SOS hierarchy converges at the first level, enabling equilibrium computation with a single semidefinite program (SDP).
\end{abstract}

\section{Introduction}\label{sec:intro}
Extensive-form games (EFGs) are a central framework for modelling sequential decision-making under imperfect information, with important applications in economics, operations research, and artificial intelligence \citep{osborne1994course,fudenberg2026game}. 
An EFG represents interactions on a tree in which players (or chance) take turns choosing actions. Each sequence of actions defines a history that either ends at a leaf with payoffs or leads to another decision point. This representation is also able to capture imperfect information, where players cannot distinguish between different nodes. EFGs have been used extensively in modern AI, particularly in solving large-scale imperfect information games~\citep{brown2018superhuman,bowling2015heads,moravvcik2017deepstack,meta2022human}.

A key distinction between classes of EFGs is the ability of players to retain memory. In perfect-recall EFGs, players remember all past information sets and actions, whereas in imperfect-recall EFGs (IREFGs) some information may be lost, such as forgetting which action was taken or even whether a state has been visited before. Recent work has increasingly focused on IREFGs, since they capture a more realistic model of decision-making. IREFGs are used to abstractify large games, letting agents ignore unimportant details \citep{waugh2009practical,ganzfried2014potential,brown2015hierarchical,vcermak2017algorithm}. They also provide a natural representation for bounded rationality \citep{lambert2019equilibria}, for describing teams of cooperating agents as a single imperfect-recall decision maker \citep{von1997team,celli2018computational,zhang2022subgame}, and for designing privacy-preserving or data-restricted agents \citep{conitzer2019designing,conitzer2023foundations}.

EFGs admit two principal strategy formalisms:  mixed strategies, defined as distributions over pure strategies that select an action everywhere in the game tree, and  behavioral strategies, which randomize between actions independently at each information set. The canonical solution concept is a Nash equilibrium (NE), a strategy  profile  from which no player can unilaterally deviate. Perfect recall plays a crucial role in the theoretical guarantees of EFGs: by Kuhn’s theorem \citep{kuhn1953extensive}, mixed and behavioral strategies are equivalent in perfect recall EFGs, implying the existence of behavioral NEs and enabling efficient algorithms, such as polynomial-time methods for solving two-player zero-sum EFGs \citep{koller1992complexity}.

The imperfect-recall setting is far more challenging. The equivalence between mixed and behavioral strategies breaks down, and mixed NE still exist but may be unimplementable since they require perfect memory of the game states. Though behavioral NE are more natural, they might not exist, even in two-player zero-sum IREFGs \citep{wichardt2008existence}. Computationally, solving two-player zero-sum IREFGs is NP-hard~\citep{koller1992complexity}, deciding whether a single-player IREFG achieves a target value is hard \citep{gimbert2020bridge,tewolde2023computational}, even relaxed equilibrium notions such as EDT and CDT equilibria are hard to compute \citep{tewolde2023computational}, and deciding whether a behavioral NE exists is hard \citep{tewolde2024imperfect}.

Many recent hardness results for IREFGs \citep{gimbert2020bridge,tewolde2023computational,tewolde2024imperfect} rely on the folklore observation that any IREFG over behavioral strategies can be expressed as a polynomial game over a product of simplices \citep{piccione1997interpretation}. Importantly, the transformation from an EFG to its polynomial game representation can be carried out in polynomial time. Polynomial games form a key subclass of continuous games, originally studied by \cite{dresher2016polynomial}, in which each player’s utility is a polynomial in the decision variables of all players.  

On the other hand, there exists well-developed machinery for handling single- and multi-player optimization problems where utilities are polynomial and strategy sets are semi-algebraic. In the single-player case, the  Moment-SOS hierarchy \citep{lasserre2001global,parrilo2000structured} applies to polynomial optimization problems of the form $\max \, f(x)$ subject to $x \in \mathcal{X}$, where $f$ is a polynomial and $\mathcal{X}$ is defined by polynomial equalities and inequalities. Although such problems are NP-hard in general \citep{motzkin1965maxima}, the hierarchy produces a sequence of semidefinite relaxations whose optimal values converge to the true optimum under mild assumptions.
Extensions of the Moment-SOS hierarchy have also been developed for the multi-player polynomial optimization setting, applied to polynomially representable supersets of the equilibrium set \citep{nie2024nash}.  

Thus far, the link between IREFGs and polynomial optimization has mostly served to prove hardness results, but we instead use it as a foundation for positive algorithmic results. Specifically, we ask:

\begin{quote} \centering 
\emph{What guarantees does the Moment-SOS hierarchy provide for solving IREFGs? Are there structured subclasses of IREFGs where it enables tractable computation?}
\end{quote}

\paragraph{Contributions.}  By bringing these tools to imperfect-recall games, we obtain a general, provably convergent framework for computing behavioral equilibria in IREFGs. Moreover, we identify structural conditions under which convergence occurs at finite or low levels of the hierarchy, leading to efficient algorithms for broad and natural subclasses. Specifically, our main contributions are:
\begin{itemize}[leftmargin=1.5em]
\item For \emph{single-player IREFGs}, the Moment-SOS hierarchy converges asymptotically to the ex-ante optimal value. Under a \emph{genericity} assumption, we show that convergence is instead finite for almost all games. Moreover, in non-absentminded IREFGs (NAM-IREFGs), we show that exact convergence occurs at a level of the hierarchy depending on the number of infosets in the game. 
% We also describe an extraction procedure that can extract optimal solutions under appropriate flatness conditions.
    \item For \emph{multi-player IREFGs}, we adapt an approach proposed in recent work by~\cite{nie2024nash}. The method requires multiple instantiations of the Moment-SOS hierarchy and converges asymptotically to a behavioral NE, certifying non-existence otherwise. As with the single-player case, for almost all games the convergence is finite. In NAM-IREFGs, we show that a variant of the method generically converges in finite time, requiring only a single instantiation of the hierarchy.
    % ~(cf. \Cref{thm:multiplayermain}).
    \item We define the tractable subclasses of concave and monotone IREFGs, and SOS-certifiable counterparts thereof. In single-player SOS-concave/SOS-monotone IREFGs, we show that the Moment-SOS hierarchy converges at the \emph{first} level, enabling computation of ex-ante optima  with a single SDP.
    % ~(cf. \Cref{thm:sosconcave_exact}).
\end{itemize}

\paragraph{Related work in IREFGs.} Existing positive results for IREFGs typically rely on restrictions that admit tractable perfect-recall refinements. A-loss recall games~\citep{kaneko1995behavior,kline2002minimum} limit forgetting to past actions, enabling sufficient conditions for behavioral NE to exist and approximation methods in the two-player zero-sum case~\citep{vcermak2018solving}. 
% Another important subclass are \emph{absentminded} IREFGs, introduced by~\cite{piccione1997interpretation}. 
NAM-IREFGs, where players always remember previous decision points, 
can be transformed into equivalent (though exponentially larger) A-loss recall games \citep{gimbert2025simplifying}. Finally, chance-relaxed skew well-formed games are a subclass of IREFGs where counterfactual regret minimization provably minimizes regret \citep{lanctot2012no,kroer2016imperfect}.

\section{Preliminaries}\label{sec:prelims}
\subsection{Imperfect Recall Extensive-Form Games (IREFGs)}
We first define extensive-form games of imperfect recall. For a more thorough review of standard concepts in EFGs, the reader is referred to~\cite{fudenberg2026game,osborne1994course}.

%\begin{definition}\label{d:EFG}
An $n$-player \emph{extensive-form game} $\game$ is a tuple $\game \coloneqq \langle \mathcal{H},\pures,\mathcal{Z}, \rho, \mathcal{I} \rangle$ where: \vspace{-2mm}
    \begin{itemize}[leftmargin=1.5em]
		% \item The set $\mathcal{N}$ denotes the players of the game
        \item The set $\mathcal{H}$ denotes the states of the game which are decision points for the players. The states $\pi \in \mathcal{H}$ form a tree rooted at an initial state $r \in \mathcal{H}$. We denote terminal nodes in $\mathcal{H}$ by $\mathcal{Z}$. Each nonterminal state $\pi \in \mathcal{H}\setminus\mathcal{Z}$ is associated with a set of \textit{available actions} $\pures_{\pi}$.
        \vspace{-2mm}
        
        \item Given $\mathcal{N} = \{1,\dots,n\}$, the set $\mathcal{N}\cup\{c\}$ denotes  the $n+1$ players of the game. Each state $\pi \in \mathcal{H}$ admits a label $\mathrm{Label}(\pi) \in \mathcal{N}\cup\{c\}$ which denotes the \textit{acting player} at state $\pi$. The letter $c$ denotes a \textit{chance player}, 
        representing exogenous stochasticity. $\mathcal{H}_i \subseteq \mathcal{H}$ denotes the states $\pi \in \mathcal{H}$ with $\mathrm{Label}(\pi) = i$.
        % \item
        Each chance node $\pi\in\mathcal{H}_c$ is associated with a fixed distribution $\mathbb{P}_c(\cdot\vert \pi)$ over $\pures_\pi$, denoting the distribution over actions chosen by the chance player at each node.
        \vspace{-2mm}
        
        % Each state $h \in \mathcal{H}$ with $\mathrm{Label}(h)= c$ is additionally associated with a function $\mu_h: \pures(h)\mapsto [0,1]$ where $\mu_h(a)$ denotes the probability that the chance player selects action $a \in \pures(h)$ at state $h$, $\sum_{a\in \pures(h)}\mu_h(a) =1$. 
        
        % \item $\mathrm{Next}(a,h)$ denotes the state $h' := \mathrm{Next(a,h)}$ which is reached when player $i\coloneqq \mathrm{Label}(h)$ takes action $a \in \pures(h)$ at state $h$. 
        
        % \item $\mathcal{Z}$ denotes the terminal states of the game corresponding to the leafs of the tree. At each $z \in \mathcal{Z}$ no further action can be chosen, so $\pures(z) = \varnothing$ for all $z \in \mathcal{Z}$. 
        \item For each $i\in\mathcal{N}$, payoff function $\rho_i : \mathcal{Z}\to\mathbb{R}$ specifies the payoff that player $i$ receives if the game ends at terminal state $z\in\mathcal{Z}$.\vspace{-2mm}
        
        % Each terminal state $z\in \mathcal{Z}$ is associated with a utility function $u(z)$, where  $u: \mathcal{Z} \to \mathbb{R}$ is called the utility function of the game.
        
        \item The game states $\mathcal{H}$ are partitioned into \textit{information sets} (also called infosets) ascribed to each player, namely $\mathcal{I}_i \in (\mathcal{I}_1,\ldots,\mathcal{I}_n)$. Each information set $I \in \mathcal{I}_i$ encodes groups of nodes that the acting player $i$ cannot distinguish between, so
        % , and thus the available actions within each infoset must be the same.
		% Moreover, the player must play the same strategy in all nodes of the infoset. Formally, 
		% where $\mathcal{I}(h)$ denotes the information set of state $h \in \mathcal{H}_i$. 
		if $\pi_1, \pi_2 \in I$, then $\pures_{\pi_1} = \pures_{\pi_2}$. We let $\pures_I$ denote the shared action set of infoset $I$.\vspace{-2mm}
        % With slight abuse of notation, we can consider $\pures_I$ to be the set of shared available actions for the player in infoset $I$.
		
		\item For notational convenience, we ascribe a singleton information set to each chance node and define $\mathcal{I}_c$ as the collection of these chance node infosets. For each non-terminal node $\pi \in\mathcal{H} \setminus \mathcal{Z}$, we thus define $I_\pi \in (\mathcal{I}_1,\ldots,\mathcal{I}_n) \cup \mathcal{I}_c$ to be the infoset it belongs to.\vspace{-2mm}
		% if $I(h_1) = I(h_2)$ for some $h_1,h_2 \in \mathcal{H}_i$, then $\pures(h_1) = \pures(h_2)$.
    \end{itemize}
   % \end{definition}

\vspace{-2mm}
\paragraph{Memory.} There are two key distinctions regarding players' memory.  
A game has {\em perfect recall} if no player ever forgets their past history, namely, the sequence of information sets visited, actions taken within those information sets, and any information acquired along the way.  
Formally, for any information set $I \in \mathcal{I}_i$ and any two nodes $\pi_1, \pi_2 \in I$, the sequence of player~$i$'s actions from the root $r$ to $\pi_1$ must coincide with the sequence from $r$ to $\pi_2$; otherwise, the player could distinguish between the two nodes.  
A game is said to have \emph{perfect recall} if this property holds for all players; otherwise, it is said to have {\em imperfect recall}.

\vspace{-2mm}
\paragraph{Strategy formalism.}  
A \emph{pure strategy} specifies a deterministic action at every information set of a player.  
A \emph{mixed strategy} is a probability distribution over pure strategies.  
However, mixed strategies require players to coordinate their actions across information sets, which implicitly assumes memory and therefore conflicts with the imperfect-recall setting.  
For this reason, IREFGs are most naturally analyzed using \emph{behavioral strategies}, which specify independent randomizations at each information set.
Formally, for any information set $I \in \mathcal{I}_i$, let $\Delta(\mathcal{A}_I)$ denote the simplex of probability distributions over the available actions $\mathcal{A}_I$.  
A \emph{behavioral strategy} for player $i$ is a mapping $
\mu_i : \mathcal{I}_i \;\to\; \bigcup_{I \in \mathcal{I}_i} \Delta(\mathcal{A}_I),
$
assigning to each information set $I$ a distribution $\mu_i(\cdot \mid I) \in \Delta(\mathcal{A}_I)$.  
The joint behavioral strategy profile for all players is denoted by $\mu := (\mu_i)_{i \in \mathcal{N}}$.

Kuhn’s theorem~\citep{kuhn1953extensive} establishes that, in games with perfect recall, behavioral and mixed strategies are outcome-equivalent—that is, they induce the same probability distribution over terminal histories.  
This equivalence no longer holds in the imperfect-recall setting, making behavioral strategies the canonical choice for IREFGs.

Going forward, we establish some additional notational conventions regarding behavioral strategies. We use $(\mu_i, \mu_{-i})$ to describe the influence of player $i$ on $\mu$, where $\mu_{-i}$ collects all components except player $i$. Let 
$\ell_i$ denote the number of infosets of player $i$, i.e. $\ell_i \coloneqq \vert \mathcal{I}_i \vert $, and $m_i^j$ denote the number of actions in a given infoset $I^j_i \in \mathcal{I}_i$ of player $i$, i.e. $m_i^j \coloneqq \vert \mathcal{A}_{I_i^j} \vert$. 
The strategy set of player $i$ over all their infosets can be written as a Cartesian product of simplices: $\mathcal{S}_i \coloneqq \bigtimes_{j=1}^{\ell_i} \Delta^{ m_i^j - 1}$.
Finally, the strategy set over all players is $\mathcal{S} \coloneqq \bigtimes_{i=1}^n \mathcal{S}_i$.
In the single-player case, we set $n=1$ and drop the player index $i$ for clarity.
\vspace{-2mm}
\paragraph{Solution concepts.} The expected utility of player $i\in\mathcal{N}$ following (joint) behavioral strategy $\mu$ is $u_i(\mu)\coloneqq \sum_{z\in\mathcal{Z}} \mathbb{P}(z\vert \mu, r)\cdot \rho_i(z)$, where $ \mathbb{P}(z\vert \mu, r)$ is the probability that leaf $z\in\mathcal{Z}$ is reached from root $r$ following joint behavioral strategy $\mu$.
In a single-player IREFG $\game$,  
a behavioral strategy $\mu^*$ is said to be {\em ex-ante optimal} if it is a solution to the following optimization problem:
\vspace{-1mm}
\begin{equation}\label{eq:ex-ante}
\textstyle \max_{\mu} \ u(\mu) \quad \text{s.t. } \ \mu \in \mathcal{S}.\vspace{-1mm}
\end{equation}
In these games, $\mathcal S$ is compact and $u$ is continuous, so
$\argmax_{\mathcal S} u\neq\emptyset$ (i.e. ex-ante optima always exist).
In a multi-player IREFG $\game$, a joint behavioral strategy $\mu^* \in \mathcal{S}$ is a \emph{(behavioral) Nash equilibrium} if for every player $i \in \mathcal{N}$, we have that:\vspace{-2mm}
\begin{equation}
u_i(\mu^*) \;\geq\; u_i(\mu_i, \mu^*_{-i}), \quad \forall \mu_i \in \mathcal{S}_i,
\vspace{-1mm}
\end{equation}
i.e., no player can profitably deviate from $\mu^*$ to any other behavioral strategy.  
Importantly, behavioral Nash equilibria need not exist in IREFGs; a counterexample is given in~\cite{wichardt2008existence} (cf. Figure~\ref{fig:taxidriver}~(a)). Moreover, even deciding whether one exists is  hard~\citep{tewolde2024imperfect}.

\begin{figure}[tb]
    \centering
    \begin{subfigure}[t]{0.70\columnwidth}
    % \begin{minipage}{0.70\columnwidth}
        \centering
        \begin{adjustbox}{max width=\linewidth,center}
        \begin{tikzpicture}[font=\footnotesize,scale=0.62,transform shape]
        % \begin{tikzpicture}[scale=0.61,font=\footnotesize]%edge from parent/.style={draw,thick}]
		% Two node styles: solid and hollow
		\tikzstyle{solid node}=[circle,draw,inner sep=1.2,fill=black];
		\tikzstyle{hollow node}=[circle,draw,inner sep=1.2];
		% Specify spacing for each level of the tree
		\tikzstyle{level 1}=[level distance=15mm,sibling distance=50mm]
		\tikzstyle{level 2}=[level distance=15mm,sibling distance=25mm]
		\tikzstyle{level 3}=[level distance=15mm,sibling distance=15mm]
		% The Tree
		\node(0)[solid node,label=above:{P1}, label=left:{}]{}
		child{node[solid node,label=above:{P1}, label=left:{}]{}
			child{node[solid node,label=above left:{P2}]{}
				child{node[hollow node, label=below:{$1$}]{} edge from parent node[left]{}}
				child{node[hollow node, label=below:{$-1$}]{} edge from parent node[right]{}}
				edge from parent node[above left]{}
			}
			child{node[solid node,label=above right:{P2}]{}
				child{node[hollow node, label=below:{$-5$}]{} edge from parent node(s)[left]{}}
				child{node[hollow node, label=below:{$-5$}]{} edge from parent node(t)[right]{}}
				edge from parent node[above right]{}
			}
			edge from parent node[above left]{}
		}
		child{node[solid node,label=above:{P1}, label=right:{}]{}
			child{node[solid node,label=above left:{P2}]{}
				child{node[hollow node, label=below:{$-5$}]{} edge from parent node(m)[left]{}}
				child{node[hollow node, label=below:{$-5$}]{} edge from parent node(n)[right]{}}
				edge from parent node[above left]{}
			}
			child{node[solid node,label=above right:{P2}]{}
				child{node[hollow node, label=below:{$-1$}]{} edge from parent node[left]{}}
				child{node[hollow node, label=below:{$1$}]{} edge from parent node[right]{}}
				edge from parent node[above right]{}
			}
			edge from parent node[above right]{}
		};
	% information sets
		\draw[loosely dotted,very thick](0-1-1)to(0-2-2);
		\draw[loosely dotted,very thick](0-1)to(0-2);

	    \end{tikzpicture}
        %\caption{A zero-sum IREFG with no NE}
        \label{fig:forgetfulpenaltyshootout}
        \end{adjustbox}
        % \caption{}
    % \end{minipage}%
    \end{subfigure}
    \begin{subfigure}[t]{0.29\columnwidth}
    % \begin{minipage}{0.29\columnwidth}
        \centering
        \tikzset{
		% Two node styles for game trees: solid and hollow
		solid node/.style={circle,draw,inner sep=1.5,fill=black},
		hollow node/.style={circle,draw,inner sep=1.5}
	}
        \begin{adjustbox}{max width=\linewidth,center}
        \begin{tikzpicture}[font=\footnotesize,scale=0.75,transform shape]
	% \begin{tikzpicture}[scale=0.91,font=\footnotesize]
		% Specify spacing for each level of the tree
		\tikzstyle{level 1}=[level distance=15mm,sibling distance=15mm]
		\tikzstyle{level 2}=[level distance=15mm,sibling distance=15mm]
		% The Tree
		\node(0)[solid node,label=above:{P1}]{}
		child{node(1)[solid node,label=left:{P1}]{}
			child{node[hollow node,label=below:{$1$}]{} edge from parent node[left]{}}
			child{node[hollow node,label=below:{$4$}]{} edge from parent node[right]{}}
			edge from parent node[left,xshift=-3]{}
		}
		child{node(2)[hollow node,label=below:{$0$}]{}
		% 	child{node[hollow node,label=below:{$(-1,1)$}]{} edge from parent node[left]{$H$}}
		% 	child{node[hollow node,label=below:{$(1,-1)$}]{} edge from parent node[right]{$T$}}
			edge from parent node[right,xshift=3]{}
		};
		% information set
		\draw[loosely dotted,very thick](1)to[out=135,in=180](0);
		% \draw[dashed,rounded corners=10]($(1) + (-.2,.25)$)rectangle($(2) +(.2,-.25)$);
%		% specify mover at 2nd information set
%		\node at ($(1)!.5!(2)$) {P2};
	    \end{tikzpicture}
        %\caption{Absentminded taxi driver game}
        %\label{fig:taxidriver}
        \end{adjustbox}
        % \caption{}
    % \end{minipage}
    \end{subfigure}
   \caption{(a) A two-player zero-sum IREFG with no behavioral NE; (b) the single-player absentminded taxi driver IREFG. Dotted lines denote infosets. In (b), P1 cannot distinguish between nodes in the same history, so behavioral strategies (e.g., \emph{Left} w.p.~$x$, \emph{Right} w.p.~$1-x$) yield expected utility $u(x)=x^2+4x(1-x)$, a polynomial in the behavioral strategy space.}
    \label{fig:taxidriver}
\end{figure}
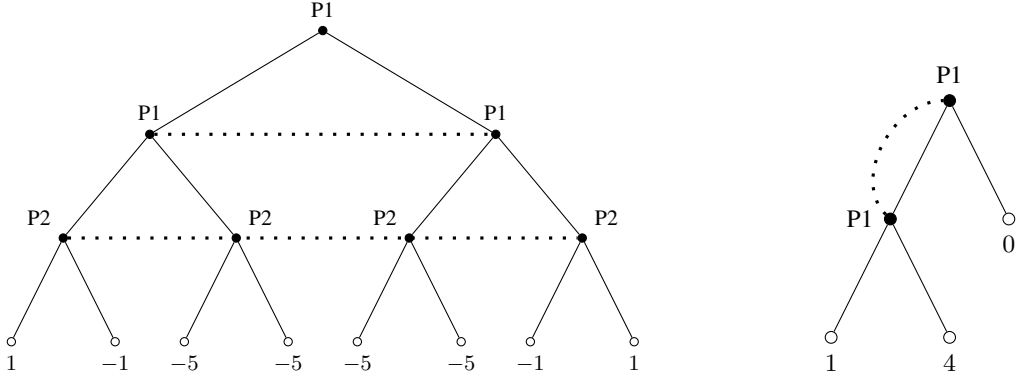
% \end{figure}

\subsection{The Moment/Sum-of-Squares Hierarchy}\label{subsec:mom-sos}
We will utilize ideas from the polynomial optimization literature to solve IREFGs. We direct the reader to~\cite{lasserre2009moments,laurent2008sums} for more thorough discussions of these techniques.
Here we give only a brief overview of the Moment-SOS hierarchy tailored to our purposes; a step-by-step derivation in our setting is provided in Appendix~\ref{appsec:momentsos}.

We now introduce the Moment-SOS hierarchy. Consider the polynomial optimization problem:
\begin{equation}\label{eq:gen-pop}
f^* \ :=\ \textstyle\max_{x}\,\{ f(x): \, x\in\mathcal{X}\}, 
\end{equation}
where the feasible region is a basic semi‐algebraic set
% \begin{align}\label{semi-algebraic_set}
    $\mathcal{X} :=
  \bigl\{x\in\mathbb{R}^m :
    g_j(x)\ge0, j \in \llbracket m_g \rrbracket,\;
    h_k(x)=0,k \in \llbracket m_h \rrbracket
  \bigr\}$
% \end{align}
for two  families of polynomials
$g_1,\dots,g_{m_g},h_1,\dots,h_{m_h}\in\mathbb{R}[x]$. 
Let $\alpha:=(\alpha_1,\dots,\alpha_m)\in\mathbb{N}^m$ be a multi‐index, and write 
$x^\alpha :=\prod_{i=1}^m x_i^{\alpha_i}$, $|\alpha| :=\sum_{i=1}^m\alpha_i$.
For two multi-indices $\alpha,\beta$, the sum $\alpha+\beta$ is taken componentwise, i.e., $(\alpha+\beta)_i = \alpha_i + \beta_i$.
Denote by $\mathbb{R}[x]$ the ring of real polynomials in the variables $x:=(x_1,\dots,x_m)$, and by $\mathbb{R}[x]_d$ the vector space of polynomials of degree at most $d$ (whose dimension is $s(m) := \binom{m+d}{m}$).  
A sequence $y=(y_\alpha)_{\alpha\in\mathbb{N}^m} \subset\mathbb{R}$
defines a Riesz linear functional
% \begin{equation}\label{eq:Riesz}
  $L_y: \mathbb{R}[x]\to\mathbb{R},
  \ 
  L_y\bigl(x^\alpha\bigr) =y_\alpha$.
% \end{equation}
% We say $y$ has a representing measure if there exists a positive Borel measure $\phi$ on $\mathbb{R}^m$ such that
% % \begin{align}
%     $y_\alpha =
%   \int_{\mathbb R^m} x^\alpha\,d\phi(x),\
%   \forall\alpha\in\mathbb N^m$.
% % \end{align}
Given a sequence $y$ and an integer $d\ge0$, the \emph{degree-$d$ moment matrix} $M_d(y)$ is the symmetric matrix with rows and columns indexed by all multi-indices $\alpha,\beta$ satisfying $|\alpha|,|\beta|\le d$, and whose entry is $M_d(y)_{\alpha,\beta} = y_{\alpha+\beta}$.
For any polynomial $g(x)=\sum_{\gamma}g_\gamma\,x^\gamma$,
the \emph{degree-$d$ localizing matrix} of $y$ with respect to $g$ is
% \begin{align}
    $M_d\bigl(g\star y\bigr)_{\alpha,\beta}
  :=
  L_y\bigl(g(x)x^{\alpha+\beta}\bigr)
  =
  \sum_{\gamma}g_\gamma\,y_{\alpha+\beta+\gamma},
  \
  |\alpha|,|\beta|\le d-\lceil{\deg(g)}/2\rceil$.
% \end{align}

Denote by
% \begin{align}
    $\Sigma[x] := \bigl\{\sigma\in\mathbb{R}[x]: \sigma(x)=\sum_{i} q_i(x)^2,\;q_i\in\mathbb{R}[x]\bigr\}$
% \end{align}
the set of SOS polynomials and by $\Sigma[x]_d$ the set of SOS polynomials of degree at most $2d$. The \emph{quadratic module} of \(\mathcal X\) is
% \begin{equation}
    $Q(\mathcal{X}) := 
  \bigl\{
    \sigma_0 + \sum_{j=1}^{m_g}\sigma_j\,g_j 
    +\sum_{k=1}^{m_h}p_k\,h_k
    :
    \sigma_0,\sigma_j\in\Sigma[x],
    \,p_k\in\mathbb{R}[x]
  \bigr\}$.
% \end{equation}
Its truncation at degree $d$, denoted $Q_d(\mathcal X)$, consists of those elements for which $\deg(\sigma_0)\le2d,\,\deg(\sigma_j\,g_j)\le2d,\,\deg(p_k\,h_k)\le2d$. Note that simplex action sets (as are standard in IREFGs) are basic semi-algebraic sets. 
% Following standard notation, we used $h$ for families of polynomials, overloading notation from EFGs. When applicable going forward, we use $h$ in the polynomial optimization context, and $\pi$ for nodes in EFGs.

Let $d_f := \lceil\deg(f)/2\rceil$, $d_g :=\max_j \lceil\deg(g_j)/2\rceil$, $d_h :=\max_k \lceil\deg(h_k)/2\rceil$, and $d_\gX := \max \{d_g, d_h\}$.
For $d\ge d_0:=\max\{d_f,d_\gX\}$, the degree-$d$ SOS relaxation of \Cref{eq:gen-pop} is
\begin{equation}\label{eq:sos-d}
\begin{aligned}
  f_d^{\rm sos}\; =\; &\textstyle\inf_{\substack{t,\,\sigma_0,\{\sigma_j\},\{p_k\}}}
    \; t \\
  \text{s.t.}\quad
    & t - f(x) \in Q_d(\mathcal{X}).
\end{aligned}
\end{equation} 
Its dual is the degree-$d$ moment relaxation:
\begin{equation}\label{eq:mom-d}
\begin{aligned}
f_d^{\rm mom}& \; =\; \textstyle\sup_{y}\;  L_y(f)\\
  \text{s.t.}\quad
    & y_0=1,\ M_d(y)\succeq0, \ M_{d-d_{\gX}}\bigl(g_j\star y\bigr)\succeq0,\ \forall j, \\
    & L_y(h_k\,q)=0,\ \forall k,\,\forall q\in\R[x],\, \deg(h_k q)\le 2d.
\end{aligned}
\end{equation}
It is immediate that \( f_d^{\mathrm{mom}} \leq f_d^{\mathrm{sos}} \) for all \(d\), and both sequences 
\(\{ f_d^{\mathrm{sos}} \}\) and \(\{ f_{d}^{\mathrm{mom}} \}\) are nonincreasing in \(d\). 
Moreover, the hierarchy converges asymptotically under mild assumptions on the description 
of the semi-algebraic set, e.g., see  \cite{lasserre2001global}. In addition to approximating the optimal value, the moment relaxation 
also allows one to extract maximizers under appropriate assumptions. 
It is well known that if, at some order \(d\), the flatness condition
$
\rank M_s(y_d) \;=\; \rank M_{s-d_{\mathcal X}}(y_d)
$
holds for some $s \leq d$, then the relaxation is exact and an optimal strategy can be extracted from \(M_s(y_d)\); 
see \cite{henrion2005detecting} and Appendix~\ref{appsec:extraction} for details about the extraction procedure.

\section{The Link between  IREFGs and Polynomial Optimization}\label{sec:gamestopoly}
We begin this section with a folklore result that bridges the study of IREFGs with polynomial optimization, originating in~\cite{piccione1997interpretation} and expanded further in~\cite{tewolde2023computational,tewolde2024imperfect}. A thorough exposition of this connection is given in
% This connection is summarized in the following folklore result, 
% a proof of which can be found in
\Cref{appsec:polynomial}.

\begin{theorem}[Folklore]\label{thm:folklore}
In any IREFG, the expected utility of a player \(i \in \mathcal{N}\) under a joint behavioral strategy \(\mu\) is a polynomial in the entries of \(\mu\). As a consequence: \vspace{-2mm}
\begin{itemize}[leftmargin=2em]
    \item[(i)] In the single-player case, computing an ex-ante optimal behavioral strategy reduces to a polynomial optimization problem over products of simplices. \vspace{-2mm}
    \item[(ii)] In the multi-player case, computing a behavioral Nash equilibrium \(\mu^* \in \mathcal{S}\) reduces to solving coupled polynomial optimization problems over products of simplices, i.e., $\mu_i^* \in \argmax_{\mu_i \in \mathcal{S}_i} u_i(\mu_i, \mu^*_{-i}), \ \forall i \in \mathcal{N}$. \vspace{-2mm}
\end{itemize}
These reductions can be carried out in polynomial time.
\end{theorem}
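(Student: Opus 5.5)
The plan is to write the reach probability of every leaf explicitly as a product of behavioral-strategy entries and chance constants. Then $u_i$ is a finite sum of such products weighted by $\rho_i(z)$, and (i) and (ii) follow by rewriting the definitions of ex-ante optimality and behavioral Nash equilibrium. Introduce one variable $x_{I,a} := \mu(a\mid I)$ for each player infoset $I\in\mathcal{I}_i$ and action $a\in\pures_I$. The strategy set $\mathcal{S}$ is then the product of simplices cut out by $x_{I,a}\ge 0$ and $\sum_{a\in\pures_I}x_{I,a}=1$. These are linear polynomial constraints, so $\mathcal{S}$ is a basic semi-algebraic set in the sense of \Cref{subsec:mom-sos}.

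\textbf{Leaf reach probabilities.} For a leaf $z\in\mathcal{Z}$, take the unique root-to-leaf path $r=\pi_0,\pi_1,\dots,\pi_k=z$ in the game tree. Let $a_t$ be the action leading from $\pi_t$ to $\pi_{t+1}$. Since behavioral strategies randomize independently each time a node is visited, the chain rule gives
\[
\mathbb{P}(z\mid\mu,r)=\prod_{t=0}^{k-1}\mathbb{P}(a_t\mid \pi_t),
\]
where $\mathbb{P}(a_t\mid\pi_t)=\mathbb{P}_c(a_t\mid\pi_t)$ if $\pi_t$ is a chance node, and $\mathbb{P}(a_t\mid\pi_t)=x_{I_{\pi_t},a_t}$ otherwise.

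This step needs care under imperfect recall, and in particular under absentmindedness (\Cref{fig:taxidriver}(b)). There, the same infoset, and even the same variable $x_{I,a}$, can appear several times along one path. The product then contains repeated factors, giving higher powers such as $x^2$ in the taxi-driver example. It is still a monomial, so the conclusion is unaffected. I would argue that the behavioral-strategy semantics assigns fresh, independent randomization at each node visit. This is the standard (Piccione--Rubinstein) interpretation, and it justifies the multiplicative formula even when an infoset recurs on a path.

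\textbf{Polynomiality and the reductions.} With the reach probabilities in hand, $u_i(\mu)=\sum_{z}\rho_i(z)\prod_t \mathbb{P}(a_t\mid\pi_t)$ is a polynomial of degree at most the tree depth. It has at most $|\mathcal{Z}|$ monomials, and its coefficients are products of chance probabilities and payoffs. For (i), problem \eqref{eq:ex-ante} becomes literally $\max\{u(x):x\in\mathcal{S}\}$, an instance of \eqref{eq:gen-pop}. For (ii), the Nash condition reads $\mu_i^*\in\argmax_{\mu_i\in\mathcal{S}_i}u_i(\mu_i,\mu_{-i}^*)$ for each $i$. With $\mu_{-i}^*$ fixed, each of these is a polynomial optimization problem in the variables of player $i$ over $\mathcal{S}_i$. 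Together they form coupled problems over products of simplices.

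\textbf{Polynomial running time.} I would build $u_i$ by one traversal of the tree, computing each leaf's monomial from its parent's monomial by multiplying in one factor. This takes $O(|\mathcal{H}|)$ monomial operations. Each monomial has encoding size polynomial in the depth and the bit-lengths of the chance probabilities. The simplex constraints number $O(\sum_i\sum_j m_i^j)$. Hence the whole encoding is polynomial in the size of $\game$.

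The only subtle point is the absentminded case of the reach-probability formula. Once that is handled, the rest is bookkeeping.
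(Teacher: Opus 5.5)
Your proposal is correct and is essentially the paper's argument in \Cref{appsec:polynomial}: there the realization probability of each leaf is defined as the product of choice probabilities along its root-to-leaf path, with chance nodes treated as singleton infosets carrying fixed distributions, so $u_i$ is a payoff-weighted sum of monomials (with repeated factors under absentmindedness) and (i)--(ii) follow by rewriting the definitions. Your single-traversal argument for the polynomial-time claim is a reasonable explicit addition, since the paper leaves that part to the cited folklore.
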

% \vspace{-2mm}

% Based on the above result,  guaranteeing asymptotic convergence~\citep{lasserre2001global}.
In view of this, while the Moment-SOS hierarchy applies directly in the single-player case, this is not the case for the multi-player setting. We next introduce two modifications that extend its applicability to the multi-player case and provide stronger convergence guarantees.
\vspace{-1mm}
\paragraph{KKT-based Moment-SOS hierarchy.}  
In the multi-player case, the reduction yields a polynomial game over a product of simplices. However, it is unclear how to use this, since the set of NEs admits no explicit semi-algebraic description. To apply the Moment-SOS hierarchy in this setting, we follow \citep{nie2024nash}. At a high level, their approach defines a semi-algebraic superset of the Nash equilibria and then applies the Moment-SOS hierarchy to this larger set. 
This set is obtained by concatenating the KKT conditions of each individual player’s optimization problem. These conditions are already polynomial in the primal and dual variables, but under appropriate conditions  it is possible to reduce variables by expressing the dual multipliers as polynomials of the primal variables.
 
% We desctibe this approach below. 

To explain this approach, we introduce some additional notation. 
The action space of each player \(i\) is a product of simplices over its information sets, 
which admits a simple semi-algebraic description
$\mathcal{S}_i 
= \bigl\{ \mu_i : h_i^j(\mu) = 0,\; g_i^j(\mu) \ge 0,\;\forall j \bigr\}$,
where
$h_i^j(\mu) := \mathbf{1}^\top \mu_i^j - 1$, 
$g_{i,a}^j(\mu) := \mu_{i,a}^j \;\forall a$,  
$g_i^j := (g_{i,1}^j, \dots, g_{i,m_i^j}^j)^\top$.
For each player \(i\) and information set \(j\), we define the block gradient
$w_i^j(\mu) := \nabla_{\mu_i^j} u_i(\mu) \in \mathbb{R}^{m_i^j}$.
Denote by \(\nu_i^j \in \mathbb{R}\) and \(\lambda_i^j \in \mathbb{R}_{\ge 0}^{m_i^j}\) the Lagrange multipliers associated with the constraints \(h_i^j(\mu) = 0\) and \(g_i^j(\mu) \ge 0\), respectively. The KKT stationarity condition at an optimizer requires that \(w_i^j(\mu) + \nu_i^j \mathbf{1} + \lambda_i^j = 0\) for all \(j\). Taking the inner product with \(g_i^j(\mu)\), and using \(h_i^j(\mu) = 0\) and \(g_i^j(\mu)^\top \lambda_i^j = 0\), yields the following explicit (polynomial) expressions for the multipliers: \(\nu_i^j(\mu) = -g_i^j(\mu)^\top w_i^j(\mu)\) and \(\lambda_i^j(\mu) = -w_i^j(\mu) - \nu_i^j(\mu)\mathbf{1}\). Hence, all Lagrange multipliers are polynomials in \(\mu\).

Furthermore, in our setting (i.e.,  product of simplices), linear independence constraint qualification (LICQ) holds at every feasible point, and hence the KKT conditions are necessary for optimality. Indeed, at any infoset block the constraints are $\mathbf 1^\top\mu=1$ and $\mu_a\ge0$. Thus, the gradients of the active constraints are $\mathbf 1$ and
$ e_a$ for all $a$ with $\mu_a=0$.
Since $\sum_a \mu_a=1$ implies at least one $\mu_a>0$, not all $e_a$ can be active, so $\{\mathbf 1\}\cup\{e_a:\mu_a=0\}$ is linearly independent and LICQ holds.
% Since LICQ holds, 
Consequently, every NE must satisfy the joint KKT system of polynomial equations:
\begin{equation}\label{eq:NE-KKT-system}\tag{KKT}
\left\{
\begin{aligned}
& w_i^j(\mu)+\nu_i^j(\mu)\mathbf 1+\lambda_i^j(\mu)=0,\  \lambda_i^j(\mu)\ge0, \\
& g_i^j(\mu)\ge0, \  h_i^j(\mu)=0,\  g_{i,a}^j(\mu)\, \lambda_{i,a}^j(\mu)=0,
\end{aligned}\right.
% \quad \forall i\in\llb n\rrb,\ j\in\llb\ell_i\rrb,\ a\in\llb m_i^j\rrb.
\quad \forall i,\, j,\, a.
\end{equation}
% All relations in \Cref{eq:NE-KKT-system} are polynomial equalities/inequalities in~$\mu$. 

Summarizing, in the multi-player case, and following \cite{nie2024nash}, a behavioral NE 
can be computed by applying the Moment-SOS hierarchy to the semi-algebraic set defined by the collection 
of all coupled (\ref{eq:NE-KKT-system}) conditions. We refer to the resulting hierarchy as the \emph{KKT-based hierarchy}. 

An important feature of this hierarchy is that it exhibits \emph{finite convergence}, 
in contrast to the standard asymptotic convergence guarantee. 
This is achieved under a standard \emph{genericity} assumption on the utility coefficients. 
More precisely, after fixing the game tree, information structure, and admissible polynomial utility class, the only free parameters are the utility coefficients, denoted collectively by $c$. 
A property is called \emph{generic} if it holds for all $c$ outside of a proper algebraic variety in this coefficient space. 
Equivalently, the exceptional set is contained in the solution set of a non-zero system of polynomial equations in $c$. Since every proper algebraic variety has Lebesgue measure zero~\cite{harris2013algebraic}, such a property holds for almost all utility coefficients.
We give concrete definitions and additional discussion on the genericity assumption in~\Cref{appsec:genericity}.

% a property is called \emph{generic} if it holds for all inputs outside a Lebesgue-measure-zero set in the coefficient space. 
For a fixed (multi-)degree, a polynomial is generic if its coefficient vector is generic.  \citep[Theorem~A.1]{nie2024nash} show that for a polynomial game with generic utilities  and constraints, 
the KKT equations admit only finitely many complex solutions. 
Consequently, the corresponding complex variety is finite, and existing results \citep{laurent2008sums,lasserre2008semidefinite} imply finite convergence of the hierarchy.
However, applying this result directly in our setting requires care, since our constraints are not generic but correspond to simplices. 
Nevertheless, as shown earlier the constraints satisfy LICQ, which turns out to be the only property required in the proof of \citep[Theorem~A.1]{nie2024nash} 
(i.e., genericity is used there only to imply LICQ).
 
\vspace{-1mm}
\paragraph{Vertex-restricted Moment-SOS hierarchy.} An important structural feature of IREFGs is {\em absentmindedness}.  
Absentmindedness refers to the situation where a player cannot distinguish between two nodes that lie along the same history.  
Formally, a player is \emph{absentminded} if there exist nodes $\pi_1, \pi_2 \in I$ with $\pi_1 \neq \pi_2$ such that $\pi_1$ lies on the unique path from $r$ to $\pi_2$—that is, multiple nodes along the same history belong to the same information set (cf.~\Cref{fig:taxidriver} (b)). 
A game is \emph{non-absentminded} if no player is absentminded.
Whether a player is absentminded has important implications for the corresponding polynomial utility.

\begin{restatable}{proposition}{multilinear}
\label{prop:multilinear}
In non-absentminded IREFGs (NAM-IREFGs), each player's utility
\(u_i(\mu)\) is multi-affine in the blocks 
\(\{\mu_i^j = (\mu^j_{i,a})_{a=1}^{m^j}\}_{j=1}^{\ell_i}\), i.e., for any player i and infoset j, the map \(\mu_i^j\mapsto u_i(\mu)\) is affine when all other blocks \(\{\mu_{i'}^{j'}\}_{(i',j')\neq(i,j)}\) are held fixed.

\end{restatable}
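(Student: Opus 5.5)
The plan is to write out the reach-probability expansion of the utility and read off the degree in each block directly. By \Cref{thm:folklore} (and its derivation in \Cref{appsec:polynomial}), for every leaf $z\in\mathcal{Z}$ the reach probability factors along the unique root-to-$z$ path $r=\pi_0,\pi_1,\dots,\pi_T=z$ as
\[
\mathbb{P}(z\mid\mu,r)=\prod_{t=0}^{T-1} \mu\bigl(a_t \mid I_{\pi_t}\bigr),
\]
where $a_t\in\pures_{\pi_t}$ is the action leading from $\pi_t$ to $\pi_{t+1}$. For chance nodes the factor is the constant $\mathbb{P}_c(a_t\mid\pi_t)$, and for a node of player $i'$ in infoset $I_{i'}^{j'}$ it is the coordinate $\mu^{j'}_{i',a_t}$. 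Hence $u_i(\mu)=\sum_{z}\rho_i(z)\,\mathbb{P}(z\mid\mu,r)$ is a linear combination of such path monomials.

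The key step is to show that, under non-absentmindedness, each path monomial contains at most one variable from any fixed block $\mu_i^j$. Suppose two factors in the product for $z$ came from the same block, i.e.\ two distinct nodes $\pi_s,\pi_t$ with $s<t$ on the path both lie in $I_i^j$. Then $\pi_s$ lies on the unique path from $r$ to $\pi_t$, and both belong to the same infoset. This is exactly the definition of player $i$ being absentminded, which contradicts the NAM assumption. So for each leaf $z$ and each pair $(i,j)$, the monomial $\mathbb{P}(z\mid\mu,r)$ is either independent of $\mu_i^j$ or equals $\mu^j_{i,a}$ times a factor not involving $\mu_i^j$, for a single action $a$.

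It follows that, once all blocks other than $\mu_i^j$ are fixed, every term $\rho_i(z)\mathbb{P}(z\mid\mu,r)$ is either a constant or a constant times one coordinate $\mu^j_{i,a}$. Summing over $z$ therefore gives an affine function of $\mu_i^j$. Since $(i,j)$ was arbitrary, $u_i$ is multi-affine in the blocks. This covers blocks of every player $i'$, not just player $i$'s own, because the argument concerns only the variables appearing in the product and not whose utility is being computed.

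The only delicate point is justifying the path-product formula itself, specifically that a behavioral strategy assigns the same distribution $\mu(\cdot\mid I)$ at every node of $I$, so that each visit to a node contributes one factor from its infoset's block. I will take this formula from the folklore construction. The absentmindedness contradiction is then immediate from the definitions.
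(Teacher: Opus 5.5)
Your proposal is correct and takes essentially the same approach as the paper. Both expand $u_i$ as a sum over leaves of path products of behavioral coordinates. Both then use non-absentmindedness to show that each root-to-leaf path meets any infoset at most once, so every monomial contains at most one coordinate of each block $\mu_i^j$ and $u_i$ is affine in that block. You also derive the ``visited at most once'' step explicitly from the definition of absentmindedness, which the paper states without further comment.
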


The proof is given in \Cref{appsec:proofs:gamestopoly}. 
Proposition~\ref{prop:multilinear} has an interesting consequence for the Moment-SOS hierarchy in the NAM-IREFG case. 
Indeed, since \(u_i\) is affine in each block \(\mu_i^j\), and the feasible region is a product of simplices, every blockwise maximization attains its value at an extreme point.
More precisely, fixing all variables except \(\mu_i^j\) (denote this by \(\mu^{-(i,j)}\)), the map \(\mu_i^j \mapsto u_i(\mu)\) is affine, hence its maximum over \(\Delta^{m_i^j}\) is achieved at a vertex \(e_{i,a}^j\).
Starting from any feasible profile, successively replacing each block \(\mu_i^j\) by a maximizing vertex with respect to the current \(\mu^{-(i,j)}\) never decreases the objective. 
After finitely many replacements we obtain a global maximizer whose blocks are all vertices.
Defining \(b^j_{i,a}(\mu) := \mu^j_{i,a}(\mu^j_{i,a} - 1)\), we see that instead of working with the product of simplices, we can equivalently run the hierarchy over the semi-algebraic vertex-restricted set
\begin{equation}
\mathcal{S}_{i,\mathrm{vr}} 
:= \bigl\{ \mu_i : h_i^j(\mu) = 0 \ \forall j,\ \  b^j_{i,a}(\mu) = 0 \ \forall j,a \bigr\},
\end{equation}
since
$
\max_{\mu_i \in \mathcal{S}_i} u_i(\mu) 
=\max_{\mu_i \in \mathcal{S}_{i,\mathrm{vr}}} u_i(\mu).
$
We refer to the resulting construction as the \emph{vertex-restricted hierarchy}. In subsequent sections, we explore how this construction enables finite convergence guarantees in the single-player case, and yields computational savings in the multi-player~case.

\section{Single-Player IREFGs}\label{sec:singleplayer}
In this section, we focus on \emph{single-player} IREFGs. 
As explained in the previous section, this setting corresponds to a polynomial optimization problem 
over a product of simplices, to which the Moment-SOS hierarchy readily applies. 
We summarize  our  results for the single-player case below. 
In addition, Appendix~\ref{appsec:complexity} provides a summary of known complexity results, 
together with a new hardness result that leverages the connection to POPs.

\begin{restatable}{theorem}{singleplayermain}\label{thm:singleplayermain} Consider a single-player IREFG $\game$ with utility function $u$. Let $\ell$ be the number of infosets,
$d_0:=\max_{j,a}\{\lceil\deg (u)/2\rceil,\lceil\deg (g^j_a)/2\rceil,\lceil\deg (h^j)/2\rceil\}$, and \(u^*\) be the ex-ante optimal value of $\game$.
Denote by \(u_d^{\mathrm{sos}}, u_d^{\mathrm{sos,kkt}}, u_d^{\mathrm{sos,vr}}\) 
the values obtained from the Moment-SOS hierarchies 
applied respectively to the vanilla product-of-simplices, KKT-based, and  vertex-restricted formulations. Similarly, we use the superscript \(\mathrm{mom}\) to denote the moment hierarchy. Then we have the following: \vspace{-2mm}
\begin{itemize}[leftmargin=2em]
\item[(i)]
$\lim_{d\to\infty} u_d^{\mathrm{sos}} =
\lim_{d\to\infty} u_d^{\mathrm{mom}} =
u^*$. \vspace{-1mm}
\item[(ii)] 
If $u$ is generic,
there exists $d\ge d_0$ with
$u^{\mathrm{mom,kkt}}_{d}=u^{\mathrm{sos,kkt}}_{d}=u^*$. \vspace{-1mm}
\item[(iii)] 
If $\game$ is non-absentminded, the degree-$(\ell{+}1)$ moment
relaxation of the vertex-restricted problem is 
exact:
% \[
$u^{\mathrm{mom,vr}}_{\ell+1}
=
u^*$. \vspace{-1mm}
\end{itemize}
\end{restatable}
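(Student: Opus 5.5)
Each of the three parts follows from a standard result in polynomial optimization, once we check that the feasible sets coming from IREFGs satisfy that result's hypotheses. Throughout, Theorem~\ref{thm:folklore} lets us treat $u$ as a polynomial on the product of simplices $\mathcal{S}$.

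For (i), we invoke Putinar's Positivstellensatz and Lasserre's convergence theorem. These require the quadratic module $Q(\mathcal{S})$ to be Archimedean, i.e.\ $N-\|\mu\|^2\in Q(\mathcal{S})$ for some $N$. On each block, $1-(\mu_a^j)^2=(1-\mu_a^j)(1+\mu_a^j)$. Moreover, $1-\mu_a^j=\sum_{b\neq a}\mu_b^j-h^j$ is a nonnegative combination of the $g^j_b$ plus a multiple of $h^j$. Then
\[
1-(\mu_a^j)^2=(1-\mu_a^j)^2+2\mu_a^j(1-\mu_a^j),
\]
and expanding $\mu_a^j(1-\mu_a^j)=\mu_a^j\sum_{b\ne a}\mu_b^j-\mu_a^j h^j$ gives the products $\mu_a^j\mu_b^j$. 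These products do not lie in $Q$ directly. To handle them, I would instead use the identity
\[
1-\|\mu^j\|^2=(\mathbf 1^\top\mu^j)^2-\|\mu^j\|^2+(\text{multiple of }h^j)=2\sum_{a<b}\mu_a^j\mu_b^j+(\cdots).
\]
Even this needs $\mu_a\mu_b\in Q$, which is still not available. The cleanest route is to write $1-\mu_a^j\in Q_1$ and note that $\mu_a^j$ and $1-\mu_a^j$ both lie in $Q$. A standard lemma (Marshall) then says that if $c\pm x_i\in Q$ for each coordinate $x_i$, the module is Archimedean. Here $\mu_a^j\ge 0$ and $1-\mu_a^j\ge0$ lie in $Q$, so $1\pm\mu_a^j\in Q$ and the lemma applies. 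Putinar then gives $u^*+\varepsilon-u\in Q_d$ for large $d$, so $u^{\mathrm{sos}}_d\downarrow u^*$. Weak duality, together with $u^{\mathrm{mom}}_d\ge u^*$ (take Dirac moments at a maximizer), sandwiches the moment values and gives the same limit.

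For (ii), we follow the argument described for the KKT hierarchy. LICQ holds everywhere on $\mathcal{S}$, as shown earlier, and this is the only ingredient \citep[Theorem~A.1]{nie2024nash} needs. Hence for generic $u$ the KKT system \eqref{eq:NE-KKT-system}, specialized to $n=1$, has finitely many complex solutions. Adding the KKT equalities to the description keeps the optimum equal to $u^*$, because LICQ makes KKT necessary at every maximizer. The finite-variety results of \citet{laurent2008sums,lasserre2008semidefinite} then give finite convergence of both the SOS and moment hierarchies. The point to be careful about is that $u$ genuinely attains $u^*$ on the KKT set, and that finiteness of the complex variety defined by the equality constraints is exactly what those theorems require.

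For (iii), the vertex-restricted set has equations $b_a^j=0$ and $h^j=0$, whose real (indeed complex) variety is the finite grid $V=\prod_j\{e_1,\dots,e_{m^j}\}$. I would argue directly that the moment relaxation is exact at degree $\ell+1$. The equations $b_a^j=0$ reduce every monomial modulo the ideal to a multilinear monomial in one coordinate per block: in-block products vanish, since $\mu_a\mu_b=\mu_a(1-\sum_{c\neq b}\mu_c)\cdots$ reduces, and $\mu_a^2=\mu_a$. Such a reduced monomial has degree at most $\ell$. Consequently, for any feasible $y$ at order $\ell+1$, the constraints $L_y(h q)=0$ with $\deg\le 2(\ell+1)$ force $L_y$ to factor through the quotient $\R[\mu]/I(V)$, and this quotient is spanned by monomials of degree at most $\ell$. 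The main obstacle is checking that the ideal generated by the $b$'s and $h$'s is real radical and that the degree-$2(\ell+1)$ truncation already contains the needed reduction identities. Given that, $M_{\ell+1}(y)\succeq0$ together with the fact that interpolation polynomials on $V$ have degree at most $\ell$ forces $L_y=\sum_{v\in V}\lambda_v\delta_v$ with $\lambda\ge0$ and $\sum_v\lambda_v=1$. We have $\lambda_v=L_y(p_v^2)\ge0$ for the Lagrange interpolators $p_v=\prod_j\mu^j_{v_j}$, which have degree $\ell$. It follows that $L_y(u)\le\max_V u=u^*$, where the last equality is the vertex-reduction argument following Proposition~\ref{prop:multilinear}.
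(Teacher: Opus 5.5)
\textbf{Parts (i) and (ii).} These are fine. Part (ii) is the paper's argument. In (i) you show that the \emph{original} quadratic module is Archimedean, via $1\pm\mu^j_a\in Q(\mathcal S)$ and the bounded-elements lemma. The paper instead appends the redundant constraint $\ell-\|\mu\|^2\ge 0$. Your version has the small advantage of not altering the vanilla hierarchy.

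\textbf{The gap is in (iii).} You make the argument conditional on the degree-$2(\ell+1)$ truncation of $\langle h^j,b^j_a\rangle$ already containing identities that kill the in-block products $\mu^j_a\mu^j_c$. That is false when an infoset has many actions. The substitution you sketch only yields the row-sum relation $\sum_{c\neq a}\mu_a\mu_c\equiv 0$, not the individual products.

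Take $\ell=1$ and one block with $m$ actions. Restrict any identity $\mu_1\mu_2=p\,h+\sum_a q_a b_a$ to $\{0,1\}^m$. On the face $z_1=z_2=1$ this forces $p(z)=1/(|z|-1)$. That is a symmetric function of $z_3,\dots,z_m$ of multilinear degree $m-2$, since its $(m-2)$-th finite difference in $|z|$ is $\pm1/(m-1)\neq0$. So membership needs degree at least $m-1$, which exceeds $2(\ell+1)=4$ once $m\ge 6$. Real-radicality of the ideal gives no degree control. Hence the linear constraints alone do not make $L_y$ factor through $\mathbb{R}[\mu]/I(V)$.

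\textbf{How to close it.} The conclusion is still reachable, and more cheaply than you planned, because you only need $L_y(u)$. Every monomial $\mu^\beta$ of $u$ uses at most one variable per block, say blocks $S$ with actions $a_j$.
\begin{itemize}
\item The difference $\mu^\beta\prod_{j\notin S}\mathbf 1^\top\mu^j-\mu^\beta$ is a combination of multiples of the $h^j$ of degree at most $\ell$. So the $h$-equalities alone give $L_y(\mu^\beta)=\sum_{v:\,v_j=a_j\,(j\in S)}L_y(p_v)$.
\item Summing over monomials, $L_y(u)=\sum_{v\in V}L_y(p_v)\,u(v)$, with $\sum_v L_y(p_v)=L_y(1)=1$.
\item The $b$-equalities give $L_y(p_v)=L_y(p_v^2)$, and this is $\ge 0$ by $M_{\ell+1}(y)\succeq 0$.
\end{itemize}
No in-block identities are needed, and the argument even works at order $\ell$.

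If you want the full decomposition $L_y=\sum_v\lambda_v\delta_v$, you must use positivity. For $\deg m'\le\ell-1$,
\[
\sum_{c\neq a}L_y\big((\mu^j_a\mu^j_c m')^2\big)=L_y(\mu^j_a h^j m'^2)-L_y(b^j_a m'^2)=0,
\]
and each term on the left is nonnegative. Hence $\mu^j_a\mu^j_c m'\in\ker M_{\ell+1}(y)$.

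\textbf{Comparison with the paper.} The paper instead proves $\rank M_{\ell+1}(y)=\rank M_\ell(y)$ and invokes Curto--Fialkow. Its rank lemma relies on exactly this in-block vanishing step, justified there only by ``$\widehat m$ vanishes on $\mathcal S_{\mathrm{vr}}$''. That step likewise needs the positivity argument above. Your repaired route avoids the flat-extension theorem altogether.
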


This result capitalizes on the connection between polynomial optimization and IREFGs. While convergence is asymptotic in general, for \emph{almost all} single-player IREFGs, convergence is instead finite. Moreover, in the special case of NAM-IREFGs, we obtain finite convergence at an \emph{explicit level} of the hierarchy which depends on the number of infosets in the game. The proof of this result is deferred to~\Cref{appsec:proofs:singleplayer}.

Crucially, we can also \emph{extract} ex-ante optima from the Moment-SOS hierarchies (the full procedure is given in~\Cref{appsec:extraction}). 
In the NAM setting, flatness occurs at a fixed order determined by the number of infosets (\Cref{lemma:rank-finite}), so extraction is guaranteed at level $\ell+1$. Hence, under genericity, the KKT-based formulation allows one to extract a solution once finite convergence~occurs.

\section{Multi-Player IREFGs}\label{sec:multiplayer}  
%\section{Sum-of-Squares for Multiplayer IREFGs}
In this section, we turn our attention to the computation of behavioral NE in \emph{multi-player} IREFGs. Unlike the single-player case, NE are not guaranteed to exist~\citep{wichardt2008existence}, and indeed even deciding if one exists is hard~\citep{tewolde2024imperfect}. 
We therefore adopt the \emph{select-verify-cut (SVC)} framework of \cite{nie2024nash}, which searches for an NE (or certifies nonexistence) by solving a sequence of polynomial subproblems with Moment-SOS relaxations.
Specifically, the SVC method searches for an NE by iterating over three steps:
\begin{itemize}[leftmargin=2em] \vspace{-2mm}
    \item[(1)] \emph{Select:} Solve a selector program over the joint (\ref{eq:NE-KKT-system}) system (plus any accumulated cuts) to pick an NE candidate. \vspace{-2mm}
    \item[(2)] \emph{Verify:} For each player, fix opponent behavior at the candidate NE and solve a unilateral best-response problem. If no player can improve, the candidate is an NE; otherwise, extract a violated valid inequality from a deviator’s best response. \vspace{-2mm}
    \item[(3)] \emph{Cut:} Add new valid inequalities to the selector and re-solve. Each cut removes the non-NE candidate while preserving all NEs. \vspace{-2mm}
\end{itemize}
The method is iterated until a candidate passes verification (NE found) or the selector becomes infeasible (nonexistence certified).
A full description of the method is given in Appendix~\ref{appsec:svc}.
Every subproblem in the SVC loop is a polynomial optimization problem, so we can solve them using the Moment-SOS hierarchy.
See also \citep[Section~4]{nie2024nash} for more details.

In general, SVC requires multiple iterations: each round solves one selector Moment-SOS relaxation and up to $n$ verification relaxations (one per player). By contrast, in NAM-IREFGs we can dispense with the verify/cut phases and solve a \emph{single} vertex-restricted Moment-SOS relaxation to compute an NE or certify nonexistence, yielding substantial computational savings.
Indeed, since $u_i(\cdot,\mu_{-i})$ is multi-affine in $\mu_i$ (Proposition~\ref{prop:multilinear}), the ``no profitable deviation by player $i$'' NE condition is equivalent to the family of vertex inequalities
$u_i(\mu_i,\mu_{-i}) \ge u_i(v_i,\mu_{-i})$ for all $v_i\in\gS_{i,\mathrm{vr}}$.
Embedding these inequalities directly into the \emph{Select} step gives the one-shot vertex-restricted program:
\begin{equation}\label{eq:NE-multi-NAM}
\begin{aligned}
&\min_{\mu}\ \varphi_\Theta(\mu):=[\mu]_1^\top\Theta[\mu]_1 \\
&\text{s.t.}\,
\begin{cases}
\!\mu \text{ satisfies (\ref{eq:NE-KKT-system})},\\
\!u_i(\mu_i,\mu_{-i})\!-\!u_i(v_i,\mu_{-i})\!\ge\!0, \forall v_i\in\gS_{i,\mathrm{vr}},\forall i,
\end{cases}
\end{aligned}
\end{equation}
where $[\mu]_1=(1,\mu^\top)^\top$ and $\Theta\succ0$ is generic.
Feasibility of \Cref{eq:NE-multi-NAM} is equivalent to the existence of a behavioral NE; any feasible point is an NE, and the objective $\varphi_\Theta$ merely selects one among multiple solutions (if any).
For instance, replacing $\varphi_\Theta$ with $\max_{\mu}\sum_{i=1}^n u_i(\mu)$ under the same constraints yields a \emph{welfare-maximizing} NE.
Therefore, in the NAM setting, the verify and cut phases are unnecessary.

\begin{restatable}{theorem}{multiplayermain}\label{thm:multiplayermain}
Let $\game$ be a multi-player IREFG with utility functions $u_i$ for each player $i$. Throughout, subproblems are solved by the KKT-based hierarchies of increasing order. Then, we have the following: \vspace{-1mm}
\begin{itemize}[leftmargin=2em]
    \item[(i)] 
    The SVC procedure is asymptotically exact: as the relaxation order and number of iterations grow, it returns a behavioral NE when one exists, and otherwise a certificate of nonexistence. \vspace{-1mm}
    \item[(ii)] 
    If $u_i$ are all generic, the KKT-based hierarchy has finite convergence for all SVC subproblems, and the SVC loop terminates in finitely many iterations. \vspace{-1mm}
    \item[(iii)] If $\game$ is non-absentminded, the Verify/Cut phases in SVC are unnecessary: a single vertex-restricted Select (\Cref{eq:NE-multi-NAM}) suffices to compute an NE or certify nonexistence. Its Moment-SOS hierarchy is asymptotically exact; if $u_i$ are generic, it attains exactness at a finite order. \vspace{-1mm}
\end{itemize}
\end{restatable}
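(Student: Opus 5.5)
The three parts rest on three different ingredients. For (i) and (ii) we reduce to the convergence theory of the SVC framework of \cite{nie2024nash}; our task there is to check that its hypotheses hold for products of simplices. For (iii) we use the multi-affinity of Proposition~\ref{prop:multilinear}.

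\emph{Part (i).} By Theorem~\ref{thm:folklore}, every $u_i$ is a polynomial and every $\mathcal S_i$ is a product of simplices. We have already shown that LICQ holds at every feasible point. Hence every behavioral NE satisfies (\ref{eq:NE-KKT-system}), and the Lagrange multipliers are the explicit polynomials $\nu_i^j(\mu)$ and $\lambda_i^j(\mu)$. The joint KKT set, together with any accumulated cuts, is therefore a compact basic semi-algebraic set. Its quadratic module is Archimedean, because it contains $1-\sum_{i,j,a}(\mu_{i,a}^j)^2$: this follows from $\mu_{i,a}^j\ge0$ and $\mathbf 1^\top\mu_i^j=1$, since $\mu_{i,a}^j(1-\mu_{i,a}^j)=\mu_{i,a}^j\sum_{b\ne a}\mu^j_{i,b}$ lies in the module after multiplying by a suitable SOS or constant. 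The same holds for each verification problem over $\mathcal S_i$. Putinar's Positivstellensatz then gives asymptotic convergence of each selector and verification relaxation, as in \cite{lasserre2001global}. Next we invoke the SVC correctness argument of \citep[Section~4]{nie2024nash}. Each cut $u_i(\mu)\ge u_i(\hat\mu_i,\mu_{-i})$ is valid for all NE and strictly excludes the current candidate. Compactness and the minimization of the generic $\varphi_\Theta$ ensure that limits of candidates are handled correctly. So the loop either returns an NE or makes the selector infeasible, and infeasibility is detected by an infeasible relaxation at some finite order.

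\emph{Part (ii).} The key step is to transport \citep[Theorem~A.1]{nie2024nash}: for generic $u_i$, the complex KKT variety is finite. As noted before the statement, their proof uses genericity of the constraints only to obtain LICQ, and LICQ holds for simplices. So we would re-run their dimension-counting argument on each active-set pattern. For a fixed pattern of zero coordinates, the stationarity equations define an incidence variety in the joint $(\mu,c)$ space. Projecting this variety to coefficient space and applying a generic-fibre (Sard/Bertini-type) argument gives finitely many complex solutions for $c$ outside a proper subvariety. Taking the union over the finitely many patterns preserves genericity. Once the variety is finite, \cite{laurent2008sums,lasserre2008semidefinite} give finite convergence and flatness for the selector. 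The verification problems are single-player problems whose objective is $u_i(\cdot,\hat\mu_{-i})$. Their genericity depends on $\hat\mu_{-i}$; this is the main obstacle. We would follow Nie--Tang: fix the finitely many candidates, which are KKT points of the generic game, and argue that the restricted objectives remain generic. Failing that, we would use the fact that verification at finitely many candidates only requires finite convergence of the KKT relaxation, whose real variety is finite. Finally, the loop terminates because each cut removes at least one point of the finite KKT set, so at most finitely many iterations occur.

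\emph{Part (iii).} By Proposition~\ref{prop:multilinear}, $\mu_i\mapsto u_i(\mu_i,\mu_{-i})$ is multi-affine, so its maximum over $\mathcal S_i$ is attained on $\mathcal S_{i,\mathrm{vr}}$, by the blockwise vertex-replacement argument. Hence $\mu$ is an NE if and only if it satisfies the finitely many vertex inequalities; note that $\mathcal S_{i,\mathrm{vr}}$ is finite. The KKT constraints are redundant but valid. So the feasible set of \eqref{eq:NE-multi-NAM} equals the NE set exactly, and no verify or cut step is needed. This feasible set is compact and its description contains the simplex constraints, so it is Archimedean and part (i) gives asymptotic exactness. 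An infeasible set is certified at some finite order via Putinar's theorem applied to $-1>0$. Under genericity, the set lies inside the finite KKT variety from part (ii). Finite convergence of the relaxation of \eqref{eq:NE-multi-NAM} then follows from the same finite-variety results, since adding inequalities to a description whose equality part has a finite complex variety preserves finite convergence \cite{laurent2008sums}.
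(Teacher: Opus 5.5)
\paragraph{Overall.} Your outline follows the paper's proof in all three parts. For (i) it uses an Archimedean module, Putinar/Lasserre, and the valid-cut logic. For (ii) a finite joint KKT variety gives finite convergence of the selector and a strictly shrinking candidate set. For (iii) multi-affinity turns ``no profitable deviation'' into vertex inequalities, so the feasible set of \Cref{eq:NE-multi-NAM} is exactly the NE set.

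\paragraph{The gap: verification in (ii).} The real gap is the step you flag yourself, finite convergence of the \emph{verification} subproblems, and neither of your proposed fixes works.

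\begin{itemize}
\item At a KKT candidate, the restricted objective $u_i(\cdot,\hat\mu_{-i})$ is not generic.
\item The real KKT set of \Cref{eq:deviation-IREFG} need not be finite, even for generic games.
\end{itemize}

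For a concrete case, take any non-absentminded game in which player $i$ has a single infoset, so $u_i$ is affine in $\mu_i$. Suppose $\hat\mu_i$ mixes two actions. Its own KKT conditions force $\nabla_{\mu_i}u_i(\hat\mu)$ to attain its maximum on the whole support. This gradient does not depend on $\mu_i$, so every point of the face spanned by the support is a KKT point of the verification problem. Mixed candidates occur on open sets of games, so genericity does not exclude this.

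The paper's own proof does not close this hole either. It shows finiteness only for the selector's feasible set and then asserts finite convergence for all subproblems.

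\paragraph{A possible repair.} What can replace finiteness is the mechanism behind Nie's finite-convergence results for tight relaxations with Lagrange multiplier expressions; the KKT-based hierarchy is an instance of these. The argument runs as follows.
\begin{itemize}
\item Fix a zero pattern $A$. KKT points of $\max_{\mathcal S_i}u_i(\cdot,\hat\mu_{-i})$ with that pattern are critical points of this polynomial on the affine space $\{\mu^j_{i,a}=0\ (a\in A),\ h^j_i=0\ \forall j\}$.
\item By semialgebraic Sard, the objective therefore takes only finitely many values on the real KKT set.
\item Combined with the Archimedean property, this should give finite convergence for every $\hat\mu_{-i}$, with no genericity needed.
\end{itemize}
Check the precise hypotheses of that theorem before relying on it.

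\paragraph{Smaller points.}
\begin{itemize}
\item In (i), $1-\sum_{i,j,a}(\mu^j_{i,a})^2$ is negative at every pure profile once there are two infosets, so it is not in the quadratic module. Either add $\ell_0-\|\mu\|^2\ge0$ as a redundant constraint, as the paper does, or derive $1-(\mu^j_{i,a})^2\in Q$ from $2(1-x^2)=(1+x)^2(1-x)+(1-x)^2(1+x)$ together with $1-\mu^j_{i,a}=\sum_{b\ne a}\mu^j_{i,b}-h^j_i$.
\item Also in (i), the cut must use the verifier's profitable deviation $v_i$. As written, $u_i(\mu)\ge u_i(\hat\mu_i,\mu_{-i})$ holds with equality at $\hat\mu$ and removes nothing.
\item In (iii), ``the finite KKT variety from part (ii)'' does not transfer verbatim. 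Multi-affine utilities form a proper linear subspace, hence a null set, of all polynomials of their degree. Genericity must therefore be taken within the multi-affine class, and your incidence-variety count rerun there. The count goes through, because the monomials $\mu^j_{i,a}$ are multi-affine and give each block gradient a free constant offset. The paper is silent on this point too.
\end{itemize}
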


% \emph{Proof sketch.} (i) combines the standard asymptotic convergence of Lasserre's hierarchy with the SVC steps: each loop either eliminates a non-NE candidate via a violated deviation inequality or certifies NE (or non-existence via selector infeasibility) \citep{nie2024nash}. 
% (ii) then follows from the fact that, under generic utilities, the KKT system has finitely many solutions, so only finitely many candidates can appear in SVC, and on such finite KKT sets the KKT-based SOS hierarchy is known to converge in finite order. 
% (iii) uses the fact that in the non-absentminded case, “no profitable deviation” is equivalent to “no profitable vertex deviation”, so a single vertex-restricted Select already encodes the NE conditions. Its Moment–SOS hierarchy is asymptotically exact, and under generic utilities, finite exact.

The proof is deferred to~\Cref{appsec:proofs:multiplayer}. In analogy to the single-player case,~\Cref{thm:multiplayermain} shows that finite convergence guarantees can be obtained for \emph{almost all} IREFGs.
Under generic utilities, the number of SVC iterations is finite and is at most the number of feasible real KKT candidates, since each failed candidate is permanently excluded by a valid cut. This yields an implicit worst-case bound, though it is combinatorial/algebraic in nature and may still grow rapidly with game size. Comparatively,
NAM-IREFGs enjoy significant computational improvements, since only a single Moment-SOS hierarchy suffices for convergence. However, unlike the single-player case, explicit convergence at a fixed game-dependent level is not guaranteed, further highlighting the challenging nature of multi-player IREFGs.

\section{(SOS)-Concave and (SOS)-Monotone IREFGs}\label{sec:monotone}
In this section, we focus on defining tractable subclasses of IREFGs, and show how our methods obtain improved convergence guarantees in these subclasses. In the study of continuous games, the seminal work of~\cite{rosen1965existence} introduced concave and monotone games, which exhibit desirable properties---in concave games, NE always exist, and in strictly monotone games, a unique NE exists. Leveraging the connection between IREFGs and polynomials, the following definitions are immediate:

% \begin{definition}[Concave/Monotone IREFGs]\label{def:concave-irefgs}

An IREFG $\game$ is \emph{concave} if, for every $i\in\mathcal{N}$, the map
$\mu_i\mapsto u_i(\mu_i,\mu_{-i})$ is concave on $\mathcal S_i$ for every fixed
$\mu_{-i}\in \gS_{-i}$.  Since $\game$ is polynomial, this is equivalent to the block Hessians being negative semidefinite:
% \begin{equation}\label{eq:concave-irefgs}
$\mathbf H_{u_i}(\mu):=\nabla^2_{\mu_i}u_i(\mu)\preceq0,
\ \forall\mu\in\mathcal S,\ \forall i\in\mathcal{N}$.

Following~\cite{rosen1965existence}, we collect the block partial derivatives into the
pseudo-gradient:
% \[
$v(\mu)\;:=\;\big(\nabla^\top_{\mu_1}u_1(\mu),\,\ldots,\,\nabla^\top_{\mu_n}u_n(\mu)\big)^\top$.
% \]
Let $\J(\mu):=\nabla v(\mu)$ denote its Jacobian, and define the
symmetrized Jacobian
% \[
$\SJ(\mu)\;:=\;\tfrac12\big(\J(\mu)+\J(\mu)^\top\big)$.
Then, an IREFG $\game$ is \emph{monotone} if
% \begin{equation}\label{eq:monotone-irefgs}
$\langle v(\mu)-v(\nu),\,\mu-\nu\rangle \le 0,\ \forall \mu,\nu\in\mathcal S $.
% \end{equation}
Since $\game$ is polynomial, monotonicity holds if and only if the
symmetrized Jacobian of $v$ is negative semidefinite \citep{rockafellar1998variational}:
% \begin{equation}\label{eq:jacobian-monotone-irefgs}
$\SJ(\mu)\preceq0,
\ \forall\,\mu\in\mathcal S$.
% \end{equation}
% \end{definition}

While monotonicity immediately implies concavity, the converse does not hold.  
Moreover, we can define strict notions of concavity and monotonicity where the block Hessians and symmetrized Jacobian of a game are negative definite, respectively. With these notions in place, we recall the classical existence/uniqueness
guarantees \citep[Theorems~1 and 2]{rosen1965existence} tailored to our setting.

\begin{proposition}\label{prop:exist-unique-NE} 
Let $\mathcal S$ be nonempty, convex, and compact, and let each $u_i$ be continuous.
\begin{enumerate}[leftmargin=2em]  \vspace{-2mm}
\item[(i)] If the IREFG $\game$ is concave, then a behavioral Nash equilibrium exists. \vspace{-2mm}
\item[(ii)] If the IREFG $\game$ is strictly monotone, then the behavioral Nash equilibrium is unique. \vspace{-2mm}
\end{enumerate}
\end{proposition}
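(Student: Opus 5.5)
The plan is to reduce both parts to Rosen's classical arguments. This works because, by Theorem~\ref{thm:folklore}, the IREFG is a continuous (indeed polynomial) game on the compact convex product of simplices $\mathcal S=\bigtimes_i \mathcal S_i$.

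\textbf{Part (i).} We use Kakutani's fixed-point theorem on the best-response correspondence
\[
B(\mu):=\bigtimes_{i\in\mathcal N}\argmax_{\nu_i\in\mathcal S_i} u_i(\nu_i,\mu_{-i}).
\]
The steps are as follows.
\begin{enumerate}
\item For each $\mu$, the set $B_i(\mu)$ is nonempty, because $\mathcal S_i$ is compact and $u_i$ is continuous.
\item $B_i(\mu)$ is convex, because $\nu_i\mapsto u_i(\nu_i,\mu_{-i})$ is concave by the concavity assumption, and the superlevel set at the maximum value is convex.
\item $B$ has a closed graph. This follows from Berge's maximum theorem: the constraint correspondence $\mu\mapsto\mathcal S_i$ is constant and compact, and $u_i$ is jointly continuous.
\item Kakutani then gives $\mu^*\in B(\mu^*)$, and such a fixed point is exactly a behavioral NE.
\end{enumerate}
Alternatively, one could follow Rosen directly with the Nikaido--Isoda function $\Phi(\mu,\nu)=\sum_i u_i(\nu_i,\mu_{-i})$, which is concave in $\nu$. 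That route would also need Kakutani.

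\textbf{Part (ii).} We argue by contradiction. Suppose $\mu\neq\nu$ are two NEs. By the KKT discussion preceding (\ref{eq:NE-KKT-system}), each $\mu_i$ maximizes the differentiable function $u_i(\cdot,\mu_{-i})$ over the convex set $\mathcal S_i$. The first-order necessary condition therefore gives
\[
\langle \nabla_{\mu_i}u_i(\mu),\,\nu_i-\mu_i\rangle\le0,
\]
and symmetrically $\langle \nabla_{\mu_i}u_i(\nu),\,\mu_i-\nu_i\rangle\le0$. Summing over $i$ and over both inequalities yields
\[
\langle v(\mu)-v(\nu),\,\mu-\nu\rangle\ge0 .
\]
Strict monotonicity, in the sense that $\SJ\prec0$ on $\mathcal S$, should give the strict reverse inequality. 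To see this, set $\gamma(t)=\nu+t(\mu-\nu)$, which lies in $\mathcal S$ by convexity. Then
\[
\langle v(\mu)-v(\nu),\mu-\nu\rangle=\int_0^1 (\mu-\nu)^\top \SJ(\gamma(t))(\mu-\nu)\,dt<0
\]
for $\mu\neq\nu$. This contradicts the previous display, so the NE is unique.

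\textbf{Main obstacle.} The delicate point is that the strategy set lies in an affine subspace, namely the hyperplanes $\mathbf 1^\top\mu_i^j=1$. A literal negative definiteness of $\SJ$ on the full ambient space is therefore stronger than needed. Also, $\mu-\nu$ lies in the tangent space $\{d:\mathbf 1^\top d_i^j=0\}$, so we only need $d^\top\SJ d<0$ there. I would state the argument with that tangent-space reading and note that it is implied by the definition given.

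Existence of some NE under strict monotonicity is not needed for (ii), since uniqueness is the only claim there; in any case, strict monotonicity implies concavity, so (i) also supplies existence.
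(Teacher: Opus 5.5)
Your proof is correct and follows essentially the same route as the paper, which gives no proof of its own and simply recalls Rosen's Theorems~1 and~2. Existence is via Kakutani's theorem: Rosen applies it to the argmax correspondence of the Nikaido--Isoda-type function, which for a product set like $\mathcal S$ coincides with your best-response correspondence. Uniqueness is via summing the first-order optimality conditions at two equilibria and contradicting strict monotonicity, where your segment-integral step plays the role of Rosen's Theorem~6 (a negative definite symmetrized Jacobian implies diagonal strict concavity), and your tangent-space remark is a harmless sharpening.
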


With respect to the SVC method utilized in~\Cref{sec:multiplayer}, when an IREFG is concave, a joint profile $\mu$ is a Nash equilibrium \emph{if and only if} it is a (\ref{eq:NE-KKT-system}) point. 
% Hence a single run of \Cref{eq:selector-IREFG} (with no cuts) already decides the problem:
% feasibility returns an NE, while infeasibility certifies nonexistence. 
Since the strategy set $\gS$ is nonempty, convex, and compact, Proposition~\ref{prop:exist-unique-NE} guarantees existence, so the \emph{Select} step 
is always feasible and returns an NE without any cuts.

\begin{corollary}
If the IREFG $\game$ is concave, then any feasible point of the Select step
% \Cref{eq:selector-IREFG}
(without any cuts) is an NE. Furthermore, if $\game$ is strictly monotone, the NE is unique and the Select step returns that unique equilibrium.
\end{corollary}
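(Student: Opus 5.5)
The argument has two parts. First, every feasible point of the cut-free Select step is a Nash equilibrium. Second, under strict monotonicity the equilibrium is unique, so the Select step returns exactly it.

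\textbf{Part one: feasible KKT points are equilibria.} Let $\mu$ be feasible for the Select step, so $\mu$ satisfies (\ref{eq:NE-KKT-system}). Fix a player $i$ and the opponents' profile $\mu_{-i}$.

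\begin{itemize}
\item For each infoset $j$, the multipliers $\nu_i^j(\mu)$ and $\lambda_i^j(\mu)\ge0$ satisfy stationarity, complementary slackness, primal feasibility, and dual feasibility for the problem $\max_{\mu_i\in\mathcal S_i}u_i(\mu_i,\mu_{-i})$.
\item The constraints of this problem are affine: the equalities $h_i^j$ and the nonnegativity constraints $g_{i,a}^j$. Hence the Lagrangian $u_i(\cdot,\mu_{-i})+\sum_j\nu_i^j h_i^j+\sum_{j,a}\lambda_{i,a}^j g_{i,a}^j$ is concave in $\mu_i$ whenever $u_i(\cdot,\mu_{-i})$ is concave, which holds by the concavity assumption (block Hessian $\mathbf H_{u_i}\preceq0$ on $\mathcal S$).
\item A stationary point of this concave Lagrangian is therefore its global maximizer. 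Complementary slackness and feasibility then give the standard chain, for every $\mu_i'\in\mathcal S_i$:
\[
u_i(\mu_i',\mu_{-i}) \le u_i(\mu_i',\mu_{-i})+\textstyle\sum_{j,a}\lambda_{i,a}^j g_{i,a}^j(\mu_i') = L(\mu_i') \le L(\mu_i)=u_i(\mu).
\]
\item Equivalently, concavity gives $u_i(\mu_i',\mu_{-i})\le u_i(\mu)+\langle w_i(\mu),\mu_i'-\mu_i\rangle$. Substituting $w_i^j=-\nu_i^j\mathbf 1-\lambda_i^j$ shows $\langle w_i^j,\mu_i'^j-\mu_i^j\rangle=-\langle\lambda_i^j,\mu_i'^j\rangle\le0$. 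This uses that both blocks sum to one and that $\langle\lambda_i^j,\mu_i^j\rangle=0$.
\end{itemize}

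So no player has a profitable unilateral deviation, and $\mu$ is a Nash equilibrium.

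\textbf{Part two: uniqueness under strict monotonicity.} Strict monotonicity means $\SJ\prec0$ on $\mathcal S$. This implies concavity, because the diagonal blocks of $\SJ$ are exactly the block Hessians $\mathbf H_{u_i}$. So part one applies. Proposition~\ref{prop:exist-unique-NE}(i) gives existence of an equilibrium and (ii) gives uniqueness.

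The Select feasible set is contained in the set of KKT points. By part one, every KKT point is an equilibrium. Conversely, LICQ holds everywhere on $\mathcal S$, so every equilibrium is a KKT point. The feasible set is therefore exactly the singleton $\{\mu^*\}$, and any optimizer of the selector objective is $\mu^*$.

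\textbf{Main subtlety.} The concavity hypothesis holds on $\mathcal S$ but not necessarily on all of $\mathbb R^{m}$. This is harmless: the tangent-inequality argument only uses concavity along segments inside the convex set $\mathcal S_i$. A second point is the claim that strict monotonicity implies strict concavity blockwise. It follows because principal submatrices of a negative definite matrix are negative definite.

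Finally, "returns" should be read at the level of the exact Select problem. Exactness of its relaxation is handled separately by the earlier convergence results (Theorem~\ref{thm:multiplayermain}).
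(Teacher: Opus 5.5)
Your proof is correct and follows essentially the same route as the paper. The paper argues that under concavity a profile is an NE if and only if it satisfies (\ref{eq:NE-KKT-system}), with sufficiency coming from concavity and necessity from LICQ, and then invokes Rosen's existence and uniqueness results (Proposition~\ref{prop:exist-unique-NE}); you supply the details the paper leaves implicit, namely the explicit KKT-sufficiency chain and the observation that the diagonal blocks of $\SJ$ are the block Hessians, so strict monotonicity implies concavity.
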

While concavity and monotonicity are widely studied, it is in general hard to \emph{verify} these properties~\citep{ahmadi2013np,leon2025certifying}. Towards improving the tractability of these classes,~\cite{helton2010semidefinite} introduced `effective' SOS-variants thereof, which are verifiable using a single SDP. In particular, we call
% \begin{definition}[SOS-Concave/Monotone IREFGs]\label{def:SOS-irefgs}
an IREFG $\game$ \emph{SOS-concave} if, for every $i\in\mathcal{N}$, the negative block Hessian
is an SOS-matrix polynomial on $\mathcal S$, i.e., there exists a real matrix polynomial $F_i(\mu)$ such that
% \[
$-\mathbf H_{u_i}(\mu)=F_i(\mu) F_i(\mu)^\top,\ \forall \mu\in\mathcal S,\ \forall i\in\mathcal{N}$.
% \]
Furthermore, we call $\game$ \emph{SOS-monotone} if its negative symmetrized Jacobian is an
SOS-matrix polynomial, i.e., there exists a real matrix polynomial $F(\mu)$ such that
% \[
$-\SJ(\mu)=F(\mu)F(\mu)^\top,\ \forall\mu\in\mathcal S$.
% \]

These are subclasses of concave and monotone IREFGs respectively, so SOS-concavity implies concavity and SOS-monotonicity implies monotonicity. However, the converse does not hold since there exist polynomials which are convex but not SOS-convex~\citep{ahmadi2012convex}.
% Both properties are checkable by a single semidefinite program \citep{helton2010semidefinite}.
% \end{definition}
Going forward, we explore how these notions can be used to further improve the convergence guarantees of our methods in single-player IREFGs. 
% In these games, since $\mathcal S$ is compact and $u$ is continuous,
% $\arg\max_{\mathcal S} u\neq\emptyset$, and so ex-ante optima always exist. 
We note that the definitions of concavity and monotonicity coincide in the single-player case (cf.~\Cref{prop:monotone-concave}). We show that for strict and SOS-concave/monotone single-player IREFGs, our proposed methods obtain stronger convergence guarantees:

\begin{restatable}{theorem}{finiteconvergenceDSC}\label{thm:finiteconver_DSC}
Consider a single-player IREFG $\game$ with utility function $u$. Let $d_0:=\max_{j,a}\{\lceil\deg (u)/2\rceil,\lceil\deg (g^j_a)/2\rceil,\lceil\deg (h^j)/2\rceil\}$. Then, the following holds: \vspace{-2mm}
\begin{enumerate}[leftmargin=2em] 
    \item[(i)] If $\game$ is strictly concave/monotone, then the Moment-SOS hierarchy has finite convergence: there exists $d\ge d_0$ such that $u^{\mathrm{sos}}_{d}=u^{\mathrm{mom}}_{d}=u^*$.  \vspace{-1mm}
    \item[(ii)] If $\game$ is SOS-concave/SOS-monotone, the degree-$d_0$ Moment-SOS relaxations are exact: $u_{d_0}^{\mathrm{mom}} = u_{d_0}^{\mathrm{sos}} = u^*$, i.e., the Moment-SOS hierarchy converges at the first level. \vspace{-2mm}
\end{enumerate}
\end{restatable}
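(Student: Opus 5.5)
Both parts reduce to known results for convex polynomial optimization, specialized to a single-player IREFG. By Theorem~\ref{thm:folklore}, this is $\max\{u(\mu):\mu\in\mathcal S\}$ over a product of simplices, and $\mathcal S$ is described by the affine constraints $g^j_a=\mu^j_a\ge 0$ and $h^j=\mathbf 1^\top\mu^j-1=0$. Since $\mathcal S$ is compact, we may add the redundant ball constraint $R-\|\mu\|^2\ge0$, or observe that $1-\sum_a\mu^j_a{}^2$ already lies in the quadratic module. Either way the quadratic module is Archimedean. Moreover, LICQ holds at every feasible point, as shown in Section~\ref{sec:gamestopoly}. In the single-player case, concavity and monotonicity coincide (Proposition~\ref{prop:monotone-concave}), since the pseudo-gradient is $\nabla u$ and $\SJ=\nabla^2 u$. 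So it suffices to treat the concave case and the SOS-concave case.

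\textbf{Part (i).} Let $\mu^*$ be a maximizer; one exists by compactness. Strict concavity makes it unique. By LICQ, KKT multipliers $\nu^j,\lambda^j$ exist at $\mu^*$. I will then verify the conditions of Nie's finite-convergence theorem (Nie 2014, ``Optimality conditions and finite convergence of Lasserre's hierarchy''): LICQ, strict complementarity, and the second-order sufficient condition at the global maximizer. Under these conditions together with the Archimedean property, the SOS hierarchy converges finitely.

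The second-order condition follows directly from strict concavity, since $\nabla^2u(\mu^*)\prec0$ on the whole space and hence on the critical cone. Strict complementarity is the main obstacle, because it can fail: a multiplier $\lambda^j_a$ may vanish at an active constraint $\mu^j_a=0$. I would handle this in one of two ways. The first is to invoke the refinement due to Nie and to de Klerk--Laurent: for a convex problem with a strictly concave objective and linear constraints, the Lagrangian $u^*-u+\sum\lambda g+\sum\nu h$ is a convex polynomial vanishing only at $\mu^*$, with a strictly positive definite Hessian. Such a polynomial admits a Putinar representation; the strict-Hessian convex case is covered by results of de Klerk--Laurent and Lasserre on convex POPs. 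The second is to argue via boundary Hessian conditions, using the fact that the constraints are affine. Once the SOS value equals $u^*$, weak duality $u^*\le u^{\rm mom}_d\le u^{\rm sos}_d$ gives equality for the moment relaxation as well.

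\textbf{Part (ii).} I follow Lasserre's argument for SOS-convex programs. Let $y$ be feasible for the degree-$d_0$ moment relaxation, and set $\bar\mu:=(L_y(\mu_k))_k$. Because the constraints are linear, $L_y(h^j)=0$ and $L_y(g^j_a)\ge0$ give $\bar\mu\in\mathcal S$. The key step is a Jensen inequality $L_y(u)\le u(\bar\mu)$. To prove it, write
\[ u(\bar\mu)-u(\mu)-\nabla u(\bar\mu)^\top(\bar\mu-\mu) = (\mu-\bar\mu)^\top\Big(\int_0^1\!\!\int_0^t -\nabla^2u(\bar\mu+s(\mu-\bar\mu))\,ds\,dt\Big)(\mu-\bar\mu). \]
By SOS-concavity, $-\nabla^2u$ is an SOS matrix, so the right-hand side is an SOS polynomial in $\mu$ of degree at most $\deg u\le 2d_0$. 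Applying $L_y$ with $M_{d_0}(y)\succeq0$ makes it nonnegative. The linear term has $L_y(\bar\mu-\mu)=0$, so $L_y(u)\le u(\bar\mu)\le u^*$. Hence $u^{\rm mom}_{d_0}\le u^*$, and the reverse inequality holds trivially.

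For the SOS side, strong duality holds because the moment feasible set has nonempty interior (take the moments of the uniform measure on $\mathcal S$ relative to the affine hull). Alternatively, one can directly exhibit the certificate $u^*-u=\text{SOS}+\sum\lambda^j_a g^j_a+\sum \nu^j h^j$ using the KKT multipliers at $\mu^*$ and the same integral identity at $\mu^*$. In that certificate, concavity gives $\nabla u(\mu^*)^\top(\mu-\mu^*)=-\sum\lambda^\top g-\sum\nu h$ by the KKT conditions, which yields $u^{\rm sos}_{d_0}=u^*$.

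One caveat: SOS-concavity is assumed only on $\mathcal S$, so the matrix SOS factorization must hold as a polynomial identity. Since $\mathcal S$ has nonempty relative interior, I will restrict to the affine hull and parametrize it, which makes the identity hold there.
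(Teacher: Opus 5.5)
Your part (i) ends up on the paper's route. Once you correctly drop Nie's LICQ/SC/SOSC theorem (strict complementarity can fail on a face of $\mathcal S$), the Lasserre / de Klerk--Laurent strictly-convex argument you cite is exactly what the paper carries out. The Lagrangian vanishes at $\mu^*\in\mathcal S$, so scalar Putinar does not apply. The representation instead comes from the Taylor identity $G(\mu)=(\mu-\mu^*)^\top F(\mu,\mu^*)(\mu-\mu^*)$, where $F(\mu,\mu^*)=\int_0^1\!\int_0^t\nabla^2G(\mu^*+s(\mu-\mu^*))\,ds\,dt\succeq\tfrac{\delta}{2}I$ on $\mathcal S$. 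This uses $-\nabla^2u\succeq\delta I$ on $\mathcal S$, compactness, and convexity of $\mathcal S$. One then applies the matrix Putinar Positivstellensatz to $F$ and sandwiches by $\mu-\mu^*$. Mind the sign too: with the KKT convention you use in (ii), the Lagrangian is $G=u^*-u-\sum\lambda^j_ag^j_a-\sum\nu^jh^j$. With your plus signs, $\nabla G(\mu^*)\neq0$ in general, and a representation of that polynomial would not give one for $u^*-u$.

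For part (ii), your main argument differs from the paper's. The paper works only on the SOS side. It combines the SOS-convexity lemma at $\mu^*$ with the KKT multipliers (the constraints are affine) to get $u^*-u=\sigma_0+\sum\lambda^j_ag^j_a+\sum\nu^jh^j\in Q_{d_0}(\mathcal S)$, and weak duality then fixes both values. Your Jensen bound works on the moment side and gives more: for every optimal $y$ at order $d_0$, $\bar\mu=(y_{e^j_a})_{j,a}$ is ex-ante optimal. That is the first-moment extraction the paper asserts after the theorem but only cites.

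It does not give $u^{\mathrm{sos}}_{d_0}=u^*$, however, and your strong-duality justification fails as stated. The constraint $L_y(h^jq)=0$ with $q=h^j$ forces $L_y((h^j)^2)=0$. So the coefficient vector of $h^j$ lies in the kernel of $M_{d_0}(y)$ for every feasible $y$, the uniform measure included, and the unreduced SDP has no Slater point. You would have to eliminate one variable per simplex first and check that this leaves $u^{\mathrm{sos}}_{d_0}$ unchanged. The simpler fix is your alternative direct certificate, which is exactly the paper's proof.

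Your closing caveat is a genuine point the paper glosses over. The paper's definition only requires $-\nabla^2u=FF^\top$ on $\mathcal S$, but the SOS-convexity lemma it invokes needs a polynomial identity on $\mathbb{R}^m$. Reparametrizing the affine hull, where the identity holds by Zariski density, repairs this for both routes. You then pull the certificate back modulo the affine $h^j$, and the degrees stay within $2d_0$.
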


The proof of the above result is deferred to~\Cref{appsec:proofs:monotone}.
% The proof follows the same overall line of argument as \citep[Theorems~3.3 and 3.4]{lasserre2009convexity}, but adapted to our IREFG setting; full details are given in~\Cref{appsec:proofs:monotone}.
% The proof of the above result is deferred to~\Cref{appsec:proofs:monotone}. 
In the case of strictly concave/monotone single-player IREFGs, we obtain a sharper, finite-time guarantee of convergence to the unique ex-ante optimum without needing to modify the vanilla Moment-SOS hierarchy. Moreover, in SOS-concave/SOS-monotone single-player IREFGs, we show that convergence is possible at the first level of the hierarchy, requiring only a single SDP. Furthermore, by leveraging known results~\cite{laurent2008sums,lasserre2009convexity}, once the relaxation is exact, optimal solutions can be obtained directly from the first moments. In particular, for any optimal moment vector $y^*$,
$\mu^*=(y^*_{e^j_a})_{j,a}$. Thus, the extraction procedure as outlined in~\Cref{appsec:extraction} is no longer necessary, enabling efficient extraction of ex-ante optima.

\section{Experiments}\label{sec:empirical_illustration}
In this section, we present two empirical studies. 
First, we compare certified global optimization via Moment-SOS with a standard local first-order baseline. 
Second, we study the empirical scaling behavior of SOS as the number of information sets and the polynomial degree increase. All experiments in this section use randomly generated single-player IREFG instances represented by their polynomial ex-ante utilities over products of simplices. Each game instance has $\ell$ infosets, with $3$ actions per infoset, and hence a behavioral strategy space given by a product of $\ell$ probability simplices. 
For each setting, we generate $50$ random utility polynomials $u(\cdot)$ with maximum total degree $D_u$, with coefficients sampled uniformly from $[-1,1]$. 
In the SOS implementation, we use the simplex equality constraints to eliminate one redundant variable per information set before constructing the SDP relaxation.
Additional examples and implementation details that support the statements in our main theorems, as well as experiments on a generalized absentminded-driver benchmark with multiple infosets and multiple actions, are provided in~\Cref{appsec:examples}.

\subsection{SOS Convergence vs. Local Methods}
\label{subsec:sos_pgd}
To our knowledge, there is no other general computational baseline for ex-ante global optima or behavioral Nash equilibria in IREFGs. We therefore use projected gradient descent/ascent (PGD) as a local-search baseline since it is known to converge to KKT-type solutions~\cite{fearnley2022complexity} and was used in recent work focused on local optima in imperfect-recall games~\cite{tewolde2026decision}. This experiment thus serves to contrast globally-certified optimization via Moment-SOS with a representative local method.

As a local-search baseline, PGD maximizes $u$ over the product of simplices. 
Let $z \in \mathbb{R}^{3\ell}$ denote the concatenation of all behavioral variables. For each game instance, PGD uses $100$ random interior initializations and iterates $z_{t+1} = z_t + \eta \nabla u(z_t)$ with step size $\eta = 0.02$, followed by Euclidean projection onto the feasible region (simplex projection applied independently at each infoset). We terminate when $\|z_{t+1}-z_t\|_2 < 10^{-8}$ or after $5000$ iterations, reporting the empirical probability of attaining \emph{global optima}, i.e., the fraction of initializations that reach a global optimum.

The comparison is summarized in Table~\ref{tab:SOS_PGD}. Across all reported games, SOS \emph{always} recovers an atomic solution, yielding a certificate of global optimality at a relatively modest Moment-SOS order. In our setting, the globally optimal solution is already attained with SOS truncation degree $D_{\mathrm{SOS}}=D_u$, matching the utility polynomial degree. In contrast, PGD requires less time per run for larger games, but is substantially less reliable: even with $100$ restarts, the probability of reaching the global optimum is highly instance-dependent. Empirically, some random polynomials are easy (success near $100\%$), while others are hard (success dropping to around $1\%$), indicating that first-order methods can be drawn to suboptimal stationary/KKT points with nontrivial probability. Overall, the Moment-SOS method provides consistent global solutions with certificates, and is competitive with local methods in terms of runtime for smaller game instances.

\begin{table}[t]
\caption{SOS vs. PGD on single-player IREFGs. 
$\ell$ is the number of infosets, $D_u$ is the maximum total degree of the utility polynomial, and $D_{\mathrm{SOS}}:=2d$ is the maximum polynomial degree used in the degree-$d$ Moment-SOS relaxation. Results are averaged over 50 instances with 100 PGD restarts per instance. \texttt{OPT} (\%) denotes the mean fraction of PGD restarts that reach the SOS-certified global optimum.}
\label{tab:SOS_PGD}
\centering
\small
\setlength{\tabcolsep}{3pt}
\renewcommand{\arraystretch}{1.10}
\begin{adjustbox}{max width=\columnwidth}
\begin{tabular}{c c c| c c| c c}
\hline
$\ell$ & $D_{u}$ & $D_{\mathrm{SOS}}$ & SOS time (s) & PGD time (s) & \texttt{OPT} ($\%$)\\
\hline
2 & 4  & 4  & 0.02 & 0.05 & 80.72 \\
2 & 6  & 6  & 0.09 & 0.08 & 88.16 \\
2 & 8  & 8  & 0.75 & 0.12 & 86.22 \\
2 & 10 & 10 & 6.64 & 0.18 & 87.14 \\
\hline
3 & 4  & 4  & 0.05 & 0.10 & 79.20 \\
4 & 4  & 4  & 0.22 & 0.15 & 70.72 \\
5 & 4  & 4  & 0.88 & 0.18 & 74.56 \\
6 & 4  & 4  & 3.74 & 0.25 & 72.72 \\
\hline
\end{tabular}
\end{adjustbox}
\vspace{-3mm}
\end{table}

\subsection{Scalability of Moment-SOS}
\label{subsec:sos_scaling}
The results from the prior experiment indicate that as the game size and level of the hierarchy grows, the runtime of our method increases significantly. We next seek to study the empirical scaling behavior of Moment-SOS on larger single-player IREFG instances. 
Table~\ref{tab:SOS_scaling} reports the runtimes of our method in games in which we vary both the number of infosets $\ell$ and the polynomial degree $D_u$. 
The results indicate that larger domains are solvable, but with rapidly increasing cost. 
For instance, at fixed $D_u=6$, the average SOS runtime grows from $0.086$s at $\ell=2$ to $1694.92$s at $\ell=6$. 
Conversely, at fixed $\ell=2$, increasing the degree from $D_u=4$ to $D_u=16$ increases runtime from $0.018$s to $1674.72$s. Additionally, the SDP size grows quickly with both the number of behavioral variables and the relaxation order, which practically limits our Moment-SOS approach to IREFGs of modest dimension.

The above observations are in line with known scalability concerns of SOS methods in the literature. Since computational efficiency is not a primary focus of our work, we briefly mention that several recent works have aimed to improve the scalability of SDP solving. For instance, DSOS/SDSOS relaxations trade off solution quality for better computational performance \citep{ahmadi2019dsos}, and low-rank SDP methods can reduce memory and runtime by exploiting approximate low-rank structure \citep{monteiro2024low,han2024accelerating,han2025low,aguirre2025cuhallar}. Our polynomial IREFG formulation fits into this framework, so these techniques could be combined with our degree bounds to handle larger imperfect-recall games in future work.

\begin{table}[t]
\caption{SOS scalability on larger single-player IREFGs. 
Each entry reports the average SOS runtime in seconds. Rows vary the number of infosets $\ell$, columns vary the game degree $D_u$, and in all reported entries SOS certifies the optimum with truncation degree $D_{\mathrm{SOS}}=D_u$. Entries marked ``--'' were not pursued further after the corresponding row or column had already become prohibitively expensive.}
\label{tab:SOS_scaling}
\centering
\scriptsize
\setlength{\tabcolsep}{3pt}
\renewcommand{\arraystretch}{1.10}
\begin{adjustbox}{max width=\columnwidth}
\begin{tabular}{c|rrrrrrr}
\hline
$\ell \backslash D_u$ & 4 & 6 & 8 & 10 & 12 & 14 & 16 \\
\hline
2 & 0.018 & 0.086 & 0.75 & 6.64 & 49.74 & 321.64 & 1674.72 \\
3 & 0.049 & 1.62  & 54.27 & 1386.38 & -- & -- & -- \\
4 & 0.22  & 25.05 & 2084.43 & -- & -- & -- & -- \\
5 & 0.88  & 231.80 & -- & -- & -- & -- & -- \\
6 & 3.74  & 1694.92 & -- & -- & -- & -- & -- \\
7 & 20.90 & -- & -- & -- & -- & -- & -- \\
8 & 103.92 & -- & -- & -- & -- & -- & -- \\
9 & 2668.71 & -- & -- & -- & -- & -- & -- \\
\hline
\end{tabular}
\end{adjustbox}
\vspace{-3mm}
\end{table}

\section{Discussion and Future Work}\label{sec:conclusion}
Recent work on timeable team games leverages junction-tree/treewidth structure to obtain tractable lifted formulations for team-correlated equilibria~\citep{zhang2022team,zhang2023team}. Since their setting and solution concept differ from ours, the only directly comparable overlap is the non-absentminded single-player case. A natural direction is to combine our polynomial IREFG formulation with junction-tree variants of Lasserre's hierarchy so the SDP size (and potentially the required level) depends on the constraint-graph treewidth, and to explore whether similar structure can be leveraged beyond this overlap. We provide a more detailed discussion of this connection in Appendix~\ref{app:treewidth}.

Future work also includes finding and analyzing other tractable subclasses of IREFGs, and designing faster algorithms that incorporate decentralized methods to compute KKT points. Moreover, while Statement (i) of Theorem~\ref{thm:finiteconver_DSC} holds for the Moment-SOS approach, first-order decentralized methods are known to efficiently solve strictly monotone games (see e.g.~\cite{cai2023doubly,ba2025doubly} and references therein). A natural question is how these approaches compare, both theoretically and empirically.
Finally, as discussed at the end of~\cref{sec:empirical_illustration}, while we have implemented our methods on some relatively small examples, implementing more tractable SDP solvers to improve scalability remains a crucial direction to explore.

\section*{Acknowledgements}
This work is supported by the MOE Tier 2 Grant (MOE-T2EP20223-0018), Ministry of Education Singapore (SRG ESD 2024 174), the CQT++ Core Research Funding Grant (SUTD) (RS-NRCQT-00002), the National Research Foundation Singapore and DSO National Laboratories under the AI Singapore Programme (Award Number: AISG2-RP-2020-016), and partially by Project MIS 5154714 of the National Recovery and Resilience Plan, Greece 2.0, funded by the European Union under the NextGenerationEU Program. The authors also thank anonymous reviewers for their insightful feedback during the review process.

\section*{Impact Statement}
This paper presents work whose goal is to advance the fields of Game Theory and Machine Learning. There are many potential societal consequences of our work, none of which we feel must be specifically highlighted here.

\section*{Reproducibility Statement}
The code used to generate our experiments use standard Julia scientific computing packages, alongside the SumOfSquares package~\citep{legat2017sos,weisser2019polynomial}. Our implementation and scripts for reproducing the experiments are available at~\url{https://github.com/RuiZheng33/Solving-IREFGs-via-SOS}.

\bibliography{ref}
\bibliographystyle{icml2026}

\appendix
% \setcounter{tocdepth}{2}
% % \renewcommand*\contentsname{Table of Contents}
% \tableofcontents
\onecolumn
\newpage
\section{Additional Related Work}\label{appsec:prelims}
\paragraph{Sum-of-Squares in Polynomial Games.} 
Outside of the works cited in the introduction, few other works have studied IREFGs from the perspective of polynomial optimization. However, games with polynomial utility functions have been proposed and studied in the past~\cite{dresher2016polynomial}.
\cite{parrilo2006polynomial,laraki2012semidefinite} used semidefinite programming methods to find the value of two-player zero-sum polynomial games, and~\cite{stein2008separable} applied similar techniques to separable games (where utilities take a sum-of-products form). Recently,~\cite{nie2023convex,nie2024nash} also used semidefinite programming techniques to solve for Nash equilibria in $n$-player polynomial games under certain genericity assumptions, or otherwise detect the nonexistence of equilibria. Moreover,~\cite{bach2025sum} studied sum-of-squares relaxations for min-max problems, deriving convergence results of their proposed hierarchies. Finally,~\cite{leon2025certifying} used sum-of-squares hierarchies to certify concavity and monotonicity in polynomial games.

\paragraph{Equilibrium Computation in IREFGs.}
Despite the hardness of computing equilibria in IREFGs, some work has studied the special case of two-player, zero-sum IREFGs and derived approximate algorithms to solve them. In particular,~\cite{bosansky2016computing,vcermak2017combining} used MILP techniques to solve for minmax-optimal strategies. In the case of non-absentminded two-player zero-sum IREFGs,~\cite{vcermak2018approximating} also used MILP-based methods to obtain scalable algorithms that can approximate minmax strategies. In the case where the IREFGs are timeable and two-player zero-sum (a subclass of non-absentminded two-player zero-sum games),~\cite{zhang2022team,zhang2023team} utilized tree-decomposition-based methods to obtain LP/CFR bounds for computing team-correlated (mixed) equilibria.

\section{Relationship to tree-decomposition-based methods}
\label{app:treewidth}
We briefly discuss connections to recent tree-decomposition based algorithms for imperfect-recall team games~\cite{zhang2022team,zhang2023team}. Their results are not directly comparable to ours in either setting or solution concept: they study two-player zero-sum \emph{timeable} team games (a subclass of non-absentminded games) and compute team-correlated (mixed) equilibria via junction-tree constructions, whereas we study general multi-player IREFGs (including absentminded games) and target \emph{behavioral} equilibria using SOS relaxations.

The primary overlapping game subclass between our settings is the \emph{single-player non-absentminded} ``team vs.\ nature'' case, where both approaches reduce to optimizing a multilinear objective over a product of simplices (equivalently, over pure strategies in \(\{0,1\}^n\)). In this overlap, the most meaningful comparison is in terms of the \emph{size/structure} of the resulting convex programs (LP versus SDP).

From a treewidth viewpoint, tree-decomposition-based formulations depend on the treewidth \(t\) of a dependency graph induced by the constraints. In the canonical simplex formulation, each infoset contributes a normalization constraint \(\sum_{a\in\mathcal A(I)} \mu_{a}=1\), which already enforces $t \ge \max_I |\mathcal A(I)| - 1$.
More generally, for \(0-1\) multilinear systems, treewidth-based conditions characterize when Lasserre relaxations become exact~\cite{wainwright2004treewidth}. By contrast, our non-absentminded exactness guarantee depends on the number of infosets: Moment-SOS becomes exact at order \(\ell+1\), independent of \(\max_I |\mathcal A(I)|\). This highlights a regime where our bound can be favorable, namely when infosets have many actions but the information structure is shallow (large \(\max_I |\mathcal A(I)|\), small \(\ell\)).

More broadly, these works motivate incorporating treewidth-dependent bounds into our framework, e.g., via sparse/junction-tree variants of Lasserre's hierarchy on the correlative sparsity graph. Extending such structure-exploiting guarantees beyond the single-player non-absentminded overlap (e.g., to absentminded or multi-player behavioral equilibrium constraints) is an interesting direction for future work.

\section{From IREFGs to Polynomials and Back}\label{appsec:polynomial}
The equivalence between IREFGs and polynomial optimization is crucial to our proposed methods. In particular, the translation from IREFGs to polynomials is classical~\citep{piccione1997interpretation}, and we give a full description here for clarity. First, let $P(h' \vert \mu, h)$ denote the realization probability of reaching $h'$ given that players using strategy $\mu$ are at state $h$.
Note that if $h\notin \mathrm{hist}(h')$ (i.e., if $h'$ is not reachable from $h$) then the probability is $0$. Intuitively, the realization probability given a behavioral strategy is just the product of choice probabilities along the path from $h$ to $h'$. 
In order to formally define $P(h' \vert \mu, h)$, we will need some additional notation. First, any node $h\in\mathcal{H}$ uniquely corresponds to a history $\mathrm{hist}(h)$ from root $r$ to $h$. 
\begin{itemize}
	\item Function $\delta(h):\mathcal{H}\to\mathbb{N}$ denotes the depth of the game tree starting from node $h\in\mathcal{H}$.
	\item Function $\nu(h,d):\mathcal{H}\times\mathbb{N}\to\mathcal{H}$ identifies the node ancestor at depth $d \leq \delta$ from node $h$.
	\item Function $\alpha(h,d):\mathcal{H}\times\mathbb{N}\to\cup_{h\in\mathcal{H}}\pures(h)$ identifies the action ancestor at depth $d \leq \delta$ from node $h$.
\end{itemize}
Together, the sequence $(\nu(h,0), \nu(h,1), \dots, \nu(h,\delta(h)))$ uniquely identifies the history of nodes from $r$ to $h$. 
Likewise, the sequence $(\alpha(h,0), \alpha(h,1), \dots, \alpha(h,\delta(h)-1))$ uniquely identifies the history of actions taken from $r$ to $h$. Then, the realization probability of node $h'$ from $h$ if the players use joint strategy profile $\mu$ is given by:
% $\Pi_{}^{} \sigma(a\vert I)$.
\begin{definition}[Realization Probability]
	\[
	P(h'\vert\mu, h) = \prod_{j=\delta(h)}^{\delta(h')-1} \mu(\alpha(h',j)\vert I_{\nu(h',j)}) \quad \mathrm{if}\ h \in \mathrm{hist}(h').
	\]
\end{definition}

\begin{definition}[Expected Utility for Player $i$]
	For player $i$ at node $h\in \mathcal{H}\setminus \mathcal{Z}$, if strategy profile $\mu$ is played, their expected utility is given by $u_i (\mu\vert h) \coloneqq \sum_{z\in\mathcal{Z}}\left(P(z\vert \mu,h)\cdot \rho_i(z)\right)$. 
	In its complete form, we can write the expected utility for each player as follows:
	\[
	u_i(\mu) = \sum_{z\in\mathcal{Z}} \left(\prod_{j=0}^{\delta(z)-1} \mu(\alpha(z,j)\vert I_{\nu(z,j)})\cdot \rho_i(z)\right)
	\]
\end{definition}

With some abuse of notation, we can write $P(h \vert \mu) \coloneqq P(h \vert \mu, r)$ where $r$ is the root node, and similarly $u_i(\mu)\coloneqq u_i(\mu\vert r)$.
Notice that by definition, the expected utility of each player is a polynomial function. In particular, $P(z\vert \mu,h)\cdot \rho_i(z)$ is a monomial in $\mu$ multiplied by a scalar.

\cite{tewolde2023computational} also recently established that any polynomial can be transformed into a single-player IREFG. Subsequently,~\cite{tewolde2024imperfect} extended this result to a set of polynomials, and multi-player IREFGs. We report the single-player variant of the theorem below, and provide a concrete example to aid readability.

\begin{theorem}\label{thm:polynomialgameequivalence}
Given a polynomial $p: \bigtimes_{j=1}^{\ell} \R^{m^j} \to \R$, we can construct a single-player extensive-form game with imperfect recall $\game$ such that the expected utility function of $\game$ satisfies $u(\mu)=p(\mu)$, with $\mu\in\bigtimes_{j=1}^{\ell}\Delta^{m^j-1}$.
Moreover, the construction can be done in polynomial time.
\end{theorem}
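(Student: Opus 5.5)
The construction follows Tewolde et al.: expand $p$ in monomials, build one branch per monomial, and rescale so every coefficient becomes a valid probability times a payoff. Write
\[
p(\mu)=\sum_{k=1}^{K} c_k \prod_{t=1}^{\deg_k} \mu^{j_{k,t}}_{a_{k,t}},
\]
where each factor picks an infoset $j_{k,t}$ and an action $a_{k,t}$, and repeated factors give powers. The number $K$ of monomials and their degrees are part of the input, so everything below is polynomial in the encoding size of $p$.

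\textbf{Step 1: the tree.}
\begin{itemize}
\item The root is a chance node with $K$ children, reached with uniform probability $1/K$.
\item Child $k$ starts a chain of $\deg_k$ player nodes. The $t$-th node of the chain is placed in infoset $I^{j_{k,t}}$ and has $m^{j_{k,t}}$ actions.
\item Taking action $a_{k,t}$ moves to the next node of the chain. Every other action ends the game at a leaf with payoff $0$.
\item After the last node of the chain, a leaf gives payoff $K c_k$. For a constant monomial ($\deg_k=0$), child $k$ is directly a leaf with payoff $K c_k$.
\end{itemize}
The only freedom in this step is that all nodes assigned to infoset $j$ must share the action set $\mathcal{A}_{I^j}$. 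This holds automatically because we give each of them exactly $m^j$ actions, labelled consistently.

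\textbf{Step 2: the utility.} The leaf at the end of chain $k$ is reached with probability $\frac1K\prod_t \mu^{j_{k,t}}_{a_{k,t}}$, by the realization-probability formula of this appendix. All other leaves pay $0$. Hence
\[
u(\mu)=\sum_k \tfrac1K\cdot K c_k \prod_t \mu^{j_{k,t}}_{a_{k,t}} = p(\mu)
\]
as polynomials, in particular on all of $\bigtimes_j \Delta^{m^j-1}$. If chance probabilities must be small rationals, a uniform $1/K$ is fine. Alternatively, one can use a binary chance tree with probabilities $1/2$ and scale payoffs by $2^{\lceil\log K\rceil}$, which is still polynomial size.

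\textbf{Step 3: the real subtlety.} Powers $(\mu^j_a)^2$ force the same infoset to appear twice along one path. This makes the game absentminded, which is allowed here, since the theorem concerns general IREFGs.

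Monomials that mix different actions of the same infoset, such as $\mu^j_1\mu^j_2$, also work under behavioral strategies: the same infoset visited twice, taking action $1$ then action $2$, contributes exactly $\mu^j_1\mu^j_2$. This is legitimate because a behavioral strategy randomizes independently at each visit.

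Finally, the tree has $1+\sum_k(\deg_k+1)$ internal nodes plus leaves bounded by $\sum_k \deg_k \max_j m^j$. This gives the polynomial-time claim.
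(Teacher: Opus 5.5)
Your proposal is correct and follows the paper's proof essentially step for step. The paper's construction has the same ingredients: a uniform chance root over the monomials in $\supp(p)$, and a chain of decision nodes for each monomial, with one node per factor counted with multiplicity. In each chain the designated action continues, every other action ends at a payoff-$0$ leaf, and the final leaf pays the coefficient times the number of monomials, so the game has size $O(|\supp(p)|\cdot d_p\cdot\max_j m^j)$.
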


\begin{proof}
The proof follows the analysis of \cite{tewolde2023computational}, with minor modifications to notation.
Let $d_p=\deg(p)$ and write
\[
  p(x)=\sum_{D\in \MB(d_p,\bm m)} \lambda_D \prod_{j=1}^{\ell}\prod_{a=1}^{m^j} (x^j_a)^{D^j_a},
\]
where $\bm m:=(m^j)_{j=1}^{\ell}$, $\lambda_D$ are rational, and
\[
  \MB(d_p,\bm m):=\Bigl\{ D=(D^j_a)_{j,a} \in \bigtimes_{j=1}^{\ell}\N_0^{m^j} \,:\,
  \sum_{j=1}^{\ell}\sum_{a=1}^{m^j} D^j_a\le d_p \Bigr\}.
\]
Let $\supp(p):=\{D\in \MB(d_p,\bm m):\lambda_D\ne 0\}$ and $|D|:=\sum_{j,a}D^j_a$.
For each $D$, define the multiset $\supp(D)^{\mathrm{ms}}$ that contains $D^j_a$ copies of the pair $(j,a)$ when $D^j_a>0$. Then $|\supp(D)^{\mathrm{ms}}|=|D|$.
The input encoding of $p$ consists of $\ell$, $(m^j)_{j=1}^{\ell}$, and the rational coefficients $(\lambda_D)_{D\in\supp(p)}$.

Given such a polynomial function, we build a single-player extensive-form game $\game$ with imperfect recall whose information sets are $I^j$ ($j\in[\ell]$), each with action set $\pures_{I^j}=\{\tau_1,\dots,\tau_{m^j}\}$. The game $\game$ has a chance root and depth at most $d_p+1$.
The chance node $h_0$ has one outgoing edge to a node $h_D$ for each $D\in\supp(p)$, and chance selects each $h_D$ with probability $1/|\supp(p)|$.

Fix a deterministic ordering $\prec$ of the multiset $\supp(D)^{\mathrm{ms}}$
(e.g., lexicographic on pairs $(j,a)$, repeated $D^j_a$ times).  
Initialize the \emph{current edge} as the chance edge from $h_0$ into $h_D$.

\begin{itemize}
  \item If $D=\mathbf{0}$ (the zero multi-index), make $h_D$ terminal with payoff 
        $u(h_D)=\lambda_{\mathbf{0}}\cdot |\supp(p)|$.
  \item Otherwise, for each next element $(j,a)$ of $\supp(D)^{\mathrm{ms}}$ (in order $\prec$), do:
    \begin{itemize}
      \item Insert a nonterminal decision node $h$ on the current edge and assign $h$ to the information set $I^j$.
      \item Create $m^j$ outgoing edges from $h$, one for each action in $\pures_{I^j}=\{\tau_1,\dots,\tau_{m^j}\}$.
      \item For every edge labeled $\tau_{a'}$ with $a'\neq a$, attach a terminal node with utility $0$.
      \item Update the \emph{current edge} to be the unique edge labeled $\tau_a$.
    \end{itemize}
\end{itemize}
After all elements of $\supp(D)^{\mathrm{ms}}$ have been processed, terminate the current edge with a terminal node $z_D$ and set its utility to
$u(z_D) \;=\; \lambda_D \cdot |\supp(p)|$.
This procedure yields a subtree $T_D$ of depth 
$|\supp(D)^{\mathrm{ms}}|=\sum_{j,a} D^j_a=|D|$.

In this reduction, any point $x=(x^j_a)_{j,a}\in \bigtimes_{j=1}^{\ell}\Delta^{m^j-1}$ induces a behavioral strategy $\mu$ in $\game$ by
$\mu(a_j\mid I^j)=x^j_a$ for all $j,a$.
Let $z_D$ denote the terminal node associated with monomial index $D$ (including $D=\mathbf{0}$).
At the chance root, $\Prob(h_D\mid \mu)=1/|\supp(p)|$ for each $D\in\supp(p)$.
If $D=\mathbf{0}$, then $z_D$ is reached immediately, so
$\Prob(z_D\mid \mu)=1/{|\supp(p)|}$.
If $D\neq \mathbf{0}$, the loop that builds $T_D$ creates exactly $|D|$ decision nodes along the designated branch. At each visit to information set $I^j$, the unique continuing edge is labeled $\tau_a$ and is chosen with probability $x^j_a$. Since $(j,a)$ appears $D^j_a$ times in $\supp(D)^{\mathrm{ms}}$, we obtain
\[
  \Prob(z_D\mid \mu)
  \;=\; \frac{1}{|\supp(p)|}\prod_{j,a=1}^{\ell,m^j} (x^j_a)^{D^j_a}.
\]
All sibling edges terminate with utility $0$ and do not contribute.

Therefore the expected utility is
\begin{equation*}
\begin{aligned}
u(\mu)
&= \sum_{D\in\supp(p)} \Prob(z_D\mid \mu) \cdot \rho(z_D) \\
&= \sum_{D\in\supp(p)} \Bigl(\frac{1}{|\supp(p)|}\prod_{j,a} (x^j_a)^{D^j_a}\Bigr) \cdot \lambda_D \cdot |\supp(p)| \\
&= \sum_{D\in\supp(p)} \lambda_D \cdot \prod_{j,a} (x^j_a)^{D^j_a} \\
&= p(x). 
\end{aligned}
\end{equation*}
This extends to $u(\mu)=p(\mu)$ for all $\mu\in \bigtimes_{j=1}^{\ell}\Delta^{m^j-1}$.

For each $D\in\supp(p)$, $T_D$ contributes a path of length $|D|\le d_p$ where, at each depth, at most $\max_j m^j$ leaves are created.
Hence the total number of nodes and edges is
\[
  O\!\Bigl(\sum_{D\in\supp(p)} \bigl( |D| + |D|\cdot \max_j m^j \bigr)\Bigr)
  \;\subseteq\; O\bigl(|\supp(p)|\cdot d_p \cdot \max_j m^j\bigr).
\]
All payoffs are rationals of the form $\lambda_D\cdot |\supp(p)|$, and the chance probabilities are $1/|\supp(p)|$, so labels are computable with bit complexity polynomial in the input size.
Thus the game $\game$ is produced in polynomial time in the Turing model.
\end{proof}

Clearly, this construction is not unique. Choosing a different total order on the multiset $\supp(D)^{\mathrm{ms}}$ yields a (potentially) different game tree. All such variants are payoff-equivalent, since the reach probability of $z_D$ depends only on the multiplicities $(D^j_a)$, hence $u(\mu)=p(\mu)$ in every case. For concreteness we fix the lexicographic order. Moreover, we provide a concrete example constructing a (single-player) IREFG from a polynomial.

\begin{example}\label{example:singleplayer1}
In this example, we index by (infoset, action) using subscripts to avoid ambiguity:
the first subscript denotes the infoset and the second denotes the action (e.g., $x_{12}$).
Let $\ell=2$ with $m_1=m_2=2$, and write $x_{11},x_{12}$ for $I_1$ and $x_{21},x_{22}$ for $I_2$.
Consider
\[
  p(x)= 2 + 3x_{11}x_{21} - 5x_{12}x_{22} + 4x_{21}^2 .
\]
Then $\deg(p)=2$. Fix the variable order $x_{11}\prec x_{12}\prec x_{21}\prec x_{22}$
and use the induced lexicographic order on multi-indices
$D=(D_{11},D_{12},D_{21},D_{22})\in\mathbb{N}_0^4$.
For each $D$, order the multiset $\supp(D)^{\mathrm{ms}}$ by listing the pairs
$(j,a)$ in lexicographic order with multiplicity. With this convention,
\[
  \supp(p)=\bigl\{ D^{(0)}=\mathbf{0},\; D^{(1)}=(1,0,1,0),\; D^{(2)}=(0,1,0,1),\; D^{(3)}=(0,0,2,0) \bigr\}
\]
so $|\supp(p)|=4$ and the lexicographic order is $D^{(0)}\prec D^{(1)}\prec D^{(2)}\prec D^{(3)}$.

Then, we construct the game tree as follows:

\paragraph{Root (chance).}
Create a chance node $h_0$ with four equiprobable edges ($1/4$ each) to $h_{D^{(t)}}$, $t=0,1,2,3$.

\paragraph{Subtree $T_{D^{(0)}}$ (constant term).}
Make $h_{D^{(0)}}$ terminal with payoff $\rho=\lambda_{D^{(0)}}\cdot |\supp(p)|=2\cdot 4=8$.

\paragraph{Subtree $T_{D^{(1)}}$ for $x_{11}x_{21}$.}
Here $\supp(D^{(1)})^{\mathrm{ms}}=\{(1,1),(2,1)\}$.
Process in order:
(1) insert a node $h_1\in I_1$; create two edges labeled $x_{11},x_{12}$; attach $0$ to the $x_{12}$ edge; move along $x_{11}$.
(2) insert a node $h_2 \in I_2$; create two edges labeled $x_{21},x_{22}$; attach $0$ to the $x_{22}$ edge; move along $x_{21}$.
Terminate with $z_{D^{(1)}}$ and payoff $\rho(z_{D^{(1)}})=\lambda_{D^{(1)}}\cdot 4 = 12$.

\paragraph{Subtree $T_{D^{(2)}}$ for $x_{12}x_{22}$.}
Here $\supp(D^{(2)})^{\mathrm{ms}}=\{(1,2),(2,2)\}$.
Process in order:
(1) insert $h_3\in I_1$; create $x_{11},x_{12}$; attach $0$ to $x_{11}$; move along $x_{12}$.
(2) insert $h_4 \in I_2$; create $x_{21},x_{22}$; attach $0$ to $x_{21}$; move along $x_{22}$.
Terminate with $z_{D^{(2)}}$ and payoff
$\rho(z_{D^{(2)}})=\lambda_{D^{(2)}}\cdot 4 = -20$.

\paragraph{Subtree $T_{D^{(3)}}$ for $x_{21}^2$.}
Here $\supp(D^{(3)})^{\mathrm{ms}}=\{(2,1),(2,1)\}$ (two copies).
Process in order:
(1) insert $h_5 \in I_2$; create $x_{21},x_{22}$; attach $0$ to $x_{22}$; move along $x_{21}$.
(2) insert $h_6 \in I_2$; again $x_{21}$ continues, $x_{22}$ gets $0$.
Terminate with $z_{D^{(3)}}$ and payoff $\rho(z_{D^{(3)}})=\lambda_{D^{(3)}}\cdot 4 = 16$.

The constructed game tree is shown in Figure~\ref{fig:constructed_game}.

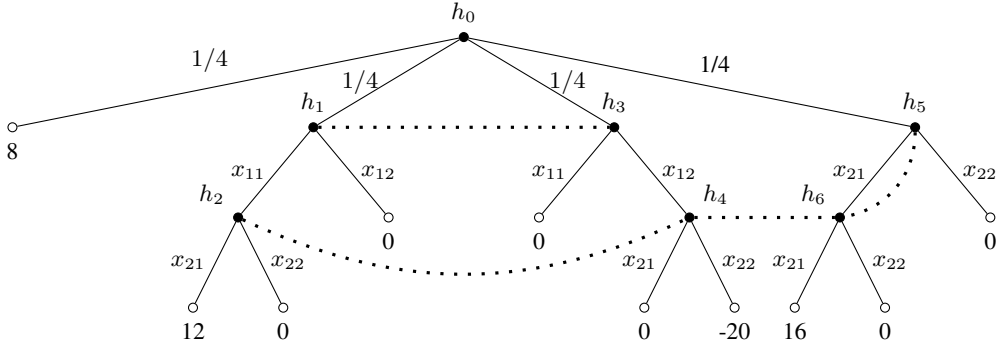
\begin{figure}[ht]
    \centering
    \begin{tikzpicture}[scale=0.8,font=\footnotesize]%edge from parent/.style={draw,thick}]
	% Two node styles: solid and hollow
	\tikzstyle{solid node}=[circle,draw,inner sep=1.2,fill=black];
	\tikzstyle{hollow node}=[circle,draw,inner sep=1.2];
	% Specify spacing for each level of the tree
	\tikzstyle{level 1}=[level distance=15mm,sibling distance=50mm]
	\tikzstyle{level 2}=[level distance=15mm,sibling distance=25mm]
	\tikzstyle{level 3}=[level distance=15mm,sibling distance=15mm]
	% The Tree
	\node(0)[solid node,label=above:{$h_0$}]{}
	child{node[hollow node,label=below:{8}]{}
        edge from parent node[above left]{$1/4$}
	}
	child{node[solid node,label=above:{$h_1$}]{}
		child{node[solid node,label=above left:{$h_2$}]{}
			child{node[hollow node, label=below:{12}]{} edge from parent node(m)[left]{$x_{21}$}}
			child{node[hollow node, label=below:{0}]{} edge from parent node(n)[right]{$x_{22}$}}
			edge from parent node[left]{$x_{11}$}
		}
		child{node[hollow node,label=below:{0}]{}
			edge from parent node[right]{$x_{12}$}
		}
		edge from parent node[left]{$1/4$}
	}
    child{node[solid node,label=above:{$h_3$}]{}
		child{node[hollow node,label=below:{0}]{}
			edge from parent node[left]{$x_{11}$}
		}
        child{node[solid node,label=above right:{$h_4$}]{}
            child{node[hollow node, label=below:{0}]{} edge from parent node(m)[left]{$x_{21}$}}
			child{node[hollow node, label=below:{-20}]{} edge from parent node(n)[right]{$x_{22}$}}
			edge from parent node[right]{$x_{12}$}
		}
		edge from parent node[right]{$1/4$}
	}
    child{node[solid node,label=above:{$h_5$}]{}
		child{node[solid node,label=above left:{$h_6$}]{}
			child{node[hollow node, label=below:{16}]{} edge from parent node(m)[left]{$x_{21}$}}
			child{node[hollow node, label=below:{0}]{} edge from parent node(n)[right]{$x_{22}$}}
			edge from parent node[left]{$x_{21}$}
		}
		child{node[hollow node,label=below:{0}]{}
			edge from parent node[right]{$x_{22}$}
		}
		edge from parent node[above right]{1/4}
	};
	% information sets
	\draw[loosely dotted,very thick](0-2)to(0-3);
    \draw[loosely dotted,very thick](0-2-1)[out=-25,in=-155]to(0-3-2);
	\draw[loosely dotted,very thick](0-3-2)to(0-4-1);
    \draw[loosely dotted,very thick](0-4)[out=-90,in=10]to(0-4-1);

	\end{tikzpicture}
    \caption{Constructed Single-Player Imperfect-Recall Game for $p(x)= 2 + 3x_{11}x_{21} - 5x_{12}x_{22} + 4x_{21}^2$.}
    \label{fig:constructed_game}
\end{figure}

\paragraph{Verification.}
Given $x\in \Delta^{1}\times \Delta^{1}$, define the behavioral strategy by $\mu(a_j\!\mid\! I^j)=x^j_a$.
Then
\[
  \Prob(z_{D^{(0)}}\mid\mu)=\tfrac14,\quad
  \Prob(z_{D^{(1)}}\mid\mu)=\tfrac14\,x_{11}x_{21},\quad
  \Prob(z_{D^{(2)}}\mid\mu)=\tfrac14\,x_{12}x_{22},\quad
  \Prob(z_{D^{(3)}}\mid\mu)=\tfrac14\,x_{21}^2.
\]
Hence
\begin{equation*}
\begin{aligned}
u(\mu)
&=\sum_{t=0}^3 \Prob(z_{D^{(t)}}\mid\mu)\,\rho(z_{D^{(t)}}) \\
&=\tfrac14\cdot 8 + \tfrac14 x_{11}x_{21}\cdot 12
+ \tfrac14 x_{12}x_{22}\cdot (-20) + \tfrac14 x_{21}^2\cdot 16 \\
&= 2 + 3x_{11}x_{21} -5x_{12}x_{22} + 4x_{21}^2 \\
&= p(x).
\end{aligned}
\end{equation*}

\end{example}

\section{On Genericity}\label{appsec:genericity}
We provide additional clarification on the genericity assumption used in our finite-convergence results. 
Throughout the paper, genericity is not a condition on the game tree itself. 
Instead, the game tree, information structure, action sets, and admissible utility class are fixed in advance. 
The only free parameters are the coefficients of the utility polynomials induced by the terminal payoffs; we denote this coefficient vector by $c$.

A property is called \emph{generic} if it holds for all coefficient vectors $c$ outside a proper algebraic variety in the coefficient space. 
Equivalently, the exceptional set is contained in the zero set of a nonzero polynomial, or more generally a finite system of polynomial equations, in the entries of $c$. 
Since every proper algebraic variety has Lebesgue measure zero~\cite{harris2013algebraic}, a generic property holds for almost all utility coefficients. 
Thus, if the utility coefficients are sampled from any absolutely continuous distribution, the non-generic cases occur with probability zero.

In our setting, for each collection $\alpha$ of active nonnegativity constraints, let $F_\alpha(z;c)=0$ be the corresponding polynomial KKT system, where $z$ collects the behavioral variables and multipliers. 
By standard results in elimination theory~\citep{lazard2007solving}, there exists a nonzero polynomial $\Delta_\alpha(c)$ such that the exceptional coefficients for this system lie in $\{\Delta_\alpha(c)=0\}$. 
We therefore call $c$ generic if $\Delta_\alpha(c)\neq 0$ for every such collection $\alpha$. 
Equivalently, defining $\Delta(c):=\prod_\alpha \Delta_\alpha(c)$,
the nongeneric instances satisfy $\Delta(c)=0$. 
Thus the exceptional set is contained in a proper algebraic variety, and hence has Lebesgue measure zero.

Concretely, consider a simple two-player polynomial game:
\[
    u_1(x,y)=\frac{a}{2}x^2+bxy+px,
    \qquad
    u_2(x,y)=\frac{d}{2}y^2+exy+qy,
\]
with coefficient vector $c=(a,b,p,d,e,q)$. 
For the interior case, where no nonnegativity constraint is active, the KKT stationarity equations are
\[
    ax+by+p=0,
    \qquad
    ex+dy+q=0.
\]
The exceptional set is exactly
\[
    \Delta_\alpha(c)=ad-be=0.
\]
If $ad-be\neq 0$, the system has a unique solution. 
If $ad-be=0$, the system may become degenerate. 
Thus, under any absolutely continuous sampling distribution over the utility coefficients, this exceptional case occurs with probability zero, and the KKT system has finitely many solutions almost surely.

Finally, genericity should be understood relative to the admissible utility class. 
For example, in a zero-sum game the utilities are not all independent, since they must sum to zero. 
In that case, one may choose the first $n-1$ utility functions freely, outside a proper algebraic variety in the corresponding coefficient space, and define the final utility by the zero-sum constraint. 
Thus our framework applies to generic games within structured classes, including generic zero-sum games, with genericity interpreted after imposing the structural constraints of the class.

\section{On the Computational Complexity of Single-Player IREFGs}\label{appsec:complexity}
% \subsection{On Computational Complexity}
While the computational hardness of the ex-ante problem~(\Cref{eq:ex-ante}) has already been established in prior work~\citep{tewolde2023computational}, we note that one can extend recent results in the polynomial optimization literature to improve these complexity results.
First, we introduce a hardness result for finding local optima over polytopes by~\cite{ahmadi2022complexity}:
\begin{lemma}[\citep{ahmadi2022complexity},Theorem~2.6]
\label{lemma:local_min_polytope}
Unless $P = NP$, there is no polynomial-time algorithm that finds a point within Euclidean distance $c^n$ (for any constant $c \geq 0$) of a local minimizer of an $n$-variate quadratic function over a polytope.
\end{lemma}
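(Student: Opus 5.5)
Since this is a cited result (\citep{ahmadi2022complexity}, Theorem~2.6), the plan is to reconstruct its proof as a Karp-style reduction. Take an NP-hard decision problem, concretely 3-SAT, and map each instance $\phi$ in polynomial time to a quadratic $q_\phi$ and a polytope $P_\phi\subseteq\R^n$. The construction should be built so that, from \emph{any} point within distance $c^n$ of \emph{any} local minimizer of $q_\phi$ on $P_\phi$, one can read off in polynomial time whether $\phi$ is satisfiable. A polynomial-time approximate local-minimum finder would then decide 3-SAT, which gives $P=NP$.

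\textbf{Step 1: build the gadget.} Use box-type variables $x_i\in[0,1]$, one per Boolean variable, plus one auxiliary variable $t\in[0,1]$. Add linear constraints coupling each clause with $t$, so that $t$ can be large only when the clause constraints are ``slack''. Choose the quadratic part to be concave in the $x_i$, for instance $\sum_i x_i(1-x_i)$ with a penalty. Concavity pushes every local minimizer to a vertex of the face on which it lies, so local minimizers are essentially $0/1$ points in $x$.

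\textbf{Step 2: control the local minimizers.} Tune the coefficients, which are polynomially bounded rationals, so that exactly two kinds of local minimizers can occur:
\begin{itemize}[leftmargin=1.5em]
\item a ``fallback'' point with $t=0$, which is a local minimizer if and only if $\phi$ is unsatisfiable;
\item points with $t=1$ and $x$ equal to a satisfying assignment, which exist if and only if $\phi$ is satisfiable.
\end{itemize}
Since a quadratic is bounded on a polytope, a global and hence local minimizer always exists, so the search problem is always well posed.

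\textbf{Step 3: make approximation harmless.} Rescale the instance, for example by multiplying all coordinates by a factor $K$ with polynomially many bits. After rescaling, distinct candidate minimizers are separated by more than $2c^n$, and the two classes differ by more than $2c^n$ in the $t$-coordinate. Because $K$ can be taken exponential in $n$ while its encoding stays polynomial, this handles any constant $c$. Then rounding the returned point to the nearest candidate identifies the class, and in the satisfiable case it even recovers an assignment.

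\textbf{Main obstacle.} The hardest step is Step 2. We must certify via the KKT and second-order conditions for quadratics over polytopes that no spurious local minimizers exist, for instance at partially fractional points or at vertices encoding non-satisfying assignments. The fallback point must also cease to be a local minimizer exactly when a satisfying assignment exists. The first approach to try is to verify, face by face, that along every feasible direction the concave term gives strict descent unless the point is one of the designated candidates. If that fails, the fallback is to reduce instead from a problem already known to have hard local-minimum structure and inherit its analysis.
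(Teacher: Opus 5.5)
The paper does not prove this lemma; it imports it from Ahmadi and Zhang. Your sketch therefore has to stand on its own, and it has a genuine gap. Step~2 is not merely unverified: with the gadget of Step~1 it cannot be realized unless $\mathrm{NP}=\mathrm{coNP}$.

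You make the objective concave so that local minimizers are vertices. But for a concave quadratic $q$ ($\nabla^2 q=H\preceq 0$), local optimality of a given point $\bar z$ of a polytope can be decided in polynomial time. Let $T$ be the tangent cone at $\bar z$ and $g=\nabla q(\bar z)$. Then $\bar z$ is a local minimizer iff two conditions hold. First, $g^\top d\ge 0$ on $T$, which is a Farkas/LP test. Second, every $d$ in the critical cone $T\cap g^\perp$ satisfies $d^\top Hd\ge 0$, which for $H\preceq 0$ means $Hd=0$; you check this on the span of that cone, which is computable by LP.

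A concave function attains its minimum at a vertex, and vertices have polynomial bit size. So on your instances a local minimizer is a short, always-existing, efficiently checkable certificate. Suppose Step~2 held, i.e.\ for satisfiable $\phi$ every local minimizer encodes a satisfying assignment. Then unsatisfiability would be certified by a vertex of $P_\phi$ that is a local minimizer and encodes no satisfying assignment, which puts UNSAT in NP. This is the Megiddo--Papadimitriou obstruction for total search problems.

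More directly, whether an explicit fallback $(x_0,0)$ is a local minimizer depends only on the constraints active there and is decided by the test above. So ``fallback is a local minimizer iff $\phi$ is unsatisfiable'' would even give $\mathrm{SAT}\in\mathrm{P}$. Dropping concavity does not rescue the plan: you would then lose the claim that local minimizers are essentially $0/1$ points, on which Steps~2--3 rest.

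The missing idea is that the hardness must reside in the local-optimality test itself. That test must be coNP-hard on the instances you build, which forces an indefinite quadratic.

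The natural device is copositivity. The origin is a local minimizer of $q(x)=x^\top Qx$ over the nonnegative orthant iff $Q$ is copositive. Via Motzkin--Straus, take $Q=\kappa(A_G+I)-\mathbf 1\mathbf 1^\top$ with $\kappa$ strictly between consecutive integers. Then it is NP-hard to distinguish strictly copositive $Q$ from $Q$ with $x^\top Qx<0$ for some $x\ge 0$.

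On the scaled simplex $\{x\ge 0:\ \mathbf 1^\top x\le K\}$, homogeneity replaces your Steps~2--3:
\begin{itemize}
\item If $Q$ is strictly copositive, any $\bar x\neq 0$ can be shrunk along its ray to strictly decrease $q$. So the origin is the unique local minimizer.
\item Otherwise the origin is not a local minimizer. Any local minimizer off the facet $\mathbf 1^\top x=K$ must have $q(\bar x)=0$, and the reduction must exclude such zero-valued rays. All remaining local minimizers lie on that facet, at Euclidean distance at least $K/\sqrt n$ from the origin.
\end{itemize}
A positive $K\ge 3c^n\sqrt n$ with polynomially many bits then separates the two cases by more than $2c^n$. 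This matches the quantity $\lceil 3c^n\sqrt n\rceil$ that Corollary~\ref{cor:local_min_simplex} replaces by $1$ to specialize the cited argument to a simplex. Your rescaling idea in Step~3 is sound, but it only becomes useful once such a dichotomy of local minimizers is in place.
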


\begin{corollary}\label{cor:local_min_simplex}
Finding a local minimizer of a quadratic program over a simplex is NP-hard. Moreover, unless $P=NP$, there is no FPTAS for this problem.
\end{corollary}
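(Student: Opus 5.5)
We reduce from the polytope case in \Cref{lemma:local_min_polytope}: given an $n$-variate quadratic $q$ and a polytope $P=\{x: Ax\le b\}$, we build in polynomial time a quadratic program over a simplex whose local minimizers correspond to those of the original problem. Then any polynomial-time algorithm for local minimization over a simplex, or an FPTAS for it, would give one for polytopes.

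\textbf{Step 1: put the polytope in standard form.} Introduce slack variables $s=b-Ax\ge 0$. Since $P$ is bounded, we can also split $x=x^+-x^-$ with $x^\pm\ge0$, or shift and bound $x$. This gives a polytope $P'=\{z\ge 0: Bz=c\}$ that is affinely isomorphic to $P$. Because $P$ is bounded, LP gives a polynomially-sized bound $M$ with $\mathbf 1^\top z\le M$ on $P'$. Add one more slack $t=M-\mathbf 1^\top z\ge0$ and rescale by $M$. The feasible set becomes $\{w\in\Delta: \tilde B w=\tilde c\}$, a face-cut of a standard simplex $\Delta$. The quadratic objective transforms into a quadratic in $w$, and the map is affine and bijective between the two feasible sets.

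\textbf{Step 2: remove the remaining linear equalities.} Replace the objective by the penalized quadratic $q(w)+\rho\,\|\tilde B w-\tilde c\|^2$ over the whole simplex. The main obstacle is justifying that, for a polynomially bounded $\rho$, local minimizers of the penalized problem over $\Delta$ lie on, or within $c^n$-type distance of, local minimizers of the constrained problem. An exact quadratic penalty is not generally exact, so this step needs care.

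I would try first to avoid penalties entirely: handle equalities by homogenization, in the Motzkin–Straus style. Note that $\{w\in\Delta:\tilde Bw=\tilde c\}$ is itself a polytope that is the convex hull of polynomially many... (not generally). If that fails, I would instead inspect the proof of \citep{ahmadi2022complexity}. Their hard instances come from reductions (e.g., from \textsc{Stable Set} or \textsc{Partition}) whose polytopes are typically hypercubes or simplex-like. I would adapt their reduction directly to a simplex (equivalently, a product of simplices via $x_i\in\Delta^1$, by \Cref{thm:polynomialgameequivalence}). Then NP-hardness of finding a local minimizer follows with the same gadget.

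\textbf{Step 3: the FPTAS part.} The quantity in the lemma is a distance of $c^n$ and the reduction uses bit-size–polynomial data. So an FPTAS, with accuracy $\epsilon$ and time poly$(n,1/\epsilon)$, run with $\epsilon$ inverse-exponential would still need exponential time. The standard argument, however, is that an FPTAS with $\epsilon=1/\mathrm{poly}$ suffices to round to an exact local minimizer, because the hard instances have integral, polynomially bounded data and local minimizers separated by $1/\mathrm{poly}$. Rounding then solves the NP-hard problem, so no FPTAS exists unless $P=NP$.
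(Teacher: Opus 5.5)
\textbf{The proposal has a genuine gap: it never produces hard instances over a simplex, and that is the whole content of the corollary.}

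Step~1 is correct if you use the shift; the split $x=x^+-x^-$ makes $P'$ unbounded and not affinely isomorphic to $P$. But Step~1 gains nothing. Every bounded polytope is affinely isomorphic to a slice $\{w\in\Delta:\tilde Bw=\tilde c\}$ of a simplex, so after Step~1 you face an arbitrary polytope again, and all the difficulty sits in Step~2, which you leave open. The quadratic penalty does not close it. For finite $\rho$, local minimizers of $q(w)+\rho\|\tilde Bw-\tilde c\|^2$ over $\Delta$ generally lie off the slice and need not be near local minimizers of the constrained problem. You would also need a quantitative distance guarantee compatible with \Cref{lemma:local_min_polytope}, which you do not have.

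Your fallback, going back into the proof of the lemma, is the right idea and is what the paper does, but you do not carry it out. The parenthetical ``equivalently, a product of simplices via $x_i\in\Delta^1$'' is wrong: a box or a product of simplices is not a simplex. \Cref{thm:polynomialgameequivalence} turns polynomials into games and does not change the feasible region. The subsequent single-infoset hardness also needs a single simplex. The concrete step the paper uses is this: in the argument for Theorem~2.6 of Ahmadi--Zhang, replace the scaling parameter $\lceil 3c^n\sqrt{n}\rceil$ by $1$. This yields instances over a simplex for which, unless $P=NP$, no polynomial-time algorithm finds a point within distance $\epsilon$ of a local minimizer, for any constant $\epsilon\in[0,1/2)$.

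Step~3 has a separate gap. Your rounding argument assumes that local minimizers of the hard instances are separated by $1/\mathrm{poly}$ and that rounding recovers one exactly. Nothing in the proposal establishes either claim. With the specialized statement above, neither is needed and both conclusions are immediate. Exact computation is the case $\epsilon=0$. A hypothetical FPTAS run at any fixed $\epsilon<1/2$ would be a polynomial-time algorithm, contradicting the hardness.
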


\begin{proof} 
Specializing the argument in \citep[Theorem~2.6, p.~7]{ahmadi2022complexity} by replacing $\lceil 3 c^{\,n}\sqrt{n}\rceil$ by $1$ yields the following: 
unless $P=NP$, there is no polynomial-time algorithm that finds an $\epsilon$-approximate local minimizer (for any constant $\epsilon\in[0,0.5)$) of a quadratic program over a simplex.
If an exact local minimizer could be computed in polynomial time, then, in
particular, an $\epsilon$-approximate local minimizer (take $\epsilon=0$)
could also be computed in polynomial time, contradicting  Lemma~\ref{lemma:local_min_polytope}. 
Therefore, computing an exact local minimizer over a simplex is NP-hard, and no FPTAS exists unless $P=NP$.
\end{proof}

Hence, corresponding to single-player IREFGs, we have:
\begin{proposition}\label{prop:local_optimum_exante}
Finding a local optimum of \Cref{eq:ex-ante} is NP-hard. Moreover, unless $P=NP$, there is no FPTAS for this problem. NP-hardness and conditional inapproximability hold even if the game instance $\game$ has no chance nodes, a tree depth of 2, and only one information set.
\end{proposition}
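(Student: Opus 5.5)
The plan is to reduce from Corollary~\ref{cor:local_min_simplex} using the polynomial-to-game construction of Theorem~\ref{thm:polynomialgameequivalence}, specialized so that the resulting tree has no chance node, depth $2$, and a single information set. Take an instance of local minimization of a quadratic $q(x)=x^\top Q x + c^\top x + c_0$ over the simplex $\Delta^{m-1}$. Set $p:=-q$; local minimizers of $q$ are exactly local maximizers of $p$, which is a local optimum of \Cref{eq:ex-ante}. On the simplex we can homogenize. Using $\mathbf 1^\top x=1$, write
\[
p(x)=\sum_{a,b} P_{ab}\,x_a x_b \quad \text{with} \quad P:=-\bigl(Q+\tfrac12(c\mathbf 1^\top+\mathbf 1 c^\top)+c_0\mathbf 1\mathbf 1^\top\bigr).
\]
This agrees with $-q$ on $\Delta^{m-1}$, so local optima over the simplex are unchanged, and the encoding stays polynomial.

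Next I build the game directly instead of using the chance-rooted construction. The tree has a root decision node $r$ in information set $I$ with actions $\tau_1,\dots,\tau_m$. Each child of $r$ is another decision node, also placed in $I$, so the player is absentminded. Each of these nodes again has actions $\tau_1,\dots,\tau_m$ leading to terminal nodes $z_{ab}$ with payoff $\rho(z_{ab}):=P_{ab}$. Under the behavioral strategy $x\in\Delta^{m-1}$ at $I$, the leaf $z_{ab}$ is reached with probability $x_a x_b$, so $u(x)=\sum_{a,b}P_{ab}x_ax_b=p(x)$. The tree has no chance nodes, depth $2$, one information set, and $O(m^2)$ nodes with rational payoffs, so the reduction runs in polynomial time. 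Since the strategy space equals $\Delta^{m-1}$ and the objectives coincide identically, the local optima of \Cref{eq:ex-ante} are exactly the local minimizers of $q$.

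Hence a polynomial-time algorithm for local optima of \Cref{eq:ex-ante} would give one for Corollary~\ref{cor:local_min_simplex}, proving NP-hardness. The same identity transfers approximation guarantees: an FPTAS for the game problem, returning points close to a local optimum with runtime polynomial in $1/\epsilon$, would yield an FPTAS for the quadratic problem, which is ruled out unless $P=NP$.

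The main obstacle I expect is making sure the notion of approximation in the no-FPTAS claim, namely distance to a local optimizer, is preserved exactly. It is, because the map is the identity on the strategy space. A second check is that the homogenization does not change local optimality, which holds since the two functions coincide on the feasible set. If non-square payoffs are preferred, $P$ can first be symmetrized; this changes nothing.
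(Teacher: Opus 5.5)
Your proposal is correct and follows the route the paper intends: the paper gives no written proof and states this proposition as an immediate consequence (``Hence, \dots'') of Corollary~\ref{cor:local_min_simplex} via the polynomial--game correspondence. Your homogenization $p(x)=x^\top P x$ on the simplex, together with the depth-$2$, chance-free, single-infoset absentminded tree, is exactly what the stated structural restrictions require; the general construction of Theorem~\ref{thm:polynomialgameequivalence} would instead produce a chance root and depth $d_p+1=3$. Because your reduction is the identity on the strategy space, local optimality and the distance-based inapproximability carry over unchanged.
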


Following this, we can improve the results from \citep[Proposition~4]{tewolde2023computational}:
\begin{proposition}\label{prop:NP-hard_exante}
Finding the ex-ante optimal strategy $\mu^*$ of a given extensive-form game instance $\game$ is NP-hard. Specifically:
\begin{enumerate}
\item Unless $P=NP$, no FPTAS exists. NP-hardness and conditional inapproximability hold even if the game has no chance nodes, a tree depth of 2, and a single information set.
\item Unless $NP=ZPP$, no FPTAS exists. NP-hardness and conditional inapproximability hold even if the game has a tree depth of 3 and a single information set.
\item NP-hardness holds even without absentmindedness, with tree depth 4 and two actions per information set.
\item NP-hardness holds even without absentmindedness, with tree depth 3 and three actions per information set.
\end{enumerate}
\end{proposition}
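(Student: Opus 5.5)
The plan is to prove the four items by separate polynomial-time reductions from known hard polynomial optimization problems. Each reduction will use the game-from-polynomial construction of \Cref{thm:polynomialgameequivalence}, modified by hand where the depth, chance, or absentmindedness requirements force a more economical tree. The common principle is that the ex-ante value of the constructed game equals the global optimum of the source polynomial over the product of simplices, and that the map is polynomial time and preserves values exactly (up to a known positive scaling). Hence exact NP-hardness transfers, and so does the nonexistence of an FPTAS.

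For item 1, I start from global maximization of a quadratic form $x^\top Q x$ over a single simplex $\Delta^{m-1}$. By the Motzkin--Straus theorem, taking $Q$ to be the adjacency matrix of a graph plus the identity, the optimal value is $1-1/\omega(G)$. So the problem is NP-hard, and the known inapproximability of standard quadratic programs shows there is no FPTAS unless $P=NP$. I would also note that this strengthens the local-optimum statement of \Cref{cor:local_min_simplex} and Proposition~\ref{prop:local_optimum_exante} to the global problem. The game I build has no chance node and a single infoset $I$ with actions $\{1,\dots,m\}$. The root lies in $I$, and action $a$ leads to a second node, also in $I$. At that second node, action $b$ leads to a leaf with payoff $Q_{ab}$. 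This tree has depth $2$ (it is absentminded), and its utility is exactly $u(x)=\sum_{a,b}Q_{ab}x_ax_b$.

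For item 2, I would follow \citep[Proposition~4]{tewolde2023computational}, which encodes a cubic, clique-type objective on one infoset. Its chance root has depth-2 branches of the form above, giving total depth $3$. The hardness of approximating the clique number (H\aa stad) then rules out an FPTAS unless $NP=ZPP$. The value must again equal an affine transform of $1/\omega$, so that a $(1+\epsilon)$-approximation of the value yields an approximation of $\omega$ good enough to contradict H\aa stad's bound.

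For items 3 and 4, the games must be non-absentminded, so by Proposition~\ref{prop:multilinear} the utility is multi-affine. I would therefore reduce from multilinear $0/1$ problems.
\begin{itemize}
\item For item 3, I reduce from MAX-3-SAT. Each Boolean variable becomes an infoset with two actions (true/false). A chance root picks a clause uniformly, and then three nodes, one in each of the three distinct variable infosets, decide whether the clause is satisfied. A payoff is placed on the satisfying leaves. The depth is $1+3=4$, and no infoset repeats along a path.
\item For item 4, I instead use three actions per infoset to shorten the tree to depth $3$. Here I would encode, for example, MAX-CUT or a 3-coloring objective: a chance node picks an edge, and two nodes in the infosets of its endpoints pick colors.
\end{itemize}
By the vertex argument following Proposition~\ref{prop:multilinear}, the ex-ante optimum is attained at a pure strategy. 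Its value therefore equals the combinatorial optimum divided by the number of clauses or edges.

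The main obstacle is item 1's inapproximability claim. It needs the precise statement that the standard quadratic program admits no FPTAS unless $P=NP$, and this must survive the scaling and additive shift introduced by the game payoffs. Entries of $Q$ must be nonnegative, or else shifted by a constant, which is harmless on the simplex since $\sum_{a,b}x_ax_b=1$. I would verify that the relative-error guarantee is taken with respect to the same normalization used in the known hardness result.
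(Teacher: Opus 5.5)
Your route is correct in outline but differs from the paper's. The paper builds no new reduction. It obtains item~1 from Proposition~\ref{prop:local_optimum_exante} by a one-line observation: every global maximizer is a local maximizer. Hence hardness of local optimization of a quadratic over one simplex transfers verbatim to the ex-ante problem on depth-2, chance-free, single-infoset games. So does the inapproximability, which is of the Ahmadi--Zhang kind, i.e.\ about producing a point close to a local optimizer. Items~2--4 are simply quoted from Proposition~4 of Tewolde et al.

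You instead attack the global value directly via Motzkin--Straus. This is self-contained and yields the standard value-based notion of FPTAS, whereas the paper's argument inherits a proximity-to-optimizer notion through \Cref{cor:local_min_simplex}. Your explicit reductions for items~3 and~4 are correct: MAX-3-SAT with a clause-picking chance root, and MAX-3-CUT/3-colorability with an edge-picking chance root. Distinct variables or endpoints on each branch give non-absentmindedness, and the depths are $4$ and $3$. By the vertex argument after Proposition~\ref{prop:multilinear}, the ex-ante value equals the combinatorial optimum divided by the number of clauses or edges.

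You also do not need H\aa stad for item~2. Since $P\subseteq ZPP\subseteq NP$, ``an FPTAS implies $P=NP$'' already gives ``an FPTAS implies $NP=ZPP$''. Moreover, depth-2 chance-free single-infoset games sit inside the depth-3 single-infoset class; prefix a one-branch chance node if the depth must be exactly~$3$. So item~2 is a corollary of your item~1.

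There is one concrete slip in item~1 that must be fixed: with $Q=A_G+I$ the reduction encodes nothing. Since $0\le Q_{ab}\le 1$ and $x\ge 0$, you have $x^\top Q x\le(\mathbf{1}^\top x)^2=1$, with equality at every vertex $e_a$. So the optimum is $1$ for every graph. Motzkin--Straus reads $\max_{x\in\Delta}x^\top A_G x=1-1/\omega(G)$; the identity belongs in the minimization form $\min_{x\in\Delta}x^\top(I+A_{\bar G})x=1/\omega(G)$.

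Take $Q=A_G$, which also keeps the payoffs nonnegative, as a relative-error guarantee needs. The possible optimal values $1-1/k$, $k\le n$, are then more than $1/n^2$ apart. A value approximation with $\epsilon=1/(2n^2)$, which an FPTAS delivers in polynomial time, therefore pins down $\omega(G)$ exactly, giving the required contradiction with $P\neq NP$.
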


\begin{proof}
To prove statement 1, we note that since any global optimum is in particular a local optimum, finding a global maximum must be at least as hard as finding a local one. Thus, the same hardness and inapproximability from Proposition~\ref{prop:local_optimum_exante} immediately carry over to the problem of finding a global optimum, i.e., ex-ante optimal strategy $\mu^*$.

The statements 2-4 are proven in \citep[Proposition~4]{tewolde2023computational}.
\end{proof}

Following \cite{tewolde2023computational}, 
% From Proposition~\ref{prop:NP-hard_exante}, 
we compile the core correspondences between the polynomial optimization formulation of \Cref{eq:ex-ante} and solution concepts in single-player IREFGs, together with the computational complexity of computing each notion, in Table~\ref{tab:POP-IREFG}.

\begin{table}[ht]
\caption{Correspondence between single-player IREFGs and POPs with complexity results.}
\label{tab:POP-IREFG}
\centering
\renewcommand{\arraystretch}{1.2}
\begin{tabular}{c|l|l|c|c}
\textbf{POP Optimality  Notion} & \multicolumn{2}{c|}{\textbf{IREFG Equilibrium Notion}} & \textbf{Rel.} & \textbf{Complexity} \\
\hline
\multirow{2}{*}{Global maximizer} 
  & \multicolumn{2}{c|}{Ex-ante optimal strategy} & $\Leftrightarrow$ & \multirow{2}{*}{NP-hard} \\ \cline{2-4}
  & \multirow{2}{*}{(EDT,GDH)} 
      & with absentmindedness & $\Rightarrow$ &  \\ \cline{3-5} \cline{1-1}
\multirow{2}{*}{KKT point} 
  &  & without absentmindedness & $\Leftrightarrow$ & \multirow{2}{*}{CLS-hard} \\ \cline{2-4}
  & \multicolumn{2}{c|}{(CDT,GT)} & $\Leftrightarrow$ &  \\ 
\end{tabular}
\end{table}

\begin{remark}
The three equilibrium notions for single-player IREFGs form an inclusion chain in general \citep[Lemma~17]{tewolde2023computational}:
$\text{ex-ante optimal} \Rightarrow \text{(EDT,GDH)-equilibrium} \Rightarrow \text{(CDT,GT)-equilibrium}$.
In games without absentmindedness, this
strengthens to an equivalence \citep[Lemma~13]{tewolde2023computational}: a strategy is (CDT,GT)-equilibrium if and only if it is (EDT,GDH)-equilibrium. Moreover, when the game has a single information set, the blockwise argmax condition of \citep[Lemma~15]{tewolde2023computational} reduces to a global argmax, so every (EDT,GDH)-equilibrium is also ex-ante optimal. Since ex-ante optimization is NP-hard, it follows that computing an (EDT,GDH)-equilibrium is already NP-hard in this one-infoset setting.
\end{remark}

\section{The Moment-SOS Hierarchy}\label{appsec:momentsos}
In this section, we present a thorough derivation of the Moment-SOS hierarchy which we have specialized to single-player IREFGs. 
% Recall the definition of ex-ante optima below:

% \exanteoptimal*

\subsection{Problem Reformulation}

Let $m=\sum_{j=1}^\ell m^j$.  Denote the strategy variables collectively by $\mu=(\mu^j_a)\in\mathbb{R}^m$.  The feasible set is
\begin{equation}\label{strategy_space}
\mathcal{S} = \bigtimes_{j=1}^\ell \Delta(A_{I^j})
= \{\mu\in\mathbb{R}^m: \mu^j_a\ge0\; \forall j,a,\; \sum_{a=1}^{m_j}\mu^j_a=1\;\forall j\}.
\end{equation}
Define constraint polynomials
\begin{equation}
g^j_a(\mu)=\mu^j_a, \quad
h^j(\mu)=\sum_{a=1}^{m^j}\mu^j_a-1.
\end{equation}
By construction,
\begin{align}
    \mathcal{S} = 
  \{\,\mu\in\mathbb{R}^m : g^j_a(\mu)\ge0, j \in \llbracket \ell \rrbracket, a \in \llbracket m^j \rrbracket,\ h^j(\mu)=0,j \in \llbracket \ell \rrbracket\}.
\end{align}
Accordingly, the quadratic module of $\mathcal{S}$ is
\begin{align}
    Q(\mathcal S)
  = 
  \Bigl\{
    \sigma_0(\mu)
    + \sum_{j=1}^\ell\sum_{a=1}^{m^j}\sigma^j_a(\mu)\,g^j_a(\mu)
    + \sum_{j=1}^\ell p^j(\mu)\,h^j(\mu)
    : \sigma_0,\sigma^j_a\in\Sigma[\mu],\;p^j\in\mathbb{R}[\mu]
  \Bigr\}.
\end{align}

An important property in our formulation is the Archimedean property, defined below:
\begin{definition}[Archimedean property]\label{def:Archimedean}
The quadratic module \(Q(\mathcal X)\) is \emph{Archimedean} if there exists \(N>0\) such that
\[
  N - \|x\|^2
  \;\in\;
  Q(\mathcal X).
\]
\end{definition}
This property guarantees that $\mathcal X$ is compact.

For any $\mu\in\mathcal S$ one has $0\le \mu^j_a\le 1$, hence
$\|\mu\|^2=\sum_{j,a}(\mu^j_a)^2 \le\sum_{j,a}\mu^j_a
= \sum_{j=1}^{\ell}\sum_{a=1}^{m^j}\mu^j_a=\ell$.
Therefore the quadratic polynomial \(g_0(\mu):=\ell-\|\mu\|^2\) is nonnegative on $\mathcal S$.
Adding the redundant inequality $g_0(\mu)\ge 0$ to the description of $\mathcal S$ gives
$\mathcal S =\bigl\{\mu: g^j_a(\mu)\ge 0,h^j(\mu)=0,g_0(\mu)\ge 0\bigr\}$,
so that $N-\|\mu\|^2\in Q(\mathcal S)$ with $N=\ell$. Hence, the quadratic module $Q(\mathcal S)$ is Archimedean.

From~\Cref{eq:ex-ante}, the expected payoff of a single-player IREFG can be written as a polynomial $u(\mu)$.  We seek
\begin{equation}\label{eq:exante_max_a}
u^*\ =\ \max_{\mu\in \mathcal{S}}\, u(\mu).
\end{equation}

\subsection{Moment-SOS Relaxation}
To approximate the nonconvex problem in \Cref{eq:exante_max_a},
we use Lasserre’s Moment-SOS hierarchy. We derive a pair of dual hierarchies—one in the space of SOS multipliers (primal) and one in the space of moments (dual)—which yield provable upper bounds on $u^*$.

\noindent\textbf{(a) Primal (SOS‐relaxation)}

Observe that
\begin{align}
    u^* =
  \inf\bigl\{t : t - u(\mu) \ge0\quad\forall\,\mu\in\mathcal S\}
  =\inf\bigl\{t : t - u(\mu) >0\quad\forall\,\mu\in\mathcal S\}.
\end{align}

Since $\mathcal S=\{\mu:g^j_a(\mu)\ge0,\;h^j(\mu)=0\}$ has Archimedean quadratic module $Q(\mathcal S)$, Putinar’s Positivstellensatz \citep{putinar1993positive} gives the equivalent infinite‐dimensional certificate
\begin{align}
    u^* =
  \inf\bigl\{t :\,t-u(\mu) \in Q(\mathcal S)\bigr\}.
\end{align}

However, membership in $Q(\mathcal S)$ is still an infinite‐dimensional constraint.

\paragraph{Truncation to finite SDP.}
Let $d_u := \lceil\deg(u)/2\rceil$ and
$d_\gS := \max_{j,a}\{\lceil\deg (g^j_{a})/2\rceil,\lceil\deg (h^j)/2\rceil\} = 1$, then
% \begin{equation}\label{eq:degree-start_a}
    $d_0:=\max\{d_u,d_\gS\} =d_u$.
% \end{equation}
Fix any relaxation order $d\ge d_0$.  We truncate the sums‐of‐squares and polynomial multipliers to degree $2d$, obtaining:

\begin{equation}\label{eq:truncated_SOS_a}
\begin{aligned}
  u_d^{\rm sos}\; =\; & \inf_{\substack{t,\,\sigma_0,\{\sigma^j_a\},\{p^j\}}}
    \; t \\
  \text{s.t.}\quad
    & t - u(\mu) \in Q_d(\mathcal{S}).
\end{aligned}
\end{equation}

\noindent\textbf{(b) Dual (Moment-relaxation)}  

We begin by expressing the original problem (\ref{eq:exante_max_a})
in the space of Borel measures $\phi$ supported on $\mathcal S$.  Since any admissible $\phi$ must satisfy $\phi\ge0$ and $\int d\phi =1$, one has the exact infinite‐dimensional program
\begin{equation}\label{inf_mom}
  u^* =
  \sup_{\substack{\phi\in\mathcal M_+(\mathcal S)\\\int d\phi=1}}
    \int u(\mu)\,d\phi(\mu).
\end{equation}

A measure $\phi$ is equivalently described by its full sequence of moments  
\begin{align}
y_\alpha \;=\;\int \mu^\alpha\,d\phi(\mu),
  \quad
  \forall\,\alpha\in \N^m.
\end{align}
Introduce the Riesz functional \(L_y:\mathbb R[\mu]\to\mathbb R\) by
$L_y\bigl(\mu^\alpha\bigr) \;=\; y_\alpha,\ \forall\,\alpha$.
Expand $u(\mu)=\sum_{\alpha}u_\alpha\,\mu^\alpha$,
so that
\begin{align}
    \int u(\mu)\,d\phi(\mu)
  =\sum_{\alpha}u_\alpha\int\mu^\alpha\,d\phi
  =\sum_{\alpha}u_\alpha\,y_\alpha
  =:L_y(u).
\end{align}
Requiring $\phi\ge0$ and $\mathrm{supp}(\phi)\subseteq\mathcal S$ is equivalent to the following linear matrix constraints on $y$:

\begin{align}
  M_d(y)\;\succeq\;0, \quad \forall d\in \mathbb N
  &\quad\Longleftrightarrow\quad
   \int v(\mu)^2\,d\phi\ge0,
   \quad\forall\,v\in\mathbb R[\mu],
  \label{mom:pos}\\
  M_{d}\bigl(g^j_{a}\star y\bigr)\;\succeq\;0, \quad \forall d\in \mathbb N
  &\quad\Longleftrightarrow\quad
   \int g^j_{a}(\mu)\,v(\mu)^2\,d\phi\ge0,
   \quad\forall\,v\in\mathbb R[\mu],
  \label{mom:loc}\\
  L_y\bigl(h^j\,q\bigr)=0,
  \quad\forall\,q\in\mathbb R[\mu],
  &\quad\Longleftrightarrow\;
  \int h^j(\mu)\,q(\mu)\,d\phi=0, \quad\forall\,q\in\mathbb R[\mu],
  \label{mom:eq}\\
  y_{0}=1
  &\quad\Longleftrightarrow\quad
  \int 1\,d\phi=1.
  \label{mom:norm}
\end{align}
Thus \Cref{inf_mom} can be rewritten as the (infinite‐dimensional) moment program

\begin{equation}\label{eq:infinite_mom_a}
\begin{aligned}
u^*\; =\; &\sup_{y}\; L_y(u) \\
\text{s.t.}\quad
&(\ref{mom:pos}),(\ref{mom:loc}),(\ref{mom:eq}),(\ref{mom:norm}).        
\end{aligned}
\end{equation}

\paragraph{Truncation to finite SDP.}  
Fix any relaxation order $d\ge d_0$.
In practice, we truncate \Cref{eq:infinite_mom_a} to the degree-$d$ moment relaxation:

\begin{equation}\label{eq:truncated_MOM_a}
\begin{aligned}
u_d^{\rm mom}\; =\; &\sup_{y}\;  L_y(u)\\
  \text{s.t.}\quad
    & M_d(y)\succeq0, \\
    & M_{d-1}\bigl(g^j_a\star y\bigr)\succeq0,\quad\forall j,a, \\
    & L_y(h^j\,q)=0,\quad \forall j,\ \forall q\in\R[\mu],\ \deg(h^j q)\le 2d,\\
    & y_0=1.   
\end{aligned}
\end{equation}

% To better analyze the Moment-SOS hierarchy, we recall some well-known facts.

% \begin{proposition}\label{prop:hierarchy-properties}
% For all $d\ge d_0$, the following properties hold:
% \begin{enumerate}
% \item[(i)] \emph{Weak duality.} $f_d^{\mathrm{mom}}\le f_d^{\mathrm{sos}}$.
% \item[(ii)] \emph{Monotonicity.} Both sequences are nonincreasing in $d$:
% $f_d^{\mathrm{mom}} \ge f_{d+1}^{\mathrm{mom}}$ and
% $f_{d}^{\mathrm{sos}} \ge f_{d+1}^{\mathrm{sos}}$.
% \item[(iii)] \emph{Asymptotic convergence.} If the quadratic module
% $Q(\mathcal S)$ is Archimedean,
% then
% $\lim_{d\to\infty} f_d^{\mathrm{mom}} =
% \lim_{d\to\infty} f_d^{\mathrm{sos}} =
% u^*$.
% \end{enumerate}
% \end{proposition}
% \hierarchyproperties*
% \begin{proof}
% These results are classical: (i) follows from SDP weak duality between
% \Cref{eq:truncated_MOM_a} and \Cref{eq:truncated_SOS_a}; (ii) follows from nesting of feasible sets as $d$ increases; Since $Q(\gS)$ is Archimidean, (iii) follows from Putinar’s Positivstellensatz and Lasserre’s hierarchy \citep{putinar1993positive, lasserre2001global, Laurent2009, lasserre2024moment}. 
% In particular, combining (i) with the fact that $\delta_{\mu^*}$ is feasible
% for the moment relaxation (hence $f_d^{\mathrm{mom}}\ge u^*$ for all $d$) gives
% $u^*\le\ f_d^{\mathrm{mom}}\le\ f_d^{\mathrm{sos}}$ for all $d\ge d_0$.
% \end{proof}

\subsection{Moment-SOS Hierarchy for Non-Absentminded Games}\label{appsec:momentsos:NAM}

Recall that because $u$ is multi-affine, we have the following:
\begin{equation}\label{eq:value-equality-NAM_a}
u^*\;=\;\max_{\mu\in\mathcal S} u(\mu)\;=\;\max_{\mu\in\mathcal S_{\rm vr}} u(\mu).
\end{equation}

Let 
$d_0^{\rm vr} :=\max\{d_u,d_{\gS_{\rm vr}}\}= d_u$,
and fix any relaxation order $d\ge d_0^{\rm vr}$.
The degree-$d$ SOS relaxation of \Cref{eq:value-equality-NAM_a} is
\begin{equation}\label{eq:sos_NAM}
\begin{aligned}
  u_{d}^{\mathrm{sos,vr}}\;=\;& \inf\; t\\
  \text{s.t.}\quad
  & t - u(\mu) \in Q_d(\mathcal{S_{\rm vr}}).
\end{aligned}
\end{equation}

The corresponding degree-$d$ moment relaxation reads
\begin{equation}\label{eq:moment_NAM}
\begin{aligned}
  u_{d}^{\mathrm{mom,vr}}
  \;=\;&\sup_{y}\ \ L_y\!\big(u\big)\\
  \text{s.t.}\quad
  & M_d(y)\ \succeq\ 0,\\
  & L_y\big(h^j\,q\big)=0 \quad \forall j,\ \forall q\in\R[\mu],\ \deg(h^j q)\le 2d,\\
  & L_y\big(b^j_a\,q\big)=0 \quad \forall j,a,\ \forall q\in\R[\mu],\ \deg(b^j_a q)\le 2d,\\
  & y_{0}=1.
\end{aligned}
\end{equation}
Note that there are no localizing PSD constraints for $\mu^j_a\ge0$, because
the binomials $b^j_a=0$ already enforce $\mu^j_a\in\{0,1\}\subset[0,1]$.

\subsection{Pseudo-Expectations and Extracting Solutions}\label{appsec:extraction}
A feasible point $y=(y_\alpha)_{|\alpha|\le 2d}$ of the truncated moment SDP
in \Cref{eq:truncated_MOM_a} defines a linear functional
$\widetilde{\mathrm{E}}_d:\mathbb{R}[\mu]_{2d}\to\mathbb{R}$ via
$\widetilde{\mathrm{E}}_d[\mu^\alpha]=y_\alpha$.
This functional behaves like an expectation operator up to degree $2d$:
it is normalized ($\widetilde{\mathrm{E}}_d[1]=1$), positive on squares
($\widetilde{\mathrm{E}}_d[q^2]\ge 0$ for all $q$ with $\deg q\le d$),
and it enforces feasibility through the linear identities induced by the
constraints.
Such a functional is often called a degree-$2d$ pseudo-expectation.

Expanding $u(\mu)=\sum_{\alpha}u_\alpha\,\mu^\alpha$, the moment objective
is $L_y(u)=\sum_\alpha u_\alpha\,y_\alpha=\widetilde{\mathrm{E}}_d[u(\mu)]$.
Hence the truncated moment relaxation in \Cref{eq:truncated_MOM_a} can be viewed as:
\[
f_d^{\rm mom}
\;=\;
\sup\Bigl\{\, \widetilde{\mathrm{E}}_d\!\big[u(\mu)\big]\ :\
\widetilde{\mathrm{E}}_d\ \text{is a degree-}2d\ \text{pseudo-expectation consistent with }(g\!\ge\!0,\ h\!=\!0)\,\Bigr\}.
\]
In words, the moment SDP maximizes the pseudo-expected ex-ante payoff
over all degree-$2d$ ``virtual laws'' that satisfy the polynomial feasibility
conditions up to degree $2d$.

A feasible $y$ in \Cref{eq:truncated_MOM_a} need not come from any genuine probability measure on $\mathcal S$, and it generally encodes only a
pseudo-expectation. A fundamental exception is the flat extension condition \citep{curto1996solution}: if for
some $s\le d$,
\begin{equation}\label{eq:flatness-pseudo_a}
\mathrm{rank}\,M_s(y)=\mathrm{rank}\,M_{s-1}(y),
\end{equation}
then there exist atoms
$\mu^{(1)},\dots,\mu^{(r)}\in\mathcal S$ with $r= \rank(M_s(y))$ and weights
$\lambda_k\ge 0$ with $\sum_k\lambda_k=1$ such that
\[
\widetilde{\mathrm{E}}_s[p]\;=\;L_y(p)
\;=\;\sum_{k=1}^{r}\lambda_k\,p\big(\mu^{(k)}\big)
\qquad\forall\,p\in\mathbb{R}[\mu]_{2s}.
\]
Thus a flat pseudo-expectation is the true expectation with respect to a
finitely atomic probability measure supported on $\mathcal S$.
Consequently, in the flat regime the SDP objective
$\widetilde{\mathrm{E}}_s[u]=L_y(u)$ equals the true ex-ante payoff
under optimal strategies, and the atoms $\{\mu^{(k)}\}$ (optimal solutions)
can be extracted from $M_s(y)$ by standard linear-algebraic procedures
(e.g., multiplication matrices).

% \subsection{Flatness and extraction for IREFGs}

% \subsubsection{Flatness condition}
% For the order-$s$ moment relaxations \eqref{eq:truncated_MOM} and
% \eqref{eq:moment_NAM}, a standard sufficient flatness test certifying exactness
% and enabling solution extraction is
% $\rank M_s(y^*) \;=\; \rank M_{s-d_K}(y^*)$,
% where $d_K = d_\mathcal{S} := \max\{\lceil\tfrac{\deg g_{ij}}{2}\rceil, \lceil\tfrac{\deg h_i}{2}\rceil\}$ in \eqref{eq:truncated_MOM}, and $d_K = d_\mathcal{S_{\rm NAM}} := \max\{\lceil\tfrac{\deg h_{i}}{2}\rceil, \lceil\tfrac{\deg b_{ij}}{2}\rceil\}$ in \eqref{eq:moment_NAM}. Note that $d_\mathcal{S} = d_\mathcal{S_{\rm NAM}} =1$, hence in both cases the flatness condition specializes to
% \begin{equation}\label{eq:rankcondition}
%     \rank M_s(y^*) \;=\; \rank M_{s-1}(y^*).
% \end{equation}
% Whenever \eqref{eq:rankcondition} holds (for some finite $s$), the relaxation is
% exact and the optimal solutions (atoms) can be extracted. 
% In the non-absentmindedness case, the
% equality ideal $\mathcal J=\langle h_i,\ b_{ij}\rangle$ is zero-dimensional and
% radical; by flat-extension results (Curto-Fialkow; see also
% \cite{HenrionLasserre,Laurent}), \eqref{eq:rankcondition} is guaranteed to hold
% at some finite order $s$. In games with absentmindedness, flatness is not automatic, but if \eqref{eq:rankcondition} is observed at some order $s$, the same extraction procedure below applies verbatim.

\subsubsection{Extraction procedure}
% \label{appsec:extraction}
A standard sufficient flatness test certifying exactness and enabling solution extraction is
$\rank M_s(y^*) = \rank M_{s-d_K}(y^*)$,
where $d_K = d_\mathcal{S}$ in \Cref{eq:truncated_MOM_a}, and $d_K = d_\mathcal{S_{\rm vr}}$ in \Cref{eq:moment_NAM}. Note that $d_\mathcal{S} = d_\mathcal{S_{\rm vr}} =1$, then in both cases the flatness condition specializes to \Cref{eq:flatness-pseudo_a}.

Let $y^*$ be optimal for the degree-$d$ moment SDP and assume \Cref{eq:flatness-pseudo_a} holds at some order $s\le d$. Set
$r:=\rank M_s(y^*) = \rank M_{s-1}(y^*)$.
Then $y^*$ admits an $r$-atomic representing measure
$\sum_{k=1}^{r}\lambda_k\,\delta_{\mu^{(k)}}$ supported on the feasible set,
with $\lambda_k>0$ and $\sum_k\lambda_k=1$. Since
$L_{y^*}(u)=\sum_k \lambda_k u(\mu^{(k)})$ equals the global optimum, every
atom satisfies $u(\mu^{(k)})=u^*$ and hence is a global maximizer of $u$.

We now recover $\{\mu^{(k)}\}_{k=1}^{r}$ directly from the optimal moment matrix $M_s(y^*)$ by the standard multiplication-matrix routine. The extraction steps are based on \cite{henrion2005detecting}.

Let $v_s(\mu)$ be the vector of all monomials in $\mu$ of total degree
$\le s$, with length $N_s=\binom{m+s}{s}$. Then the order-$s$ moment matrix can be represented as:
\begin{equation}\label{eq:gram}
  M_s(y^*) \;=\; \sum_{k=1}^{r^*}\lambda_k\,v_s\big(\mu^{(k)}\big)\,v_s\big(\mu^{(k)}\big)^\top
  \;=\; V^*(V^*)^\top,
\end{equation}
where $\lambda_k\ge 0$, $\sum_{k=1}^{r}\lambda_k=1$, and $V^* \in\mathbb{R}^{N_s\times r}$ collects the columns $\sqrt{\lambda_k}\,v_s\big(\mu^{(k)}\big)$.

For computation, we form a rank factor of $M_s(y^*)$ by retaining the $r$ positive modes (e.g., via eigendecomposition or a Cholesky-type factorization):
\begin{equation}\label{eq:chol}
  M_s(y^*) \;=\; V V^{\top}, \qquad V\in\mathbb{R}^{N_s\times r}.
\end{equation}
By construction, $\mathrm{span}(V)=\mathrm{span}(V^*)$. Hence the columns of $V$ are linear combinations of
$\{\sqrt{\lambda_j}\,v_s(\mu^{(k)})\}_{k=1}^{r^*}$.

To obtain an explicit monomial basis of that subspace, 
we reduce $V$ to column-echelon form by Gaussian elimination with column pivoting,
and rescale so that the pivot block is the identity. This gives
$\widehat V = V T$, $\widehat V_{B,:}=I_{r^*}$,
with $T\in\R^{r\times r}$ invertible and $B=\{\beta_1,\dots,\beta_r\}$ the
indices of the pivot rows (monomials).  The pivot indices select a
monomial ``generating basis'':
\[
  w(\mu)\;:=\;[\mu^{\beta_1}\ \cdots\ \mu^{\beta_{r}}]^{\!T}.
\]
The same elimination step simultaneously produces linear reduction rules for
all monomials of degree $\le s$: each $\mu^\alpha$ is written, on the support,
as a linear combination of the generators $\{\mu^{\beta_k}\}_{k=1}^r$.  Stacking
these relations row-wise yields the rewriting matrix
$\mathcal{R}\in\R^{N_s\times r}$ of the form (identity in the pivot rows and
coefficients elsewhere)
\begin{equation}\label{eq:echelon}
  v_s(\mu)\;=\;\mathcal{R}\,w(\mu),
  \qquad \mathcal{R}_{B,:}=I_r,
\end{equation}
which is exactly the coordinate change from the standard monomial vector $v_s$
to the generating basis $w$ on the atoms $\{\mu^{(k)}\}_{k=1}^{r}$.

In the basis $w$, multiplication by each coordinate $\mu_\eta$ acts linearly.
For $\eta=1,\dots,m$ we build the multiplication matrices
$N_\eta\in\R^{r\times r}$ defined by
\begin{equation}\label{eq:multiplication_matrix}
  N_\eta\,w(\mu) \;=\; \mu_\eta\,w(\mu).
\end{equation}
Concretely, for the $k$th basis monomial $\mu^{\beta_k}$, form
$\gamma=\beta_k+e_\eta$; if $\gamma\in B$ set $N_\eta(:,k)=e_{\text{row}(\gamma)}$,
otherwise take $N_\eta(:,k)=\mathcal{R}_{\gamma,:}^\top$ from
\Cref{eq:echelon}.

The atoms appear as common eigenpairs of $\{N_\eta\}$. Indeed, with
$e_k:=w(\mu^{(k)})$ one has $N_\eta e_k=\mu_\eta^{(k)} e_k$ for all $\eta$.
For robust computation we form a random convex combination
\[
  N \;=\; \sum_{\eta=1}^m \lambda_\eta N_\eta,\qquad
  \lambda_\eta\ge0,\ \sum_\eta\lambda_\eta=1,
\]
which generically has simple spectrum and shares the same eigenvectors. The
ordered real Schur decomposition
\begin{equation}\label{eq:schur}
  N \;=\; Q T Q^\top,\qquad Q=[q_1\ \cdots\ q_{r}]
\end{equation}
returns orthonormal vectors $q_k$ spanning the eigenvectors
$w(\mu^{(k)})$. The coordinates of each atom are then read by Rayleigh
quotients:
\begin{equation}\label{eq:rayleigh}
  \mu_\eta^{(k)} \;=\; q_k^\top N_\eta\,q_k,
  \qquad \eta=1,\dots,m,\;\; k=1,\dots,r.
\end{equation}

The above extraction procedure is summarized in Algorithm~\ref{alg:IREFG-extract}.
\begin{algorithm}[tb]
    \caption{Extraction for single-player IREFGs}
    \label{alg:IREFG-extract}
    \begin{algorithmic}
        \STATE \textbf{Input:} Optimal solution $y^*$; order $s$ with $\rank M_s(y^*)=\rank M_{s-1}(y^*)=r$.
        \STATE \textbf{Output:} $r$ optimal strategies $\{\mu^{(k)}\}_{k=1}^{r}$.
        \STATE \textbf{1.} Factor $M_s(y^*) = V V^\top$ as in \Cref{eq:chol}.
        \STATE \textbf{2.} Build $\mathcal{R}$ such that $v_s = \mathcal{R}w$ as in \Cref{eq:echelon}.
        \STATE \textbf{3.} For each $\mu_\eta$, construct $N_\eta$ satisfying $N_\eta\,w(\mu)=\mu_\eta\,w(\mu)$ as in \Cref{eq:multiplication_matrix}.
        \STATE \textbf{4.} Form $N=\sum_{\eta}\lambda_\eta N_\eta$ with random $\lambda_\eta$, and compute $N=QTQ^\top$ as in \Cref{eq:schur}.
        \STATE \textbf{5.} For each $k$, set $\mu_\eta^{(k)}=q_k^\top N_\eta q_k$ for all $\eta$ as in \Cref{eq:rayleigh}.
    \end{algorithmic}
\end{algorithm}

Since both $\gS$ and $\gS_{\rm vr}$ are nonempty, both \Cref{eq:truncated_MOM_a} and \Cref{eq:moment_NAM} are always feasible (take, e.g., the Dirac measure at any $\mu\in\mathcal S$ or $\mathcal S_{\rm vr}$). We can thus solve them using the standard procedure: 

\paragraph{Moment-SOS loop.}
Recall $d_0=d_0^{\rm vr}=d_u$.  Initialize $d:=d_0$ and do:
\begin{enumerate}
\item Solve the order-$d$ moment SDP; obtain optimal $y_d$ and  upper bound $u^{\mathrm{mom}}_{d}=L_{y_d}(u)$.
\item For $s=d_0,\ldots,d$, test the rank condition in
\Cref{eq:flatness-pseudo_a}.
If it holds for some $s$, \emph{terminate}: the relaxation is exact and one
can extract the global maximizers from $M_s(y_d)$ using
Algorithm~\ref{alg:IREFG-extract}.
\item Otherwise, increase the relaxation order: $d\gets d+1$ and go back to Step~1.
\end{enumerate}

For single-player NAM-IREFGs,
\Cref{lemma:rank-finite} shows that
$\rank M_s(y)=\rank M_\ell(y)\ \forall s>\ell$,
and Statement (iii) of Theorem~\ref{thm:singleplayermain} gives exactness at degree $\ell{+}1$:
$u^{\mathrm{mom,vr}}_{\ell+1}=u^*$.
Hence the flatness test in Step~2 necessarily succeeds at
$s=d=\ell{+}1$ (indeed for all $s>\ell$). The loop is therefore guaranteed to terminate at this fixed order, determined solely by the number of infosets, and the extraction returns at least one optimal pure strategy.

For single-player IREFGs with absentmindedness, the same loop produces a monotone sequence of upper bounds
$u^{\mathrm{mom}}_{d}\downarrow u^*$ and terminates as soon as flatness is detected at some order $s$, which certifies exactness and enables extraction. In the absence of flatness, one increases $d$ to tighten
the bound, with asymptotic convergence to $u^*$ guaranteed. 
If $u$ is generic, Statement (ii) of Theorem~\ref{thm:singleplayermain} ensures finite termination of the loop, with extraction of at least one certified global maximizer for the KKT-based problem.

\section{The Select-Verify-Cut Procedure}\label{appsec:svc}
Recall the joint KKT system
\begin{equation}
% \label{eq:NE-KKT-system}\tag{KKT}
\left\{
\begin{aligned}
& w_i^j(\mu)+\nu_i^j(\mu)\mathbf 1+\lambda_i^j(\mu)=0,\quad \lambda_i^j(\mu)\ge0, \\
& g_i^j(\mu)\ge0, \quad h_i^j(\mu)=0,\quad g_{i,a}^j(\mu)\, \lambda_{i,a}^j(\mu)=0,
\end{aligned}\right.
\qquad \forall i\in\llb n\rrb,\ j\in\llb\ell_i\rrb,\ a\in\llb m_i^j\rrb.
\end{equation}

The following exposition is based on the method introduced in~\cite{nie2024nash}.
\noindent\paragraph{(i) Select.} 
% (pick one simultaneous-KKT candidate).}
Let $n_0:=\sum_{i=1}^n\sum_{j=1}^{\ell_i} m_i^j$ be the total dimension, set $[\mu]_1=(1,\mu^\top)^\top$, and choose a generic positive definite matrix $\Theta\in\R^{(n_0+1)\times(n_0+1)}$. Then all NEs are feasible points of
% \begin{equation}\label{eq:selector-IREFG}
% \min_{\mu}\quad \phi_\Theta(\mu):=[\mu]_1^\top\Theta[\mu]_1
% \quad\text{s.t.}\quad \mu\ \text{satisfies}\ \Cref{eq:NE-KKT-system}.
% \end{equation}
\begin{equation}\label{eq:selector-IREFG}
\min_{\mu}\ \ \varphi_\Theta(\mu):=[\mu]_1^\top\Theta[\mu]_1
\quad\text{s.t.}\quad
\begin{cases}
\mu \text{ satisfies (\ref{eq:NE-KKT-system})},\\
u_i(\mu_i,\mu_{-i})-u_i(v_i,\mu_{-i})\ \ge\ 0,\quad \forall v_i\in K_i,\ \forall i,
\end{cases}
\end{equation}
where $K_i$ is the current (finite) set of deviation profiles used as cuts (initially $K_i=\emptyset$).
If \Cref{eq:selector-IREFG} is infeasible, there is no NE. 
If it is feasible, a minimizer exists because the feasible set is compact and $\varphi_\Theta$ is continuous.

\noindent\paragraph{(ii) Verify.} 
% (is the candidate an NE?).}
Let $\hat{\mu}\in\R^{n_0}$ be an optimizer of \Cref{eq:selector-IREFG}.
For each player $i$, evaluate the best-response improvement against $\hat\mu_{-i}$ by solving the KKT-restricted POP (same value as the unrestricted best-response since LICQ holds on products of simplices):
\begin{equation}\label{eq:deviation-IREFG}
\begin{aligned}
\omega_i \ := \ 
\max_{\mu_i}\quad& u_i(\mu_i,\hat\mu_{-i})-u_i(\hat\mu_i,\hat\mu_{-i})\\
\text{s.t.}\quad
& w_i^j(\mu)+\nu_i^j(\mu)\mathbf 1+\lambda_i^j(\mu)=0,\quad \lambda_i^j(\mu)\ge0,\qquad \forall j,\\
& g_i^j(\mu)\ge0,\quad h_i^j(\mu)=0,\quad g_{i,a}^j(\mu)\,\lambda_{i,a}^j(\mu)=0,\qquad \forall j,a.
\end{aligned}
\end{equation}
If every $\omega_i\le 0$, no player can profitably deviate and $\hat\mu$ is an NE.

\noindent\paragraph{(iii) Cut.}
% (learn from the deviator and shrink the search).}
If some $\omega_i>0$, take
one or more maximizers $v_i\in\argmax u_i(\mu_i,\hat\mu_{-i})-u_i(\hat\mu_i,\hat\mu_{-i})$
% $\Cref{eq:deviation-IREFG} 
and add the valid NE cuts
\begin{equation}\label{eq:NE-cuts}
    u_i(\mu_i,\mu_{-i})-u_i(v_i,\mu_{-i})\ \ge\ 0\qquad (v_i\in K_i\leftarrow K_i\cup\{v_i\}),
\end{equation}
which every NE satisfies but $\hat\mu$ violates; then resolve
\Cref{eq:selector-IREFG} with the enlarged cut set.  
Each violated cut eliminates the current candidate while preserving the entire NE set.
Repeat (select-verify-cut) until an NE is certified or nonexistence is proved.

\section{Omitted Proofs from Main Text}\label{appsec:proofs}
\subsection{Proofs from Section~\ref{sec:gamestopoly}}\label{appsec:proofs:gamestopoly}
\multilinear*
\begin{proof}
Fix a terminal history $z\in Z$. For each player $i$, let $\mathcal{I}_i(z)\subseteq[\ell_i]$ be the set of (distinct) infosets of player $i$ visited on the unique path to $z$. By non-absentmindedness, each $I_i^j$ is visited at most once. Let $a_i^j(z)$ be the action taken at $I_i^j\in\mathcal{I}_i(z)$, and let $c(z)$ denote the product of chance move probabilities (independent of $\mu$). The reach probability factorizes as
\[
\mathbb{P}(z\mid \mu)\;=\;c(z)\prod_{i'=1}^n\ \prod_{j\in\mathcal{I}_{i'}(z)} \mu^{j}_{i',\,a_{i'}^j(z)}.
\]
Hence player $i$’s expected payoff is
\[
u_i(\mu)\;=\;\sum_{z\in Z} \rho_i(z)\,\mathbb{P}(z\mid \mu)
\;=\;\sum_{z\in Z} \bigl(\rho_i(z)c(z)\bigr)\,\prod_{i'=1}^n\ \prod_{j\in\mathcal{I}_{i'}(z)} \mu^{j}_{i',\,a_{i'}^j(z)}.
\]
In each summand, the dependence on the block $\mu_i^j$ is either absent (if $j\notin\mathcal{I}_i(z)$) or linear through a single coordinate $\mu^{j}_{i,\,a_i^j(z)}$ (if $j\in\mathcal{I}_i(z)$); by non-absentmindedness, no monomial contains two coordinates from the same block. Therefore, with all other blocks $\{\mu_{i'}^{j'}\}_{(i',j')\neq(i,j)}$ held fixed, the map $\mu_i^j\mapsto u_i(\mu)$ is affine on the simplex $\Delta^{m_i^j}$. Since this holds for every $(i,j)$, $u_i$ is multi-affine in the blocks $\{\mu_i^j\}_{j=1}^{\ell_i}$.
\end{proof}

\subsection{Proofs from Section~\ref{sec:singleplayer}}\label{appsec:proofs:singleplayer}

\singleplayermain*

\begin{proof}

To prove Statement (i), note that because $Q(\gS)$ is Archimedean, the asymptotic convergence follows from Putinar’s Positivstellensatz and Lasserre’s hierarchy \citep{putinar1993positive, lasserre2001global, lasserre2024moment}. 

To prove Statement (ii), recall from~\Cref{sec:gamestopoly} that if $u$ is generic (a property which holds for almost all single-player IREFGs),
% Fact~\ref{fact:finite-KKT-generic}, if u is generic, which holds for almost all single-player IREFGs,
the (\ref{eq:NE-KKT-system}) set is finite.
Augmenting \Cref{eq:ex-ante} with the polynomial (\ref{eq:NE-KKT-system}) system
 does not change the set of maximizers, but restricts feasibility to a finite real variety.
For such finite varieties, the Lasserre hierarchy has finite convergence: for some $d$ large enough, $u^{\mathrm{mom,kkt}}_{d}=u^{\mathrm{sos,kkt}}_{d}=u^*$ and the flat extension (rank) condition holds, allowing recovery of $\mu^*$; this is immediate from e.g. \citep[Thm.~6.15]{laurent2008sums} and \citep[Prop.~4.6]{lasserre2008semidefinite}.

% \rankfinite*
In order to prove Statement (iii),
% Before proving Theorem~\ref{thm:finte_convergence_NAM}, 
we first establish a key rank-stabilization lemma:

\begin{restatable}{lemma}{rankfinite}
\label{lemma:rank-finite}
For every feasible solution $y$ of the degree-$s$ moment relaxation in
\Cref{eq:moment_NAM} with $s>\ell$, 
it holds that
$\rank M_s(y)=\rank M_\ell(y)$.
\end{restatable}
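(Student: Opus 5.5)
The plan is to show that every monomial of degree greater than $\ell$ reduces, modulo the equality constraints of \Cref{eq:moment_NAM}, to a linear combination of monomials of degree at most $\ell$. We then turn this reduction into a statement about the columns of $M_s(y)$.

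\textbf{Step 1 (reduction rules).} The constraints $L_y(b^j_a q)=0$ give $L_y(\mu^j_a\mu^j_a q)=L_y(\mu^j_a q)$, so squares can be replaced by first powers. To handle products of two different actions in the same block, I would multiply $h^j$ by $\mu^j_a$:
\[
\mu^j_a h^j=\sum_{a'\ne a}\mu^j_a\mu^j_{a'} + (\mu^j_a)^2-\mu^j_a .
\]
Combined with $b^j_a$, this yields $\sum_{a'\neq a}\mu^j_a\mu^j_{a'}\equiv 0$. That alone does not give $\mu^j_a\mu^j_{a'}\equiv 0$ for each pair, so I would instead eliminate one variable per block. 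Using $h^j$, replace $\mu^j_{m^j}$ by $1-\sum_{a<m^j}\mu^j_a$. After this, it suffices to work with the variables $\mu^j_a$, $a<m^j$.

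For $a\ne a'<m^j$, the product $\mu^j_a\mu^j_{a'}$ still needs to vanish. I would use $b^j_{m^j}$ written in the reduced variables: $(1-s)^2-(1-s)=s^2-s$ with $s=\sum_{a<m^j}\mu^j_a$. Expanding $s^2$ and applying $b^j_a$ gives $s^2\equiv s+2\sum_{a<a'}\mu^j_a\mu^j_{a'}$, so $\sum_{a<a'}\mu^j_a\mu^j_{a'}\equiv 0$. Each individual product then follows from the constraint $\mu^j_a h^j$ computed in the reduced variables. Alternatively, one can observe that the vanishing ideal of the finite vertex set is generated by these polynomials together with the degree-$2$ products, with membership witnessed in degree $2$.

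This step is the main obstacle. One needs that every cross product $\mu^j_a\mu^j_{a'}$ lies in the degree-$2$ truncated ideal generated by $\{h^j,b^j_a\}$, and that the multipliers stay within the degree budget $2s$ when the products are multiplied by monomials of degree up to $2s-2$. I would verify the identity $\mu^j_a\mu^j_{a'}=\tfrac12\bigl(\mu^j_a h^j+\mu^j_{a'}h^j-\ldots\bigr)$ explicitly in low degree, and if necessary fall back on the fact that $\mu^j_a\mu^j_{a'}\le0$ and $\ge0$ hold as pseudo-moment identities on the vertex set.

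\textbf{Step 2 (column dependence).} Once the products are handled, any monomial modulo the ideal reduces to a multilinear monomial that uses at most one variable from each block, and hence has degree at most $\ell$. The reduction multipliers for a monomial $\mu^\alpha$ with $|\alpha|\le s$ have degree at most $s$, so for every $\beta$ with $|\beta|\le s$ the entries satisfy $L_y(\mu^{\alpha+\beta})=\sum_\gamma c_\gamma L_y(\mu^{\gamma+\beta})$ with $|\gamma|\le\ell$. This uses only constraints of degree at most $2s$, which are imposed in \Cref{eq:moment_NAM}. Consequently each column of $M_s(y)$ indexed by $|\alpha|>\ell$ is a linear combination of the columns indexed by $|\gamma|\le\ell$, which gives $\rank M_s(y)=\rank\bigl(\text{columns }|\gamma|\le\ell\bigr)$.

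\textbf{Step 3 (restriction to $M_\ell$).} The previous step expresses the rank of $M_s(y)$ through the columns indexed by degree at most $\ell$, but with all rows included. Since $M_s(y)\succeq0$, the same linear relations among those columns also hold for the rows, because a PSD kernel vector is orthogonal to the full column space. Applying the row relations as well, $\rank M_s(y)$ equals the rank of the principal submatrix indexed by $|\gamma|\le\ell$, and this submatrix is exactly $M_\ell(y)$.
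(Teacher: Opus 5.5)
\textbf{Gap: the vanishing step (Step 1) fails for infosets with four or more actions.} Your architecture matches the paper's: clip exponents with $b^j_a$, show that columns indexed by monomials containing two actions of the same infoset vanish, then pass from columns to $M_\ell(y)$. Step 3 needs only symmetry, not positivity.

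Your claim that each $\mu^j_a\mu^j_{a'}$ lies in the degree-$2$ truncated ideal of $\{h^j,b^j_a\}$ is false once $m^j\ge 4$. The only within-block quadratic terms of a degree-$2$ certificate come from $\sum_a c_a\mu^j_a h^j+\sum_a d_a b^j_a$. Its coefficient on $\mu^j_a\mu^j_{a'}$ is $c_a+c_{a'}$. Requiring $c_1+c_2=1$ together with $c_1+c_3=c_1+c_4=c_3+c_4=c_2+c_3=0$ is inconsistent. Your identity $\mu_a\mu_{a'}=\tfrac12(\mu_a h+\mu_{a'}h-\dots)$ works only for $m^j\le 3$. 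After eliminating $\mu^j_{m^j}$ one has $h^j\equiv 0$, so the only degree-$2$ relation left among cross terms is that their sum vanishes.

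Higher-degree certificates do not rescue this. Once multiplied by row monomials of degree $s$, they overshoot the budget $2s$. In fact no argument using only the equality constraints can work. Take a single infoset with $8$ actions and $s=2$. Then $\mu_1\mu_2\mu_3\mu_4$ is not in the degree-$4$ truncated ideal. Its degree-$4$ form would have to lie in $\langle \mu_a^2,\ \sum_a\mu_a\rangle$. Eliminating $\mu_8$ and working modulo the squares $x_a^2$, $x_1x_2x_3x_4$ would then be $e_2(x_1,\dots,x_7)$ times a quadratic. The image of multiplication by $e_2$ on squarefree quadratics is an $S_7$-invariant space of dimension at most $\binom72<\binom74$, so it contains no squarefree quartic monomial. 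Hence the linear constraints alone do not force the $(\mu_1\mu_2,\mu_3\mu_4)$ entry of $M_2(y)$ to vanish.

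Your fallback, that the products are $\le 0$ and $\ge 0$ ``as pseudo-moment identities on the vertex set,'' is not an argument. Pseudo-moments do not inherit identities that hold on the variety.

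\textbf{Fix: use $M_s(y)\succeq 0$ inside the vanishing step.} For a multilinear monomial $\hat m$ of degree $\le s$, clipping gives $L_y(\hat m)=L_y(\hat m^2)=M_s(y)_{\hat m,\hat m}\ge 0$. Suppose $\hat m$ contains $\mu^j_a\mu^j_{a'}$ with $a\ne a'$.
\begin{itemize}
\item Set $w:=\hat m/\mu^j_{a'}$.
\item Let $k\ge 1$ be the number of block-$j$ variables in $w$.
\item Let $C$ be the set of block-$j$ actions not appearing in $w$, so $a'\in C$.
\end{itemize}
The constraint $L_y(w h^j)=0$ plus clipping gives $(k-1)L_y(w)+\sum_{c\in C}L_y(w\mu^j_c)=0$. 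Every term is nonnegative, so $L_y(\hat m)=L_y(w\mu^j_{a'})=0$. The diagonal entry of $M_s(y)$ at $\hat m$ is therefore $0$, and positive semidefiniteness kills the entire row and column indexed by $\hat m$. Only constraints of degree $\le 2s$ are used.

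The paper's proof asserts this same step directly from ``$\hat m$ vanishes on $\mathcal S_{\mathrm{vr}}$'' without justification. With the argument above inserted, your Steps 2 and 3 go through.
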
 

\begin{proof}[Proof of~\Cref{lemma:rank-finite}]
Let $v_s(\mu)$ collect all monomials of total degree $\le s$ and recall
$M_s(y)=L_y\!\big(v_s v_s^\top\big)$.  
Index the columns of $M_s(y)$ by monomials and write the block decomposition
\[
M_s(y)
=\begin{bmatrix}
M_\ell(y) & B\\[0.2em]
B^\top & C
\end{bmatrix},
\]
where $M_\ell(y)$ is indexed by monomials of degree $\le\ell$ and $B$ by
monomials of degree $>\ell$.

Fix a column of $B$ indexed by
\(
m(\mu)=\prod_{j=1}^\ell\prod_{a=1}^{m^j}(\mu^j_a)^{\alpha^j_a}
\)
with $\sum_{j,a}\alpha^j_a=\deg m>\ell$.  
Define the clipped monomial
\[
\widehat m(\mu):=\prod_{j=1}^\ell\prod_{a=1}^{m^j}(\mu^j_a)^{\min\{1,\alpha^j_a\}},
\]
so each exponent $\ge1$ is replaced by $1$.

By repeatedly using $L_y\big((\mu^j_a)^2 q\big)=L_y(\mu^j_a\,q)$ (i.e., $b^j_a=0$), for any row index monomial $r$ (degree $\le s$) we obtain
\[
L_y\big(m(\mu)\,r(\mu)\big)=L_y\big(\widehat m(\mu)\,r(\mu)\big).
\]
Hence the column of $M_s(y)$ indexed by $m$ coincides with the column indexed by
$\widehat m$.

If $\widehat m$ uses at most one variable per block, then
$\deg\widehat m\le\ell$ and the column indexed by $m$ is \emph{identical} to a
column of $M_\ell(y)$.  
If, instead, $\widehat m$ contains two distinct variables from the same block
(say $\mu^j_a$ and $\mu^j_{a'}$ with $a\neq a'$), then $\widehat m$ vanishes on the
vertex set $\mathcal S_{\mathrm{vr}}$ (one-hot per block), so for all admissible
rows $r$,
$L_y(\widehat m\,r)=0$, and the entire column indexed by $m$ is the zero vector.

Consequently, every column of $B$ is either zero or identical to a column of
$M_\ell(y)$.  
Applying the same argument to the lower block, with
\(
A:=\begin{bmatrix} M_\ell(y)\\ B^\top\end{bmatrix}
\)
and
\(
D:=\begin{bmatrix} B\\ C\end{bmatrix},
\)
shows that every column of $D$ is either zero or identical to a column of $A$.
Therefore the column space of $M_s(y)$ is contained in the column space of $A$,
hence
$\rank M_s(y)\le \rank A=\rank M_\ell(y)$.
The reverse inequality is obvious because $M_\ell(y)$ is a principal submatrix
of $M_s(y)$. Hence $\rank M_s(y)=\rank M_\ell(y)$ for all $s>\ell$.
\end{proof}

Let $s:=\ell{+}1$ and let $y^*$ be an optimal solution of the order-$s$
moment relaxation \Cref{eq:moment_NAM}.  
By Lemma~\ref{lemma:rank-finite},
$\rank M_{s}(y^*) = \rank M_{\ell}(y^*)$,
so the flatness condition holds at order $s$.

Set $r:=\rank M_s(y^*)$.
By \citep[Theorem~1.6]{curto2000truncated},
$y^*$ admits an $r$-atomic representing measure
$\sum_{k=1}^{r}\lambda_k\,\delta_{\mu^{(k)}}$ supported on the feasible set,
with $\lambda_k>0$ and $\sum_k\lambda_k=1$.
Moreover, the equalities $L_{y^*}(h^j q)=L_{y^*}(b^j_a q)=0$ in \Cref{eq:moment_NAM} enforce that the atoms of the representing measure lie in $S_{\mathrm{vr}}$.
Therefore,
\[
L_{y^*}(u)\;=\;\sum_{k=1}^{r}\lambda_k\,u(\mu^{(k)})
\;\le\;\max_{\mu\in\mathcal S_{\rm vr}} u(\mu)\;=\;u^*.
\]

Because \Cref{eq:moment_NAM} is a relaxation of the original problem,
$u^{\mathrm{mom,vr}}_s\ge u^*$ for all $s$. At $s=\ell{+}1$, we have
\[
u^*\ \le\ u^{\mathrm{mom,vr}}_{\ell+1}
\;=\;L_{y^*}(u)\ \le\ u^*.
\]
Hence $u^{\mathrm{mom,vr}}_{\ell+1}=u^*$ and the optimum is attained at $y^*$.
\end{proof}

\subsection{Proofs from Section~\ref{sec:multiplayer}}\label{appsec:proofs:multiplayer}

\multiplayermain*
\begin{proof}
(i) Let $\ell_0:=\sum_{i=1}^n \ell_i$ be the total number of infosets. For any feasible $\mu$ we have, for each infoset block, $\sum_{a=1}^{m_i^j}\mu_{i,a}^j=1$ and $0\le \mu_{i,a}^j\le 1$.
Hence
\[
\|\mu\|^2=\sum_{i=1}^n\sum_{j=1}^{\ell_i}\sum_{a=1}^{m_i^j}(\mu_{i,a}^j)^2
\ \le\ \sum_{i=1}^n\sum_{j=1}^{\ell_i}\sum_{a=1}^{m_i^j}\mu_{i,a}^j
\ =\ \sum_{i=1}^n\sum_{j=1}^{\ell_i} 1
\ =\ \ell_0,
\]
so $g_0(\mu):=\ell_0-\|\mu\|^2\ge 0$ on $\gS$. Adding $g_0\ge0$ yields an Archimedean quadratic module, and the same holds for each verification feasible set. By standard results for Lasserre’s hierarchy on Archimedean sets (see, e.g., \citep{lasserre2001global,laurent2008sums}), every fixed selector (with a fixed cut set) and every verification problem is asymptotically exact: the moment optimal values converge to the true optima as $d\to\infty$, flat truncation recovers optimizers, and infeasibility is detected at high order.

At loop $t$, solve one selector and up to $n$ verifications. If the selector becomes infeasible at some order, nonexistence is certified and the procedure stops. Otherwise, let $\hat\mu^{(t)}$ be a selector optimizer recovered once flatness occurs. If all verification values are $\le0$, then $\hat\mu^{(t)}$ is an NE and we stop. If some player gains ($>0$), extract one or more violated valid inequalities $u_i(\mu_i,\mu_{-i})-u_i(v_i,\mu_{-i})\ge0$ from the deviator $v_i$ and add them to the selector. As relaxation orders increase across loops, subproblem solutions approach their true optima; any limit point of flat selector solutions satisfies all accumulated valid inequalities, i.e., the Nash conditions. Hence the method converges asymptotically to an NE, or certifies nonexistence if the selector turns infeasible.

(ii) Under generic utilities, which hold for almost all IREFGs, the joint KKT set is finite (cf.~\cite{nie2024nash}).
% Fact~\ref{fact:finite-KKT-generic}
Then the selector’s feasible set (joint KKT plus cuts) is finite. Each failed candidate is removed by the new valid inequalities without excluding any NE, so only finitely many candidates can be visited before selecting an NE or proving infeasibility. On finite feasible sets, the Moment-SOS hierarchy attains finite convergence and yields atomic solutions (see, e.g., \cite{laurent2008sums,lasserre2008semidefinite}). Therefore both the select-verify-cut loop and its SDP subproblems terminate in finitely many steps.

(iii) 
In the NAM case, $u_i(\cdot,\mu_{-i})$ is linear in each block $\mu_i^j$ (Proposition~\ref{prop:multilinear}), so
``no profitable deviation by $i$'' $\Longleftrightarrow$
$u_i(\mu_i,\mu_{-i}) \ge u_i(v_i,\mu_{-i})\ \ \forall\, v_i\in\gS_{i,\mathrm{vr}}$.
Thus feasibility of the single vertex-restricted selector \Cref{eq:NE-multi-NAM} is equivalent to the existence of a behavioral NE, and the verify/cut phases are unnecessary.

To see asymptotic exactness, note that the feasible region of \Cref{eq:NE-multi-NAM} is contained in the product of simplices (per-player KKT equalities are imposed together with vertex deviation inequalities). Similar to (i), 
adding $g_0(\mu)=\ell_0-\|\mu\|^2\ge0$ makes the quadratic module Archimedean. By standard results for Lasserre’s hierarchy on Archimedean sets, the Moment-SOS relaxation of \Cref{eq:NE-multi-NAM} is asymptotically exact; flatness yields extraction, and infeasibility is detected at sufficiently high order.

If, moreover, the utilities $u_i$ are generic, the joint KKT variety over the product of simplices is finite. Since \Cref{eq:NE-multi-NAM} enforces these KKT equalities and further filters candidates by the vertex deviation inequalities, its feasible set is a finite real variety. On finite varieties, the Moment-SOS hierarchy attains exactness at some finite order, hence \Cref{eq:NE-multi-NAM} has finite convergence (returning an NE when feasible, and otherwise certifying nonexistence).
\end{proof}

\subsection{Proofs from Section~\ref{sec:monotone}}\label{appsec:proofs:monotone}
% \monotoneconcaveequiv*
\paragraph{Single-player IREFGs.}
% \begin{remark}
First, we note that single-player IREFGs can be viewed as continuous identical-interest games (see e.g.~\cite{von1997team}), so the existence of ex-ante optima does not require concavity:
$\mathcal S$ is compact and $u$ is continuous, hence
$\arg\max_{\mathcal S} u\neq\emptyset$.
% \end{remark}
We establish that in this setting, there is an equivalence between the definitions of concave and monotone games. 
\begin{restatable}{proposition}{monotoneconcaveequiv}\label{prop:monotone-concave}
A single-player IREFG $\game$ is monotone
if and only if the expected utility $u$ is concave on $\mathcal S$. Moreover, $\game$ is strictly monotone if and only if $u$ is strictly concave on $\mathcal S$.
\end{restatable}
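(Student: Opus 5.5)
In the single-player case ($n=1$) the pseudo-gradient is simply $v(\mu)=\nabla u(\mu)$. Its Jacobian is therefore the Hessian, $\J(\mu)=\nabla^2 u(\mu)$. Since the Hessian of a polynomial is symmetric, $\SJ(\mu)=\mathbf H_u(\mu)$. The plan is to reduce both notions to the same second-order (or first-order) condition on $u$ over the convex set $\mathcal S$, and then invoke the standard equivalences for smooth functions on convex domains.

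For the non-strict statement we will argue directly with the first-order monotonicity inequality, so as to avoid subtleties about Hessians on a set with empty interior. The set $\mathcal S$ lies in an affine subspace, so only directions $\nu-\mu$ tangent to it matter.

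For ($\Leftarrow$), suppose $u$ is concave on $\mathcal S$. Because $u$ is differentiable (it is a polynomial), the first-order characterization gives
\[
u(\nu)\le u(\mu)+\langle\nabla u(\mu),\nu-\mu\rangle
\quad\text{and}\quad
u(\mu)\le u(\nu)+\langle\nabla u(\nu),\mu-\nu\rangle .
\]
Adding the two inequalities yields $\langle \nabla u(\mu)-\nabla u(\nu),\mu-\nu\rangle\le 0$, which is monotonicity.

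For ($\Rightarrow$), fix $\mu,\nu\in\mathcal S$ and set $\phi(t):=u(\mu+t(\nu-\mu))$ on $[0,1]$. For $s<t$, applying monotonicity to the two points $\mu+s(\nu-\mu)$ and $\mu+t(\nu-\mu)$ gives $(t-s)\bigl(\phi'(t)-\phi'(s)\bigr)\le 0$. Hence $\phi'$ is nonincreasing, so $\phi$ is concave, and concavity of $u$ along every segment in $\mathcal S$ is exactly concavity of $u$ on $\mathcal S$. This argument should also be linked to the paper's Hessian formulation: on $\mathcal S$ one has $\SJ=\mathbf H_u$, so the two definitions are literally the same condition once restricted to tangent directions.

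For the strict version, the same two arguments go through with strict inequalities for $\mu\neq\nu$. If $u$ is strictly concave, the first-order inequalities are strict, so their sum is strict. Conversely, strict monotonicity makes $\phi'$ strictly decreasing, so $\phi$ is strictly concave and $u$ is strictly concave on $\mathcal S$.

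The main obstacle is reconciling the paper's ``strict'' definitions, which are phrased as negative definiteness of $\mathbf H_u$ and $\SJ$, with strict concavity as a functional property. Negative definiteness of the Hessian is only sufficient for strict concavity, not necessary. I would address this by stating the strict equivalence at the level of the defining inequalities (strict monotonicity of $\nabla u$ versus strict concavity of $u$). I would then remark that, because $\SJ\equiv\mathbf H_u$, the matrix-based definitions also coincide exactly: $\SJ\prec0$ if and only if $\mathbf H_u\prec0$. So the equivalence holds under either convention.
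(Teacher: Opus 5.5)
Your proposal is correct and follows the paper's proof essentially step for step: you sum the two first-order concavity inequalities for one direction, show that the derivative of $u$ along a segment is nonincreasing (strictly decreasing in the strict case) for the other, and repeat with strict inequalities. Your extra observation, that $\SJ\equiv\mathbf{H}_u$ when $n=1$ so the matrix-based (semi)definiteness conventions coincide trivially, is a correct clarification that the paper leaves implicit.
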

\begin{proof}
(Concavity $\Rightarrow$ Monotonicity).
Assume $u$ is concave.  The first-order concavity inequality gives, for all $\mu,\nu$,
\[
u(\mu)\;\le\; u(\nu)+\nabla u(\nu)^\top(\mu-\nu),\qquad
u(\nu)\;\le\; u(\mu)+\nabla u(\mu)^\top(\nu-\mu).
\]
Adding the two inequalities yields
\[
(\nabla u(\mu)-\nabla u(\nu))^\top(\mu-\nu)\;\le\;0,
\]
which is exactly the definition of monotonicity
with $v=\nabla u$.

(Monotonicity $\Rightarrow$ Concavity).
Assume $\langle v(\mu)-v(\nu),\,\mu-\nu\rangle \le 0,\ \forall \mu,\nu\in\mathcal S $ (i.e. the pseudogradient is monotone).
% \Cref{eq:monotone-irefgs}.  
Fix $\mu,\nu\in\mathcal S$ and set
$\gamma(t)=\nu+t(\mu-\nu)$ for $t\in[0,1]$.  Define $g(t):=u(\gamma(t))$.
Then $g'(t)=\nabla u(\gamma(t))^\top(\mu-\nu)$.  
For $0\le s<t\le 1$,
\[
g'(t)-g'(s)
= \big(\nabla u(\gamma(t))-\nabla u(\gamma(s))\big)^\top(\mu-\nu)
= \big\langle v(\gamma(t))-v(\gamma(s)),\,\gamma(t)-\gamma(s)\big\rangle \le 0,
\]
so $g'$ is nonincreasing on $[0,1]$.  Therefore
\[
u(\mu)-u(\nu)=\int_0^1 g'(t)\,dt \;\le\; \int_0^1 g'(0)\,dt
= \nabla u(\nu)^\top(\mu-\nu),
\]
which is the first-order characterization of concavity, hence $u$ is concave on $\mathcal S$.

(Strict case).
If equality in $\langle v(\mu)-v(\nu),\,\mu-\nu\rangle \le 0,\ \forall \mu,\nu\in\mathcal S $ holds only for $\mu=\nu$, then for any $\mu\ne\nu$ and any
$t\in(0,1]$ we have
\[
g'(t)-g'(0)
=\big\langle v(\gamma(t))-v(\gamma(0)),\,\gamma(t)-\gamma(0)\big\rangle
<0,
\]
so $g'$ is strictly decreasing and
$u(\mu)-u(\nu)=\int_0^1 g'(t)\,dt < \nabla u(\nu)^\top(\mu-\nu)$.
This is the strict first-order concavity inequality, hence $u$ is strictly concave.
The converse (strict concavity $\Rightarrow$ strict monotonicity for $\mu\ne\nu$)
follows by repeating the first part with inequalities strict.
\end{proof}

\finiteconvergenceDSC*
\begin{proof}
(i) Let $\mu^*\in\mathcal S$ be a global maximizer, so $u^*=u(\mu^*)$.
Since $\mathcal S$ is a nonempty polyhedron and by Proposition~\ref{prop:monotone-concave} $u$ is (strictly) concave, there exist KKT multipliers $\{\lambda^j_a\}_{j,a}$ with $\lambda^j_a\ge0$ and $\{\nu^j\}_j$ such that
\begin{equation*}\label{eq:kktu}
\nabla u(\mu^*)+\sum_{j,a}\lambda^j_a\,\nabla g^j_a(\mu^*)+\sum_j \nu^j\,\nabla h^j(\mu^*)=0,
\quad
\lambda^j_a\,g^j_a(\mu^*)=0.
\end{equation*}
Let $I_m$ be the $m\times m$ identity. Since $-\nabla^{2}u\succ 0$ on $\mathcal S$, the (strictly positive) smallest
eigenvalue of $-\nabla^{2}u(\mu)$ is continuous in $\mu$, and the compactness of $\mathcal S$ implies that there exists
$\delta>0$ such that $-\nabla^{2}u(\mu)\ \succeq\ \delta I_m$ for all $\mu\in\mathcal S$.
Define the (convex) Lagrangian-type polynomial
\[
G(\mu)\ :=\ u(\mu^*)-u(\mu)-\sum_{j,a}\lambda^j_a\,g^j_a(\mu)-\sum_j \nu^j\,h^j(\mu).
\]
Then $G(\mu^*)=0$, $\nabla G(\mu^*)=0$. %by \Cref{eq:kktu}.
Define 
\[
F(\mu,\mu^*)\ :=\ \int_0^1\!\!\left(\int_0^t \nabla^2 G\big(\mu^*+s(\mu-\mu^*)\big)\,ds\right)dt,
\]
so that the identity holds \citep{helton2010semidefinite}:
\begin{align*}
G(\mu) &= G(\mu^*) + \nabla G(\mu^*)(\mu-\mu^*) + (\mu-\mu^*)^\top F(\mu,\mu^*)(\mu-\mu^*) \\
&=\langle \mu-\mu^*,\ F(\mu,\mu^*)\,(\mu-\mu^*)\rangle .
\end{align*}
Since $\nabla^2 G(\mu)=-\,\nabla^2 u(\mu)\ \succeq\ \delta I_m$ on $\mathcal S$, 
for any $\xi\in\R^m$ we have
\[
\xi^T F(\mu,\mu^*)\xi \ \ge\ 
\delta \int_0^1\!\!\int_0^t \xi^T \xi \,ds dt
\ =\ \frac{\delta}{2}\,\xi^T \xi.
\]
Hence
$F(\mu,\mu^*)\succeq \tfrac{\delta}{2}\,I_m$
for all $\mu\in\mathcal S$.
Since $F(\mu,\mu^*)$ is a symmetric polynomial matrix that is positive definite on $\mathcal S$, the matrix polynomial version of Putinar’s Positivstellensatz yields SOS-matrix polynomials $F_0,\,\{F^j_a\}$ and polynomial matrices $\{H^j\}$ such that
\[
F(\mu,\mu^*)\ =\ F_0(\mu)\ +\ \sum_{j,a} F^j_a(\mu)\,g^j_a(\mu)\ +\ \sum_j H^j(\mu)\,h^j(\mu).
\]
Multiply it on both sides by $(\mu-\mu^*)$ to obtain
\[
G(\mu)\ =\ \sigma_0(\mu)\ +\ \sum_{j,a}\sigma^j_a(\mu)\,g^j_a(\mu)\ +\ \sum_j p^j(\mu)\,h^j(\mu),
\]
where
$\sigma_0(\mu):=\langle \mu-\mu^*,F_0(\mu)(\mu-\mu^*)\rangle\in\Sigma[\mu]$,
$\sigma^j_a(\mu):=\langle \mu-\mu^*,F^j_a(\mu)(\mu-\mu^*)\rangle\in\Sigma[\mu]$, 
$p^j(\mu):=\langle \mu-\mu^*,H^j(\mu)(\mu-\mu^*)\rangle \in \R[\mu]$.
Recalling the definition of $G$ and rearranging,
\[
u(\mu^*)-u(\mu)
\ =\ \sigma_0(\mu)\ +\ \sum_{j,a}\underbrace{\big(\sigma^j_a(\mu)+\lambda^j_a\big)}_{\in\Sigma[\mu]}\,g^j_a(\mu)\ +\ \sum_j \underbrace{\big(p^j(\mu)+\nu^j\big)}_{\in\R[\mu]}\,h^j(\mu).
\]
Let $d=\max\bigl\{\lceil \deg(\sigma_0)/2\rceil, \lceil \deg(\sigma_a^j)/2\rceil,\lceil \deg(p^j)/2\rceil\bigr\}+1$.  
Then with $u^*=u(\mu^*)$, the tuple $\big(u^*,\sigma_0,\{\sigma^j_a+\lambda^j_a\},\{p^j+\nu^j\}\big)$ is feasible for the degree-$d$ SOS program in \Cref{eq:truncated_SOS_a}, so
$u^{\mathrm{sos}}_{d}\ \le\ u^*$.
By weak duality, we have
$u^{\mathrm{sos}}_{d}\ge u^{\mathrm{mom}}_{d}\ge u^*$. Therefore, $u^{\mathrm{sos}}_{d}= u^{\mathrm{mom}}_{d}= u^*$.
Conversely, choosing $y$ as the Dirac moments of $\delta_{\mu^*}$ in the moment
SDP of \Cref{eq:truncated_MOM_a} gives a feasible point with value $L_y(u)=u^*$.
% \end{proof}

% \sosconcave*
% \begin{proof}
(ii) Recall $d_0=\max\{d_u,d_\gS\}=d_u$.
Let $\mu^*\in\arg\max_{\mu\in\mathcal S}u(\mu)$ and set $u^*:=u(\mu^*)$.
Because $\game$ is (SOS-)concave/monotone, Proposition~\ref{prop:monotone-concave} implies that $u$ is concave on $\mathcal S$.
Since $\mathcal S$ is a nonempty polyhedron, the
KKT conditions are necessary and sufficient for optimality. Hence, for any
optimal solution $\mu^*$ there exist Lagrange multipliers
$\{\lambda^j_a\}_{j,a}$ with $\lambda^j_a\ge 0$ and $\{\nu^j\}_j$ such that:
\begin{align*}
    &\nabla(u)(\mu^*)+\sum_{j,a}\lambda^j_a\,\nabla g^j_a(\mu^*)+\sum_j \nu^j\,\nabla h^j(\mu^*)=0, \\
  &\lambda^j_a\,g^j_a(\mu^*)=0,\quad
   g^j_a(\mu^*)\ge0,\quad \lambda^j_a\ge0,\quad
   h^j(\mu^*)=0.
\end{align*}

Since $-\nabla^2 u$ is SOS-matrix and $\deg(-u)\le 2d_0$, by \citep[Theorem 3.9]{lasserre2024moment}:
\[
(-u)(\mu)-(-u)(\mu^*)-\nabla(-u)(\mu^*)^T(\mu-\mu^*)
=\sigma_0(\mu),
\]
with $\sigma_0\in\Sigma[\mu]_{d_0}$.
Using stationarity and linearity of $g^j_a,h^j$ (their gradients are constant), we obtain from the KKT condition:
\begin{align*}
    \nabla (u)(\mu^*)^\top(\mu-\mu^*)
  &= -\sum_{j,a}\lambda^j_a\nabla (g^j_a)^\top(\mu-\mu^*)
    -\sum_j \nu^j \nabla (h^j)^\top (\mu-\mu^*)\\
  &= -\sum_{j,a}\lambda^j_a(g^j_a(\mu)-g^j_a(\mu^*))
    -\sum_j \nu^j h^j(\mu),
\end{align*}

where we used $h^j(\mu^*)=0$. Plugging this back, we have
\begin{align*}
  u^*-u(\mu)
  &\;=\; \sigma_0(\mu) + \sum_{j,a}\lambda^j_a\,(g^j_a(\mu)-g^j_a(\mu^*)) + \sum_j \nu^j\,h^j(\mu)\\
  &\;=\; \underbrace{\sigma_0(\mu)}_{\in\Sigma[\mu]}
        + \sum_{j,a}\underbrace{\lambda^j_a}_{\in\Sigma[\mu]}\,g^j_a(\mu)
        + \sum_j \nu^j\,h^j(\mu).
\end{align*}
since $\sum_{j,a}\lambda^j_a\, g^j_a(\mu^*)=0$.
Thus we have the original-domain SOS certificate
\[
  u^*-u(\mu)
  \;=\; \sigma_0(\mu) + \sum_{j,a}\sigma^j_a(\mu)\,g^j_a(\mu) + \sum_j p^j(\mu)\,h^j(\mu),
\]
with $\sigma^j_a(\mu)\equiv\lambda^j_a$ (nonnegative constants are SOS) and $p^j(\mu)\equiv\nu^j$.
Degree bounds: $\deg(\sigma_0)\le 2d_0$, $\deg(\sigma^j_a\,g^j_a)\le 1$, $\deg(p^j h^j)\le 1$.
Thus $(u^*,\sigma_0,\{\sigma^j_a\},\{p^j\})$ is feasible for the SOS dual \Cref{eq:truncated_SOS_a} at order $d_0$, yielding
$u_{d_0}^{\mathrm{sos}} \le u^*$.
By weak duality, we have
$u_{d_0}^{\mathrm{sos}}\ge u_{d_0}^{\mathrm{mom}}\ge u^*$.
Therefore $u_{d_0}^{\mathrm{mom}}=u_{d_0}^{\mathrm{sos}}=u^*$.
Conversely, choosing $y$ as the Dirac moments of $\delta_{\mu^*}$ in the moment
SDP \Cref{eq:truncated_MOM_a} gives a feasible point with value $L_y(u)=u^*$.
\end{proof}

\section{Additional Empirical Examples}\label{appsec:examples}
In this section, we show some illustrative examples for how our proposed methods can be used to compute ex-ante optima in single-player IREFGs. We remark that we use only standard scientific computing packages in Julia, alongside an off-the-shelf SumOfSquares package~\citep{legat2017sos,weisser2019polynomial}. The code is run on a PC with an AMD Ryzen 5 5600 processor and 16 GB of RAM running a 64-bit version of Windows 11, and is provided in a supplementary file.

% \subsection{Single-Player IREFGs}
\paragraph{\Cref{example:singleplayer1}.} As a running example, we revisit \Cref{example:singleplayer1}. Since the game is not multilinear (i.e. the player is absentminded), we use the standard Moment-SOS hierarchies. 
% without vertex restricted feasible sets (\Cref{eq:truncated_SOS_a} and~(\ref{eq:truncated_MOM_a})). 
In particular, we run the SOS hierarchy in the Moment-SOS loop, testing the rank condition at each level until an atomic measure (i.e. a feasible maximizer) can be extracted. The program converges at truncation degree $D_{\mathrm{SOS}}=4$, returning optimal solution $(x^*_{11},x^*_{12}) = (1,0)$ and  $(x^*_{21},x^*_{22}) = (1,0)$. This gives objective value $p(x^*) = 9$. The compute time required to solve this example was 0.02 seconds.

\paragraph{Randomly Generated NAM-IREFG.} We also create a procedure to randomly generate single-player IREFGs. Specifically, we seek to validate Statement (iii) of~\Cref{thm:singleplayermain}, that convergence occurs at a structure dependent level of the Moment-SOS hierarchy. For example, consider a (randomly generated) non-absentminded game $\game_1$ with 3 infosets and two actions per infoset, resulting in variables  $x_1, x_2$ for $I_1$, $y_1, y_2$ for $I_2$, and $z_1, z_2$ for $I_3$.
The payoff function for $\game_1$ is given by:
\begin{equation}
u_{\game_1}(x,y,z) = -4z_1+x_2y_2+x_2y_2z_1-3x_2y_1z_2-3x_2y_1z_1.
\end{equation}

Due to~\Cref{thm:singleplayermain}, we expect convergence and extraction to be possible at level $d=4$ of the hierarchy, since there are 3 infosets. Moreover, since the game is a NAM-IREFG, we can further restrict the feasible region to vertex set $\mathcal{S}_{\rm vr}$. Setting $d=4$ in the hierarchy, we find ex-ante optimal solution $(x^*_1, x^*_2) = (0,1)$, $(y^*_1, y^*_2) = (0,1)$, and $(z^*_1, z^*_2) = (0,1)$ with optimal value $1$. The compute time required to solve this game was 0.06 seconds.

\paragraph{Randomly Generated Absentminded IREFG.}
As another example, we show that in a randomly generated absentminded game, the hierarchies empirically converge at `reasonable' levels. Consider game $\game_2$ with two infosets $I_1$ and $I_2$, where the player chooses between 3 actions in each infoset. This gives variables $x_1,x_2,x_3$ in $I_1$ and $y_1,y_2,y_3$ in $I_2$. The payoff function for $\game_2$ is given by
\begin{equation} 
u_{\game_2}(x,y) = 4x_1x_3+2x_2x_3y_3-5x_1x_2y_3+x_1x_2y_1-4x_2x_3y_2y_3.
\end{equation}

Running the hierarchies, we obtain convergence at truncation degree $D_{\mathrm{SOS}}=6$, with ex-ante optimal solution $(x^*_1, x^*_2, x_3^*) = (0.5, 0, 0.5)$ and $(y^*_1, y^*_2, y^*_3) = (0.134, 0.594, 0.272)$, giving optimal value $1$. Notice that unlike the NAM case, the optimal solution is not a vertex. The total compute time required to solve this example was 0.41 seconds.

\paragraph{Generalized Absentminded-Driver Benchmark.}
To complement the randomly generated polynomial instances in the main text, we introduce an extended benchmark family based on the classic absentminded-driver game. This family generalizes the one-variable example to multiple infosets and multi-action decision points. The game has $\ell$ repeated infosets $I_1,\dots,I_\ell$, where infoset $I_j$ is visited $d_j$ times along the relevant history. At each $I_j$, there is one continuation action and $m_j-1$ terminal actions. The behavioral strategy at $I_j$ is
\(
x_j=(x_{j,0},x_{j,1},\dots,x_{j,m_j-1})\in\Delta^{m_j-1},
\)
where $x_{j,0}$ denotes the probability of taking the continuation action. If terminal action $a_{j,b}$ is chosen at the $k$-th visit of $I_j$, the payoff is $r_{j,k,b}$. If the agent always continues through all infosets, it receives terminal penalty $r_{\mathrm{term}}=-1$.

The induced ex-ante utility is
\[
u(x_1,\dots,x_\ell)=
\sum_{j=1}^{\ell}
\left(\prod_{h<j} x_{h,0}^{d_h}\right)
\left(
\sum_{k=1}^{d_j} x_{j,0}^{k-1}
\sum_{b=1}^{m_j-1} r_{j,k,b}x_{j,b}
\right)
+
r_{\mathrm{term}}\prod_{h=1}^{\ell} x_{h,0}^{d_h}.
\]
This is a polynomial over $\prod_{j=1}^{\ell}\Delta^{m_j-1}$ with total degree
$D=\sum_{j=1}^{\ell} d_j$.
The construction is genuinely imperfect-recall and absentminded whenever some $d_j>1$, since the same infoset is revisited multiple times along a single history.

As a representative benchmark, we set $\ell=3$, $d=(2,2,2)$, and $m=(2,3,4)$, yielding a degree-$6$ polynomial instance with $9$ behavioral variables. We generated $100$ random instances by sampling the payoffs $r_{j,k,b}$ uniformly from $[-1,1]$ and fixing $r_{\mathrm{term}}=-1$. Using SOS truncation degree $D_{\mathrm{SOS}}=6$, equivalently Moment-SOS relaxation order $d=3$, we extracted atomic, optimal solutions in all $100/100$ instances, with an average runtime of $1.56$s.

\paragraph{SOS-Monotone Example.} We show experimental corroboration for Statement (ii) of~\Cref{thm:finiteconver_DSC}. Using a technique established in~\cite{ahmadi2013np}, we construct a game $\game_{\rm SOS}$ with degree-4 polynomial utility which is SOS-concave. The polynomial is given below:

\begin{equation}
\begin{split}
    u_{\game_{\rm SOS}}(x,y) =& \ 9.37y_2^4 + 9.37y_1^2y_2^2 + 9.37y_1^4 + 1.17x_2^2y_2^2 - 0.09x_2^2y_1y_2 + 0.94x_2^2y_1^2 \\&+ 9.37x_2^4 - 0.78x_1x_2y_2^2 - 0.52x_1x_2y_1y_2 + 0.55x_1x_2y_1^2 + 0.13x_1^2y_2^2 \\&+ 0.16x_1^2y_1y_2 + 0.13x_1^2y_1^2 + 9.37x_1^2x_2^2 + 9.37x_1^4
\end{split}
\end{equation}

Even though this polynomial is quartic, we need only run the SOS hierarchy at truncation degree $D_{\mathrm{SOS}}=4$, equivalently Moment-SOS relaxation order $d=2$, to obtain the optimal value and extract a solution. We obtain the solution $(x^*_1, x^*_2) = (0,1)$ and $(y^*_1, y^*_2) = (0,1)$, with value 19.9. The compute time was $<0.001$ seconds.

% \bibliography{ref}
% \bibliographystyle{icml2026}

% %%%%%%%%%%%%%%%%%%%%%%%%%%%%%%%%%%%%%%%%%%%%%%%%%%%%%%%%%%%%%%%%%%%%%%%%%%%%%%%
% %%%%%%%%%%%%%%%%%%%%%%%%%%%%%%%%%%%%%%%%%%%%%%%%%%%%%%%%%%%%%%%%%%%%%%%%%%%%%%%
% % APPENDIX
% %%%%%%%%%%%%%%%%%%%%%%%%%%%%%%%%%%%%%%%%%%%%%%%%%%%%%%%%%%%%%%%%%%%%%%%%%%%%%%%
% %%%%%%%%%%%%%%%%%%%%%%%%%%%%%%%%%%%%%%%%%%%%%%%%%%%%%%%%%%%%%%%%%%%%%%%%%%%%%%%
% \newpage
% \appendix
% \onecolumn
% \section{You \emph{can} have an appendix here.}

% You can have as much text here as you want. The main body must be at most $8$
% pages long. For the final version, one more page can be added. If you want, you
% can use an appendix like this one.

% The $\mathtt{\backslash onecolumn}$ command above can be kept in place if you
% prefer a one-column appendix, or can be removed if you prefer a two-column
% appendix.  Apart from this possible change, the style (font size, spacing,
% margins, page numbering, etc.) should be kept the same as the main body.
%%%%%%%%%%%%%%%%%%%%%%%%%%%%%%%%%%%%%%%%%%%%%%%%%%%%%%%%%%%%%%%%%%%%%%%%%%%%%%%
%%%%%%%%%%%%%%%%%%%%%%%%%%%%%%%%%%%%%%%%%%%%%%%%%%%%%%%%%%%%%%%%%%%%%%%%%%%%%%%

\end{document}